\documentclass{tlp}
\def\isdraft{0}

\usepackage{mathptmx}
\usepackage{mathtools}
\usepackage{amsmath,amsfonts,amssymb}
\usepackage[dvipsnames]{xcolor}
\usepackage{bbm}
\usepackage{graphicx}
\usepackage{prettyref} 
\usepackage[utf8]{inputenc}
\usepackage{enumitem}
\usepackage[hidelinks]{hyperref}
\usepackage{todonotes}
\usepackage{stmaryrd}
\graphicspath{{Figures/}}
\usepackage{footnote}
\usepackage[normalem]{ulem}

\newtheorem{theorem}{Theorem}

\newtheorem{corollary}[theorem]{Corollary}
\newtheorem{fact}[theorem]{Fact}
\newtheorem{lemma}[theorem]{Lemma}
\newtheorem{proposition}[theorem]{Proposition}

\newtheorem{definition}[theorem]{Definition}

\newtheorem{example}[theorem]{Example}

\newtheorem{observation}[theorem]{Observation}

\newtheorem{remark}[theorem]{Remark}

\newrefformat{sec}{§\ref{#1}}
\newrefformat{a}{Answer \ref{#1}}
\newrefformat{ana}{Analogy \ref{#1}}
\newrefformat{alg}{Algorithm \ref{#1}}
\newrefformat{ax}{Appendix \ref{#1}}
\newrefformat{cl}{Claim \ref{#1}}
\newrefformat{con}{Conclusion \ref{#1}}
\newrefformat{conv}{Convention \ref{#1}}
\newrefformat{cj}{Conjecture \ref{#1}}
\newrefformat{c}{Corollary \ref{#1}}
\newrefformat{q}{(\ref{#1})}
\newrefformat{e}{Example \ref{#1}}
\newrefformat{exe}{Exercise \ref{#1}}
\newrefformat{d}{Definition \ref{#1}}
\newrefformat{Done}{Done \ref{#1}}
\newrefformat{f}{Fact \ref{#1}}
\newrefformat{fig}{Figure \ref{#1}}
\newrefformat{h}{Hypothesis \ref{#1}}
\newrefformat{idea}{Idea \ref{#1}}
\newrefformat{i}{Item \ref{#1}}
\newrefformat{l}{Lemma \ref{#1}}
\newrefformat{note}{Note \ref{#1}}
\newrefformat{n}{Notation \ref{#1}}
\newrefformat{o}{Observation \ref{#1}}
\newrefformat{pr}{Problem \ref{#1}}
\newrefformat{p}{Proposition \ref{#1}}
\newrefformat{pc}{Pseudocode \ref{#1}}
\newrefformat{Q}{Q. \ref{#1}}
\newrefformat{r}{Remark \ref{#1}}
\newrefformat{ta}{Table \ref{#1}}
\newrefformat{t}{Theorem \ref{#1}}
\newrefformat{w}{Warning \ref{#1}}

\newcommand{\boxblacktriangle}{\mathrel{\ooalign{$\square$\cr\kern0.07ex\hbox{\scalebox{0.9}{$\blacktriangle$}}}}}
\newcommand{\boxtriangle}{\mathrel{\ooalign{$\square$\cr\kern0.07ex\hbox{\scalebox{0.9}{$\triangle$}}}}}

\AtBeginDocument{\renewcommand{\models}{\mathrel|\joinrel=}}

\newcommand{\Triangle}{\mathrel{\triangle}}

\if\isdraft1
\fi

\begin{document}

\lefttitle{%
	Set-like operations on propositional logic programs
}
\righttitle{%
	C. Anti\'c
}

\jnlPage{1}{00}
\jnlDoiYr{2026}
\jyear{2026}
\doival{}

\title{
	Set-like operations on propositional logic programs\footnote{To appear in Theory and Practice of Logic Programming}
}

\begin{authgrp}
\author{%
	CHRISTIAN ANTI\'C
}
\affiliation{%
	Vienna University of Technology, Vienna, Austria\\
	\email{christian.antic@icloud.com}
}
\end{authgrp}

\maketitle

\begin{abstract}
	Composition and decomposition of logic programs have been studied extensively in the context of modularity, and decomposing a program along the dependency structure of its atoms --- by strata or strongly connected components in datalog, and by splitting sets in the non-monotonic setting of answer set programming --- is standard practice. All of these approaches operate on the level of rule sets: programs are cut along dependencies between atoms, while the internal structure of the individual rules remains untouched. In this paper, we complement this picture by a finer-grained algebra. We introduce set-like operations on (propositional Horn) logic programs --- body-union, body-intersection, body-complement, body-subtraction, body-symmetric-difference, and body-power-set --- which manipulate rule bodies in analogy to the corresponding set operations, and we study their algebraic laws and their interaction with sequential composition and the least model semantics. Our main technical result is a decomposition theorem in this algebra: every minimalist program --- containing at most one rule for each rule head --- is the body-union of Krom programs consisting only of rules with at most one body atom, in such a way that its least model is the intersection of the least models of these components; for arbitrary programs we obtain corresponding approximations. Since Krom programs are algebraically better behaved than arbitrary programs --- composition is associative and distributes from the left --- this may support decomposition-based reasoning and provides a basis for compositional program construction.
\end{abstract}


\begin{keywords}
	Algebraic Logic Programming, Logic Program Decomposition, Propositional Horn Theories, Program Synthesis, Program Transformation
\end{keywords}

\section{Introduction}

Rule-based reasoning is an important component of artificial intelligence. The definite propositional logic programs considered in this paper can be viewed as Horn theories equipped with least-model semantics \citep{Apt90,Apt97,Hodges94,Lloyd87,Sterling94}.

Composition and decomposition of logic programs are not new concerns. Standard approaches decompose programs along the dependency structure of their atoms, for example by stratification or strongly connected components in datalog \citep{Ceri89,Ceri90}, and by splitting sets in the non-monotonic setting of answer set programming \citep{Lifschitz94}. These approaches operate primarily at the level of rules or program modules, while the internal structure of individual rule bodies remains unchanged. The present work complements this perspective by studying finer-grained operations that manipulate and decompose rule bodies. To our knowledge, existing algebraic approaches do not directly provide the body-level decompositions and corresponding semantic relationships developed here.

The propositional Horn setting studied in this paper is fundamental to logic programming. Grounding transforms first-order programs into propositional ones, possibly over an infinite alphabet, while function-free programs over finite domains lead to finite propositional programs of the kind considered here. Propositional rule representations obtained by grounding are central to answer set programming \citep{Gelfond91,Marek99} (see also \cite{Baral03,Brewka11,Lifschitz19}) and to datalog \citep{Ceri89,Ceri90}. Closely related structures are studied as propositional Horn theories in knowledge representation and reasoning.

The idea of introducing algebraic structure into programming dates back at least to the work of John Backus \citep{Backus78}, who advocated algebraic methods for constructing and reasoning about programs. In logic programming, algebraic and compositional perspectives have been developed extensively in connection with modularity and program composition \citep{OKeefe85,Mancarella88,Hirani90,Bossi96,Brogi90,Brogi92,Brogi92a,Brogi95,Brogi94,Bugliesi94,Dong88,Dong90,Gaifman89,Hill94,Ioannidis91}. The present approach is complementary to this work in that it manipulates the bodies of individual rules directly. The set-like operations introduced below enrich the algebra based on sequential composition \citep{Antic21-1}, which is recalled in \prettyref{sec:Composition}. This paper also forms part of a broader line of work on algebraic methods for logic program synthesis and transformation \citep{Antic14,Antic21-2,Antic21-1,Antic23-23}.

We introduce a family of set-like operations on propositional Horn logic programs and study their algebraic laws, their interaction with sequential composition, and their relationship to least-model semantics. A rule has the form $a\leftarrow B$, where $a$ is an atom from a finite alphabet $A$ and $B\subseteq A$ is its body. The body may be empty, in which case the rule is a fact. For rules with the same head, the operations introduced in this paper are modeled on the familiar set operations $\cup$, $\cap$, ${}^c$, $-$, $\Triangle$, and $\mathfrak P$:
\begin{align*}
	(a\leftarrow B)\sqcup(a\leftarrow C) &:= a\leftarrow(B\cup C),\\
	(a\leftarrow B)\sqcap(a\leftarrow C) &:= a\leftarrow(B\cap C),\\
	(a\leftarrow B)^\copyright &:= a\leftarrow(A - B),\\
	(a\leftarrow B)\boxminus(a\leftarrow C) &:= a\leftarrow(B - C),\\
	(a\leftarrow B)\boxtriangle(a\leftarrow C) &:= a\leftarrow(B\Triangle C),\\
	(a\leftarrow B)^{\mathfrak p} &:= \{a\leftarrow C\mid C\subseteq B\}.
\end{align*}
These operations are extended to programs by combining rules with the same head. We refer to them as body-union, body-intersection, body-complement, body-subtraction, body-symmetric-difference, and body-power-set, respectively.

Although these operations are modeled on classical set operations, their behavior at the program level is more subtle. In particular, if a program contains several rules with the same head, body-union and body-intersection need not be idempotent, and familiar distributivity laws may hold only as inclusions, modulo subsumption equivalence, or under additional minimality assumptions. We determine a number of such laws and study how the operations interact with reducts, sequential composition, supported models, and least models. We also show that body-union induces a well-defined operation on subsumption-equivalence classes and yields, together with sequential composition, a seminearring structure on the resulting quotient.

The main technical contribution concerns decomposition into Krom programs, that is, programs whose rules contain at most one body atom. Let $M$ be a minimalist program, meaning that it contains at most one rule for each rule head, and let $m=\max\{1,w(M)\}$, where $w(M)$ is the maximal number of body atoms occurring in a rule of $M$. The Minimalist Decomposition Theorem shows that there are Krom programs $K_1,\ldots,K_m\subseteq M^\lor$ such that
\begin{align*}
	M &= K_1\sqcup\ldots\sqcup K_m
\end{align*}
and
\begin{align*}
	M^\omega &= K_1^\omega\sqcup\ldots\sqcup K_m^\omega = K_1^\omega\cap\ldots\cap K_m^\omega,
\end{align*}
where ${}^\omega$ denotes the least-model operator and $M^\lor$ is the Krom program obtained by replacing each proper rule by its one-atom body components. Thus, both the program and its least model admit compatible representations in terms of Krom components. This is a structural existence and representation result rather than, by itself, an independent algorithm for computing $M^\omega$.

For an arbitrary program $P$, rules with the same head may interact under body-union and produce additional rules. Exact syntactic and semantic reconstruction can therefore no longer be guaranteed. Nevertheless, the Decomposition Theorem provides Krom programs $K_1,\ldots,K_m\subseteq P^\lor$, where $m=\max\{1,w(P)\}$, satisfying
\begin{align*}
	P &\subseteq K_1\sqcup\ldots\sqcup K_m
\end{align*}
and
\begin{align*}
	P^\omega &\subseteq K_1^\omega\sqcup\ldots\sqcup K_m^\omega = K_1^\omega\cap\ldots\cap K_m^\omega.
\end{align*}
The distinction between exact decomposition for minimalist programs and approximation for arbitrary programs reflects the additional combinations created by rules sharing the same head.

Krom programs provide particularly simple components because sequential composition is associative when the left factor is Krom and distributes from the left over program union, whereas these properties fail for arbitrary programs. The decomposition results therefore express complex programs in terms of components with better algebraic behavior and may support further work on compositional reasoning, program transformation, and program construction.

In the final section, we sketch possible applications of the proposed operations to automatic logic programming \citep{Deville94}, including inductive logic programming, learning from interpretation transitions, and analogical logic programming. Further work may investigate extensions to first-order and higher-order programs and to non-monotonic programs under stable-model and answer-set semantics.

\section{Preliminaries}

In this section, we recall some basic algebraic structures \prettyref{sec:Alg} occurring in the paper, and the syntax and semantics of (propositional Horn) logic programs \prettyref{sec:LPs}.

\subsection{Algebraic structures}\label{sec:Alg}

A \emph{permutation} of a set $A$ is any mapping $A\to A$ which is one-to-one and onto. 

We denote the \emph{composition} of two functions $f:A\to A$ and $g:A\to A$ by $f\circ g$ with the usual definition $(f\circ g)(x)=f(g(x))$.

We call a binary relation $\leq$ \emph{reflexive} if $x\leq x$, \emph{anti-symmetric} if $x\leq y$ and $y\leq x$ implies $x=y$, and \emph{transitive} if $x\leq y$ and $y\leq z$ implies $x\leq z$, for all $x,y,z$. A \emph{partially ordered set} (or \emph{poset}) is a set $L$ together with a reflexive, transitive, and anti-symmetric binary relation $\leq$ on $L$. A \emph{prefixed point} of an operator $f$ on a poset $L$ is any element $x\in L$ such that $f(x)\leq x$; moreover, we call any $x\in L$ a \emph{fixed point} of $f$ if $f(x)=x$.


We now turn to algebras consisting of a set together with a binary operation \citep{Howie03}. A \emph{magma} is a set $M$ together with a binary operation $\cdot$ on $M$. We call $(M,\cdot,1)$ a \emph{unital magma} if it contains a \emph{unit element} 1 such that $1a=a1=a$ holds for all $a\in M$. A \emph{semigroup} is a magma $(S,\cdot)$ in which $\cdot$ is associative. A \emph{monoid} is a semigroup containing a unit element. 

A \emph{left} (resp., \emph{right}) \emph{zero} is an element $0$ such that $0a=0$ (resp., $a0=0$) holds for all $a$. 

A relation $\sim$ on the set $S$ is called \emph{left} (resp., \emph{right}) \emph{compatible} with the operation on $S$ if $x\sim y$ implies $zx\sim zy$ (resp., $xz\sim yz$), for all $x,y,z\in S$. It is called \emph{compatible} if $x\sim y$ implies $wxz\sim wyz$, for all $w,x,y,z\in S$. A left (resp., right) compatible relation is called a \emph{left} (resp., \emph{right}) \emph{congruence}. A compatible relation is called a \emph{congruence}. A relation on a semigroup $S$ is a congruence if it is both a left and a right congruence. Given an equivalence relation $\sim$, we denote the equivalence class of an element $a$ with respect to $\sim$ by $[a]_\sim$.

A \emph{bimonoid} \citep{Droste10} is a structure $(M,+,\cdot,0,1)$ such that $(M,+,0)$ and $(M,\cdot,1)$ are both monoids. A bimonoid is a \emph{strong bimonoid} if the operation $+$ is commutative and $0$ acts as multiplicative zero, that is, $a\cdot 0 = 0\cdot a = 0$ for every $a\in M$.

A \emph{seminearring} \citep{Golan99,vanHoorn67} is a set $S$ together with two binary operations $+$ and $\cdot$ on $S$, and a constant $0\in S$, such that $(S,+,0)$ is a monoid and $(S,\cdot)$ is a semigroup satisfying the following laws, for all $a,b,c\in S$:
\begin{align} 
	(a+b)\cdot c &= (a\cdot c) + (b\cdot c)\quad\text{(right-distributivity)}\\
	\label{eq:0a=0} 0\cdot a &= 0\quad\text{(left zero).}
\end{align} We say that $S$ is \emph{idempotent} if $a+a=a$ holds for all $a\in S$. Typical examples of seminearrings are given by the set of all mappings on a monoid together with composition of mappings, point-wise addition of mappings, and the zero function. 

A \emph{semiring} is a seminearring $(S,+,\cdot,0)$ such that $+$ is commutative and additionally to the laws of a seminearring the following laws are satisfied, for all $a,b,c\in S$:
\begin{align*} 
	a\cdot (b+c) &= (a\cdot b) + (a\cdot c)\quad\text{(left-distributivity)}\\
	a\cdot 0 &= 0\quad\text{(right zero)}.
\end{align*} For example, $(\mathbb N,+,\cdot,0)$ and $(2^A,\cup,\cap,\emptyset)$ are semirings. A characteristic feature of semirings (as opposed to rings) is that addition needs not be invertible.

\subsection{Logic programs}\label{sec:LPs}

We shall now recall the syntax \prettyref{sec:Syntax} and least model semantics \prettyref{sec:Semantics} of (propositional Horn) logic programs \citep{Apt90} and their sequential composition \prettyref{sec:Composition}.

\subsubsection{Syntax}\label{sec:Syntax}

In the rest of the paper, $A$ denotes a finite alphabet of propositional atoms. 

A (\emph{propositional Horn logic}) \emph{program} over $A$ is a finite set of \emph{rules} of the form
\begin{align}
    \label{eq:r} a\leftarrow b_1,\ldots,b_k,\quad k\geq 0,
\end{align} where $a,b_1,\ldots,b_k\in A$ are propositional atoms, and we denote the set of all programs over $A$ by $\mathbb P_A$. With a slight abuse of notation, we will not distinguish between $a \leftarrow b_1,\ldots,b_k$ and $a \leftarrow \{b_1,\ldots,b_k\}$, which will allow us to write the rule as $a \leftarrow B$ with $B = \{b_1,\ldots,b_k\}$, a notation heavily used in the rest of the paper.

The \emph{width} of a rule $r$ is the cardinality of its body,
\begin{align*}
w(r) := |b(r)|.
\end{align*} The width of a program $P$ is defined by
\begin{align*}
w(P) := \max\bigl(\{w(r)\mid r\in P\}\cup\{0\}\bigr).
\end{align*}

It will be convenient to define, for a rule $r$ of the form \prettyref{eq:r} the \emph{head} and \emph{body operators} $h$ and $b$ as
\begin{align*} 
     h(r) := \{a\} \quad\text{and}\quad  b(r) := \{b_1,\ldots,b_k\}
\end{align*} extended to programs rule-wise.

For any programs $P$ and $R$, we have
\begin{align} 
	\label{eq:h(P_cup_R)} h(P\cup R) &= h(P)\cup h(R),\\
	\label{eq:b(P_cup_R)} b(P\cup R) &= b(P)\cup b(R),
\end{align} which means that the head and body operators are homomorphisms with respect to set union.

A \emph{fact} is a rule with empty body and a \emph{proper rule} is a rule which is not a fact. We denote the facts and proper rules in $P$ by $ f(P)$ and $ p(P)$, respectively. We have
\begin{align*} 
	f(P\cup R) &= f(P)\cup f(R),\\
	p(P\cup R) &= p(P)\cup p(R),
\end{align*} which means that the facts and proper rules operators are homomorphisms with respect to set union as well.

We call a rule $r$ of the form \prettyref{eq:r} \emph{Krom}\footnote{Krom rules were first introduced and studied in \cite{Krom67} and are sometimes called ``binary'' in the literature.} if it has at most one body atom, and we call a program \emph{Krom} if it contains only Krom rules. We denote the set of all Krom programs over $A$ by $\mathbb K_A$.

A \emph{minimalist} program contains at most one rule for each rule head and we denote the set of all minimalist programs over $A$ by $\mathbb M_A$. For example, the program
\begin{align*} 
	&a \leftarrow b\\
	&a \leftarrow c
\end{align*} implicitly contains a disjunction as it is logically equivalent to the single rule
\begin{align*} 
	a \leftarrow (b\lor c).
\end{align*} Hence, a program is minimalist iff it does not implicitly contain disjunctions of that form.


The \emph{unit program} over $A$ is the minimalist Krom program
\begin{align} 
    \label{eq:1} 1_A := \{a\leftarrow a\mid a\in A\}.
\end{align}

\subsubsection{Semantics}\label{sec:Semantics}

An \emph{interpretation} over $A$ is any set of atoms from $A$ and we denote the set of all such interpretations by $\mathbb I_A$ or simply by $\mathbb I$ in case $A$ is understood. 

We define the \emph{entailment relation}, for every interpretation $I$, inductively as follows: (i) for an atom $a$, $I\models a$ if $a\in I$; (ii) for a set of atoms $B$, $I\models B$ if $B\subseteq I$; (iii) for a rule $r$ of the form \prettyref{eq:r}, $I\models r$ if $I\models  b(r)$ implies $I\models  h(r)$; and, finally, (iv) for a propositional logic program $P$, $I\models P$ if $I\models r$ holds for each rule $r\in P$. In case $I\models P$, we call $I$ a \emph{model} of $P$ and we denote the set of all models of $P$ by $Mod(P)$. The set of all models of $P$ has a least element with respect to set inclusion called the \emph{least model} of $P$ and denoted by $LM(P)$. 

Define the \emph{van Emden-Kowalski operator} \citep{vanEmden76} $T_P : \mathbb I_A\to \mathbb I_A$ of $P$, for every interpretation $I\in \mathbb I_A$, by
\begin{align*} 
    T_P(I) := \{a\in A \mid a \leftarrow B\in P,\, B\subseteq I\}.
\end{align*}

The van Emden-Kowalski operator is at the core of logic programming since the models of $P$ coincide with the prefixed points of $T_P$ with respect to the inclusion ordering thus providing an operational characterization of models:

\begin{theorem}[\cite{vanEmden76}]\label{t:Mod(P)} An interpretation $I$ is a model of a program $P$ iff $T_P(I)\subseteq I$, that is,
\begin{align*} 
	Mod(P) = \{I\in \mathbb I_A\mid T_P(I)\subseteq I\}.
\end{align*}
\end{theorem}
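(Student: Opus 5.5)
The proof is a direct unfolding of definitions, carried out as two inclusions. Fix an interpretation $I\in\mathbb I_A$. By clause (iv) of the entailment relation, $I\models P$ holds iff $I\models r$ for every $r\in P$. By clause (iii), this means that for every rule $a\leftarrow B\in P$, $I\models B$ implies $I\models a$. By clauses (ii) and (i), this in turn says that $B\subseteq I$ implies $a\in I$. So the first step is to rewrite ``$I$ is a model of $P$'' in this purely set-theoretic form:
\begin{align*}
	\text{for all } a\leftarrow B\in P:\quad B\subseteq I \ \Rightarrow\ a\in I.
\end{align*}

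For the direction ``$I\models P$ implies $T_P(I)\subseteq I$'', take an arbitrary $a\in T_P(I)$. By definition of the van Emden-Kowalski operator there is a rule $a\leftarrow B\in P$ with $B\subseteq I$, and the displayed condition then gives $a\in I$.

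For the converse, assume $T_P(I)\subseteq I$ and let $a\leftarrow B\in P$ with $B\subseteq I$. Then $a\in T_P(I)$ by definition of $T_P$, hence $a\in I$. Thus every rule is satisfied, and $I\models P$.

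Combining the two directions gives $Mod(P)=\{I\in\mathbb I_A\mid T_P(I)\subseteq I\}$, that is, the models of $P$ are exactly the prefixed points of $T_P$ in the poset $(\mathbb I_A,\subseteq)$. The only points needing care are bookkeeping ones. Facts are covered by the case $B=\emptyset$, where $B\subseteq I$ is always true. Note also that $T_P(I)$ collects heads over all rules, so several rules with the same head cause no difficulty. There is no genuine obstacle here.
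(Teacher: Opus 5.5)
Your proof is correct: rewriting $I\models P$ via clauses (i)--(iv) as ``$B\subseteq I$ implies $a\in I$ for every $a\leftarrow B\in P$'' and comparing this with the definition of $T_P$ is exactly the standard argument, with facts covered by the case $B=\emptyset$. The paper gives no proof of its own and simply attributes the result to van Emden and Kowalski, whose argument is this same unfolding of definitions.
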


We call an interpretation $I$ a \emph{supported model} of $P$ if $I$ is a fixed point of $T_P$ and we denote the set of all supported models of $P$ by $Supp(P)$.

We say that $P$ and $R$ are \emph{subsumption equivalent} \citep{Maher88} --- denoted by $P\approx R$ --- if $T_P = T_R$. Notice that subsumption equivalence implies that $P$ and $R$ have the same models.

The van Emden-Kowalski operator is monotone on the set of all interpretations ordered by set inclusion and its \emph{least fixed point computation} is given by
\begin{align*} 
    &T_P^0 = \emptyset,\\
    &T_P^{n+1} = T_P(T_P^n),\\
    &T_P^\infty = \bigcup_{n\geq 0}T_P^n.
\end{align*} The following constructive characterization of least models is due to \cite{vanEmden76}:

\begin{proposition}\label{prop:LM} The least model of a program coincides with the least fixed point of its associated van Emden-Kowalski operator, that is, for any program $P$ we have
\begin{align} 
    LM(P) = T_P^\infty.
\end{align}
\end{proposition}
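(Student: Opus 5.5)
The plan is to prove $LM(P)=T_P^\infty$ in two inclusions, using \prettyref{t:Mod(P)} to move between models and prefixed points. Two facts carry the argument: $T_P$ is monotone, and $T_P^\infty$ is reached after finitely many steps because $A$ is finite.

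First I show that the sequence $T_P^n$ is increasing. By induction on $n$: $T_P^0=\emptyset\subseteq T_P^1$, and if $T_P^n\subseteq T_P^{n+1}$, then monotonicity of $T_P$ gives $T_P^{n+1}=T_P(T_P^n)\subseteq T_P(T_P^{n+1})=T_P^{n+2}$. Since $A$ is finite, this chain stabilizes at some $N$. Hence $T_P^\infty=T_P^N$ and
\[
T_P(T_P^\infty)=T_P^{N+1}=T_P^N=T_P^\infty .
\]
So $T_P^\infty$ is a fixed point, in particular a prefixed point, and \prettyref{t:Mod(P)} makes it a model of $P$. This gives $LM(P)\subseteq T_P^\infty$.

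Without relying on finiteness, one can instead argue by compactness of rule bodies. Every body $B$ is finite, so if $B\subseteq T_P^\infty$ then $B\subseteq T_P^n$ for some $n$. Consequently $T_P(T_P^\infty)\subseteq\bigcup_n T_P^{n+1}=T_P^\infty$.

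For the converse, let $I$ be any model of $P$. By \prettyref{t:Mod(P)}, $T_P(I)\subseteq I$. I show by induction that $T_P^n\subseteq I$ for every $n$. The base case is $\emptyset\subseteq I$. For the step, $T_P^{n+1}=T_P(T_P^n)\subseteq T_P(I)\subseteq I$, using monotonicity and then the prefixed-point property. Taking the union over $n$ gives $T_P^\infty\subseteq I$. Thus $T_P^\infty$ is a model contained in every model, so it is the least element of $Mod(P)$, which is $LM(P)$. This also shows the least model exists.

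Monotonicity of $T_P$ is immediate from the definition: if $I\subseteq J$ and $B\subseteq I$, then $B\subseteq J$. I do not expect a real obstacle. The only step needing care is showing $T_P^\infty$ is a fixed point, and the finiteness of $A$, or alternatively of rule bodies, handles it.
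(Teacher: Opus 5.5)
Your proof is correct: monotonicity of $T_P$, stabilization of the increasing chain $T_P^n$ (by finiteness of $A$, or alternatively by finiteness of rule bodies), and induction below an arbitrary model via Theorem~\ref{t:Mod(P)} is the standard Kleene-iteration argument, which is the classical result of van Emden and Kowalski that the paper cites rather than reproves. One small point: your compactness variant only gives $T_P(T_P^\infty)\subseteq T_P^\infty$, and the reverse inclusion, needed for the phrase ``least fixed point'', follows from $T_P^{n+1}=T_P(T_P^n)\subseteq T_P(T_P^\infty)$ by monotonicity.
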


\subsubsection{Sequential composition}\label{sec:Composition}

We define the \emph{(sequential) composition} \citep{Antic21-1} of two programs $P$ and $R$ as
\begin{align*} 
	P\circ R := \{a \leftarrow (B_1\cup\ldots\cup B_k) \mid a \leftarrow b_1,\ldots,b_k\in P,\; b_1 \leftarrow B_1,\ldots,b_k \leftarrow B_k\in R,\, k\geq 0\},
\end{align*} which we will usually denote simply by $PR$. 

For example, we have
\begin{align*} 
	\left\{
	\begin{array}{l}
		a\\
		b \leftarrow a\\
		c \leftarrow a,b
	\end{array}
	\right\}\left\{
	\begin{array}{l}
		a\\
		b \leftarrow a\\
	\end{array}
	\right\} = \left\{
	\begin{array}{l}
		a\\
		b\\
		c \leftarrow a
	\end{array}
	\right\}.
\end{align*}

Composition satisfies the following identities (among others):
\begin{align} 
	P1 &= 1P = P\\
	\label{eq:(P_cup_R)Q} (P\cup R)Q &= PQ\cup RQ\\
	\label{eq:P0=f(P)} P\emptyset &= f(P)\\
	\label{eq:h(P)=PA} h(P) &= PA.
\end{align} 

Unfortunately, composition is \textit{not} associative \cite[Example 10]{Antic21-1} and does not distribute from the left over union since
\begin{align} 
	\label{eq:P(Q_cup_R)}\{a \leftarrow b,c\}(\{b\}\cup \{c\}) = \{a\} \quad\text{whereas}\quad \{a \leftarrow b,c\}\{b\}\cup \{a \leftarrow b,c\}\{c\} = \emptyset.
\end{align} 

However, for minimalist programs $M$ and $N$, composition \textit{is} associative in the sense that
\begin{align} 
	\label{eq:(PM)N} (PM)N = P(MN)
\end{align} holds for every program $P$ \cite[Theorem 11]{Antic21-1}. The same holds true for a Krom program $K$ and all programs $P$ and $R$ \cite[Theorem 12]{Antic21-1}, that is,
\begin{align} 
	\label{eq:KPR} K(PR) = (KP)R.
\end{align} Moreover, in contrast to the general case (see \prettyref{eq:P(Q_cup_R)} above), Krom programs distribute from the left thus satisfying \cite[Theorem 12]{Antic21-1}
\begin{align} 
	\label{eq:K(P_cup_R)} K(P\cup R) &= KP\cup KR.
\end{align}

Together with the corresponding closure properties, this yields the following structural result:

\begin{proposition} The structure $( \mathbb K_A, \cup, \circ, \emptyset, 1_A )$ forms a seminearring with unit $1_A$ and the structure $( \mathbb M_A, \circ, 1_A )$ forms a monoid.
\end{proposition}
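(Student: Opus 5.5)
The plan is to check the axioms of each structure separately. The work reduces to two closure properties together with the identities already recalled in \prettyref{sec:Composition}.

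\textbf{The Krom structure.} First I show that $\mathbb K_A$ is closed under $\cup$ and $\circ$, and that it contains $\emptyset$ and $1_A$.
\begin{itemize}
\item A union of Krom programs is Krom, since every rule comes from one of the two programs.
\item For $KL$ with $K,L$ Krom, take a rule $a\leftarrow b_1,\ldots,b_k$ of $K$; then $k\leq 1$. If $k=0$, the composed rule is the fact $a$. If $k=1$, the composed rule is $a\leftarrow B_1$, where $b_1\leftarrow B_1\in L$ and $|B_1|\leq 1$. In both cases the rule is Krom.
\item $\emptyset$ and $1_A$ are Krom by inspection.
\end{itemize}

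Next I verify the seminearring laws.
\begin{itemize}
\item $(\mathbb K_A,\cup,\emptyset)$ is a monoid, since set union is associative with identity $\emptyset$.
\item Associativity of $\circ$ on $\mathbb K_A$ is the case $P,R\in\mathbb K_A$ of \prettyref{eq:KPR}.
\item Right distributivity is \prettyref{eq:(P_cup_R)Q}.
\item The left zero law $\emptyset\circ P=\emptyset$ holds directly from the definition, because there is no rule of $\emptyset$ to compose.
\item That $1_A$ is a unit is $P1=1P=P$.
\end{itemize}
So $(\mathbb K_A,\cup,\circ,\emptyset,1_A)$ is a seminearring with unit $1_A$. Left distributivity \prettyref{eq:K(P_cup_R)} also holds, but it is not needed.

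\textbf{The minimalist structure.} The only substantive point is closure of $\mathbb M_A$ under composition. Take $M,N$ minimalist. A rule of $MN$ with head $a$ arises from the unique rule $a\leftarrow b_1,\ldots,b_k$ of $M$. For each $b_i$ there is at most one rule $b_i\leftarrow B_i$ in $N$, so there is at most one choice of the tuple $(B_1,\ldots,B_k)$. Hence $MN$ has at most one rule with head $a$. If $k=0$ the empty choice yields exactly the fact $a$, which is consistent with this.

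For the remaining monoid axioms:
\begin{itemize}
\item $1_A$ is minimalist by \prettyref{eq:1}.
\item Associativity on $\mathbb M_A$ follows from \prettyref{eq:(PM)N}, taking $P$ itself minimalist.
\item The unit law is $P1=1P=P$.
\end{itemize}

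\textbf{Main obstacle.} Nothing here is deep, since associativity is imported from \cite{Antic21-1}. The step that needs care is the uniqueness argument for closure of $\mathbb M_A$. One must note that distinct choices of the bodies $B_i$ are what produce multiple rules, and minimalism of $N$ rules these out.
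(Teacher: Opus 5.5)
Your proposal is correct and takes the same route as the paper. The paper's proof simply cites Theorems 11 and 12 of the earlier work, which supply the identities recalled in \prettyref{sec:Composition}, together with the closure of $\mathbb K_A$ and $\mathbb M_A$ under composition; your explicit check of those closure properties, which the paper leaves implicit, is sound.
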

\begin{proof} See \citep[Theorems 11 and 12]{Antic21-1}.
\end{proof}


The van Emden-Kowalski operator $T_P$ can be represented via composition in the sense that
\begin{align} 
	\label{eq:T_P(I)=PI} T_P(I) = PI.
\end{align} This correspondence allows us to directly define concepts in terms of programs instead of the van Emden-Kowalski operator: 

\begin{fact}\label{f:PI=I} An interpretation $I$ is a model of a program $P$ iff $PI\subseteq I$. Moreover, $I$ is a supported model of $P$ iff $PI=I$.
\end{fact}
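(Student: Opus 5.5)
The statement to be proved is \prettyref{f:PI=I}. I would derive both parts from the representation \prettyref{eq:T_P(I)=PI}, $T_P(I)=PI$, combined with \prettyref{t:Mod(P)} and the definition of supported models.

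The key step is to verify \prettyref{eq:T_P(I)=PI} directly from the definition of composition, reading the interpretation $I$ as the program consisting of the facts $\{b\mid b\in I\}$. Take a rule $a\leftarrow b_1,\ldots,b_k\in P$. To produce a rule of $PI$ from it, we must pick rules $b_i\leftarrow B_i\in I$. Since every rule of $I$ is a fact, each $B_i=\emptyset$, and such rules exist exactly when every $b_i\in I$. In that case the rule produced is the fact $a$. Hence $PI=\{a\mid a\leftarrow B\in P,\ B\subseteq I\}=T_P(I)$. This also covers $k=0$: facts of $P$ always survive, which matches $\emptyset\subseteq I$. I expect this identification of interpretations with fact programs to be the only delicate point; it is a matter of bookkeeping rather than a real obstacle.

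With this identity, the first claim follows at once from \prettyref{t:Mod(P)}: $I\models P$ iff $T_P(I)\subseteq I$ iff $PI\subseteq I$. The second claim is the definition of a supported model, a fixed point $T_P(I)=I$, rewritten as $PI=I$.
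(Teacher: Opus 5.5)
Your proposal is correct and is essentially the paper's argument: the paper states this Fact as an immediate consequence of the identity $T_P(I)=PI$ in \eqref{eq:T_P(I)=PI}, combined with Theorem~\ref{t:Mod(P)} for models and with the fixed-point definition of supported models. The only addition is that you verify $T_P(I)=PI$ directly from the definition of composition, correctly and including the case where the rule of $P$ is a fact; the paper recalls this identity without proof.
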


\begin{fact} Two programs $P$ and $R$ are subsumption equivalent iff $PI=RI$ holds for every interpretation $I$.
\end{fact}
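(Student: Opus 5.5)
The statement to prove is the last Fact: $P\approx R$ iff $PI=RI$ for every interpretation $I$. The plan is to unfold the definition of subsumption equivalence and then apply the representation \prettyref{eq:T_P(I)=PI}.

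By definition, $P\approx R$ means $T_P=T_R$ as operators on $\mathbb I_A$. Two functions on $\mathbb I_A$ are equal exactly when they agree at every argument. So $P\approx R$ holds iff $T_P(I)=T_R(I)$ for every $I\in\mathbb I_A$.

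Next, rewrite each side using \prettyref{eq:T_P(I)=PI}, namely $T_P(I)=PI$ and $T_R(I)=RI$. The condition then reads $PI=RI$ for all $I$. Each step is an equivalence, so both directions follow at once.

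The only step that needs care is \prettyref{eq:T_P(I)=PI} itself, where the interpretation $I$ is read as the program of facts $\{a\mid a\in I\}$. I would check it directly from the definition of composition.

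A rule $a\leftarrow b_1,\ldots,b_k\in P$ contributes to $PI$ exactly when every $b_i$ is the head of some rule of $I$. Since $I$ consists only of facts, this means each $b_i\in I$. In that case the resulting rule is $a\leftarrow\emptyset$, i.e. the fact $a$.

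Hence $PI=\{a\mid a\leftarrow B\in P,\ B\subseteq I\}=T_P(I)$. This includes the case $k=0$, where facts of $P$ pass through unchanged. With this identity in place, the Fact is immediate.
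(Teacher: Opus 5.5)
Your proof is correct and takes the same route as the paper. The paper states this Fact without proof, directly after the identity $T_P(I)=PI$ in \eqref{eq:T_P(I)=PI}, so it follows from the definition $P\approx R :\Leftrightarrow T_P=T_R$ by pointwise evaluation, exactly as you argue; your direct check of $PI=T_P(I)$ from the definition of composition (including the case $k=0$) is a small addition, since the paper takes that identity from earlier work.
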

 
More specifically, we first define
\begin{align*} 
	P\lesssim R \quad:\Leftrightarrow\quad PI\subseteq RI,\quad\text{for every interpretation $I$},
\end{align*} and
\begin{align*} 
	P\approx R \quad:\Leftrightarrow\quad P\lesssim R \quad\text{and}\quad R\lesssim P.
\end{align*} We clearly have
\begin{align} 
	\label{eq:P_leqq_ss_R_} P\lesssim R \quad\Rightarrow\quad LM(P)\subseteq LM(R).
\end{align}

Subsumption equivalence is a congruence with respect to sequential composition and set union as a direct consequence of \cite[Corollary 36]{Antic21-1} which means that we can form the factor algebra
\begin{align} 
	\label{eq:mathfrak_P} [\mathfrak P_A] := \left([\mathbb P_A], \cup, \circ, [\emptyset], [1_A]\right),
\end{align} where
\begin{align*} 
	[\mathbb P_A] := \{[P] \mid P\in \mathbb P_A\},
\end{align*} and
\begin{align*} 
	[P]\cup [R] := [P\cup R] \quad\text{and}\quad [P]\circ [R] := [P\circ R],
\end{align*} isomorphic to
\begin{align*} 
	\mathfrak T_A := (\mathbb T_{ \mathbb P_A}, \cup, \circ, T_ \emptyset, T_{1_A}),
\end{align*} where
\begin{align*} 
	\mathbb T_{ \mathbb P_A} := \{T_P \mid P\in \mathbb P_A\}.
\end{align*} denotes the set of all van Emden-Kowalski operators over $A$.

We have the following structural result:

\begin{proposition}\label{p:seminearring} The isomorphic structures $\mathfrak T_A$ and $[\mathfrak P_A]$ are seminearrings.
\end{proposition}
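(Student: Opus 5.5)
Since $[\mathfrak P_A]$ and $\mathfrak T_A$ are isomorphic, I will prove the seminearring laws in $\mathfrak T_A$ and transfer them along the isomorphism $[P]\mapsto T_P$. Here $\cup$ on operators is pointwise union and $\circ$ is composition of maps. By \prettyref{eq:T_P(I)=PI}, $T_P(I)=PI$. I first check that these operations match the program operations. The identity $T_{P\cup R}(I)=(P\cup R)I=PI\cup RI$ follows from \prettyref{eq:(P_cup_R)Q}. The identity $T_{PR}=T_P\circ T_R$ needs $(PR)I=P(RI)$. Interpretations are Krom programs consisting only of facts; in particular they are minimalist. Hence \prettyref{eq:(PM)N} with $M=R$? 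No: that law requires the middle and right factors to be minimalist, and $R$ need not be. Instead I will verify $(PR)I=P(RI)$ directly. An atom $a$ lies in $(PR)I$ iff there are $a\leftarrow b_1,\ldots,b_k\in P$ and $b_i\leftarrow B_i\in R$ with $\bigcup_i B_i\subseteq I$. This holds iff each $b_i\in RI$ for some such rule $a\leftarrow b_1,\ldots,b_k\in P$, which is exactly $a\in P(RI)$. So $[P]\mapsto T_P$ is a homomorphism, and the operations on classes are well defined, as already asserted via the congruence property.

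Next come the axioms. $(\mathbb T_{\mathbb P_A},\cup,T_\emptyset)$ is a monoid, since pointwise union is associative and $T_\emptyset(I)=\emptyset$ is its neutral element. It is even commutative and idempotent. $(\mathbb T_{\mathbb P_A},\circ)$ is a semigroup because composition of maps is associative, and it is closed because $T_P\circ T_R=T_{PR}$. The unit $T_{1_A}$ is the identity map, since $1_AI=I$. Right distributivity is pointwise: $((T_P\cup T_R)\circ T_Q)(I)=T_P(T_Q(I))\cup T_R(T_Q(I))$. This also follows from \prettyref{eq:(P_cup_R)Q}. The left-zero law is $(T_\emptyset\circ T_P)(I)=T_\emptyset(T_P(I))=\emptyset$, equivalently $\emptyset P=\emptyset$.

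Closure of $\mathbb T_{\mathbb P_A}$ under both operations comes from the program-level identities. The isomorphism then transfers everything to $[\mathfrak P_A]$.

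The only potentially delicate step is $T_{PR}=T_P\circ T_R$, that is, associativity when the right factor is an interpretation. Since \prettyref{eq:(PM)N} does not directly apply to an arbitrary middle factor, the direct element-chasing argument above is what I would rely on.
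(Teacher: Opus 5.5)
Your proof is correct and is essentially the argument the paper relies on. The paper gives no separate proof. It obtains the result from the congruence result cited from earlier work and the representation $T_P(I)=PI$, that is, by realizing $[\mathfrak P_A]$ as the operators $T_P$ on $\mathbb I_A$. These form a sub-seminearring of the seminearring of all self-maps under pointwise union and composition mentioned in the preliminaries. Your direct element-chasing proof of $(PR)I=P(RI)$ makes self-contained the step the paper delegates to that citation, and you are right not to invoke \eqref{eq:(PM)N}, whose minimalist hypothesis on the middle factor need not hold here.
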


We now turn to the least model semantics of programs. For this, define the \emph{(Kleene) star} of a program $P$ by
\begin{align*} 
	P^\ast := \bigcup_{n\geq 0}P^n,
\end{align*} where (recall that composition is not associative)
\begin{align*} 
	P^0 := 1, \qquad P^1 := P, \qquad P^{n+1} := P^n P.
\end{align*} 

Moreover, define the \emph{omega} of $P$ by
\begin{align} 
	\label{eq:P^omega} P^\omega := f(P^\ast) \stackrel{\ref{eq:P0=f(P)}}= P^\ast\emptyset.
\end{align}

For every interpretation $I$, we have
\begin{align} 
	\label{eq:I^ast} I^\ast &= 1\cup I\\
	I^\omega &= I.
\end{align}

We have the following algebraic characterization of the least model of a program as shown in \cite[Theorem 40]{Antic21-1}: the omega of $P$ coincides with the least model of $P$, that is,
\begin{align} 
	\label{eq:LM(P)=P^omega} P^\omega = LM(P).
\end{align}

For an interpretation $I$, define the minimalist program
\begin{align*} 
    I^\oplus := \{a\leftarrow(\{a\}\cup I)\mid a\in A\}.
\end{align*} The program
\begin{align*} 
    A^\oplus &= \{a\leftarrow A\mid a\in A\}
\end{align*} will play an important role in this paper as it occurs as the neutral element with respect to body-intersection (see \prettyref{f:sqcap_monoid}).

\subsubsection{Reducts}

\begin{definition} Given an interpretation $I$, we define the \emph{head-reduct} and \emph{body-reduct} of a program $P$ with respect to $I$ respectively by\footnote{The body-reduct coincides with the Faber-Leone-Pfeifer reduct \citep{Faber11} which is well-known in answer set programming.}
\begin{align*} 
	^IP := \{a \leftarrow B\in P \mid a\in I\} \quad\text{and}\quad P^I := \{a \leftarrow B\in P \mid B\subseteq I\}.
\end{align*}
\end{definition}

We now recollect some basic properties of reducts (see \cite[§4.3]{Antic21-1}). 

For two interpretations $I$ and $J$, we have
\begin{align} 
	\label{eq:I^J} I^J &= I\\
	\label{eq:^JI} {^J}I &= I\cap J.
\end{align}

According to \cite[Proposition 17]{Antic21-1}, the body-reduct operator is an endomorphism on $(\mathbb P_A,\cup,\cap,\emptyset,A)$, that is, for any programs $P,R$ and interpretation $I$, we have
\begin{align} 
	(P\cup R)^I &= P^I\cup R^I\\
	(P\cap R)^I &= P^I\cap R^I\\
	\emptyset^I &= \emptyset\\
	A^I &= A,
\end{align} and the head-reduct operator is an endomorphism on $(\mathbb P_A,\cup,\cap,\emptyset)$, that is, 
\begin{align} 
	{^I}(P\cup R) &= {^I}P\cup {^I}R\\
	{^I}(P\cap R) &= {^I}P\cap {^I}R\\
	{^I}\emptyset &= \emptyset.
\end{align} Moreover, the left and right reducts can be rewritten in terms of composition as
\begin{align}
	^IP &= (1^I)P\\
	P^I &= P1^I
\end{align} and
\begin{align} 
	\label{eq:(PR)^I} (PR)^I &= PR^I\\
	^I(PR) &= {^I}PR.
\end{align}

\section{Body-union}\label{sec:Body-union}

The operations introduced in this section and the following ones form a natural family of set-like transformations on rule bodies. While body-union plays a central role in the decomposition results of \prettyref{sec:Decomposition}, the remaining operations are studied to obtain a more complete picture of the algebraic structure of logic programs and to understand how these operations interact with program semantics.

This section introduces the first set-like operation on logic programs in analogy to the union of sets:

\begin{definition} We define the \emph{body-union} of $P$ and $R$ by
\begin{align*} 
	P\sqcup R &:= \{a \leftarrow (B\cup C)\mid a \leftarrow B\in P,\; a \leftarrow C\in R\}.
\end{align*}
\end{definition}

For example,
\begin{align*} 
	\left\{
	\begin{array}{l}
		a\\
		b \leftarrow a\\
		b \leftarrow b\\
		c \leftarrow a,b
	\end{array}
	\right\} \sqcup \left\{
	\begin{array}{l}
		b\\
		b \leftarrow c
	\end{array}
	\right\} = \left\{
	\begin{array}{l}
		b \leftarrow a\\
		b \leftarrow b\\
		b \leftarrow a,c\\
		b \leftarrow b,c
	\end{array}
	\right\}.
\end{align*} 

We have the following interactions between body-union and some of our basic operations:
\begin{align}
	\label{eq:h(P_sqcup_R)} h(P\sqcup R) &= h(P)\cap h(R)\\
	\label{eq:f(P_sqcup_R)} f(P\sqcup R) &= f(P)\cap f(R)\\
	\label{eq:p(P_sqcup_R)} p(P\sqcup R) &= (f(P)\sqcup p(R))\cup (p(P)\sqcup f(R))\cup (p(P)\sqcup p(R)).
\end{align} Notice that the first two identities mean that $h$ and $f$ are both homomorphisms from $( \mathbb P, \sqcup)$ to $( \mathbb P, \cap)$.


Recall that the union of sets is idempotent and despite the similarity between body-union and union, we have the following observation (but see \prettyref{c:P_sqcup_P_equiv_ss_P} where we will show that body-union is at least idempotent modulo subsumption equivalence):

\begin{observation} Body-union is not idempotent even for Krom programs, that is, there is a Krom program $K$ such that
\begin{align*} 
	K\sqcup K\neq K.
\end{align*} For example, we have
\begin{align*} 
	\left\{
	\begin{array}{l}
		a \leftarrow b\\
		a \leftarrow c
	\end{array}
	\right\} \sqcup \left\{
	\begin{array}{l}
		a \leftarrow b\\
		a \leftarrow c
	\end{array}
	\right\} &= \left\{
	\begin{array}{l}
		a \leftarrow b\\
		a \leftarrow c\\
		a \leftarrow b,c
	\end{array}
	\right\}
\end{align*} However, notice that body-union is idempotent on interpretations as a consequence of the forthcoming \prettyref{eq:I_sqcup_J}. 
\end{observation}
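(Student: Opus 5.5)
The plan is to prove the observation by exhibiting a witness and computing $K\sqcup K$ directly from the definition. No earlier result is needed.

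Take $K=\{a\leftarrow b,\ a\leftarrow c\}$ with distinct atoms $a,b,c\in A$. Both rules have exactly one body atom, so $K$ is Krom. Both rules have head $a$, so every pair $(r,s)\in K\times K$ is admissible in the definition of body-union. There are four such pairs:
\begin{align*}
	(a\leftarrow b,\ a\leftarrow b) &\mapsto a\leftarrow b, & (a\leftarrow b,\ a\leftarrow c) &\mapsto a\leftarrow b,c,\\
	(a\leftarrow c,\ a\leftarrow b) &\mapsto a\leftarrow b,c, & (a\leftarrow c,\ a\leftarrow c) &\mapsto a\leftarrow c.
\end{align*}
Collecting these as a set gives $K\sqcup K=\{a\leftarrow b,\ a\leftarrow c,\ a\leftarrow b,c\}$, which is the program displayed in the statement.

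The final step is to show $K\sqcup K\neq K$. The rule $a\leftarrow b,c$ has width $2$, but every rule of $K$ has width $1$. Hence $a\leftarrow b,c$ lies in $K\sqcup K$ and not in $K$. This uses $b\neq c$, which also ensures that the two rules of $K$ are distinct.

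The only point needing care, and it is mild, is the precondition: the alphabet must contain three distinct atoms. On a smaller alphabet the same phenomenon can fail to occur, since two distinct one-atom bodies are needed to produce a body of size $2$.

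The remark on interpretations I would treat as a consequence of the later identity $I\sqcup J$ it cites. Alternatively it can be read off directly. An interpretation is a set of facts, at most one per head, so the only admissible pairs are $(a,a)$. Since $\emptyset\cup\emptyset=\emptyset$, each pair returns the fact $a$, and therefore $I\sqcup I=I$.

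This contrast also shows why idempotence fails for $K$. The failure comes precisely from two distinct rules sharing a head, so $K$ is not minimalist. Any minimalist program $M$ pairs each rule only with itself and therefore satisfies $M\sqcup M=M$, but this is not required for the observation.
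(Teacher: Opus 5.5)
Your proposal is correct and follows the paper, which justifies the observation by exactly this direct computation of $K\sqcup K$ and reads off idempotence on interpretations from $I\sqcup J=I\cap J$; your one-fact-per-head argument is simply the special case $J=I$ of that identity. One small fix to your side remark on the alphabet: the witness only needs $b\neq c$, since the head may coincide with a body atom, so two atoms already suffice (e.g.\ $\{a\leftarrow a,\ a\leftarrow b\}\sqcup\{a\leftarrow a,\ a\leftarrow b\}$ contains $a\leftarrow a,b$), and body-union is idempotent on all programs only when $|A|\leq 1$.
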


\begin{fact}\label{f:sqcup_monoid} The structure $(\mathbb P_A,\sqcup,A)$ is a commutative monoid, that is,
\begin{align*} 
	(P\sqcup Q)\sqcup R &= P\sqcup (Q\sqcup R)\\
	P\sqcup A &= A\sqcup P = P\\
	P\sqcup R &= R\sqcup P
\end{align*} holds for all programs $P,Q,R\in \mathbb P_A$.
\end{fact}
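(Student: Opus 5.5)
Each of the three identities can be checked elementwise from the definition of $\sqcup$, reducing it to the corresponding property of set union on bodies together with equality of heads. Throughout I use that an interpretation $A$ is identified with the program consisting of the facts $a$ (that is, $a\leftarrow\emptyset$) for $a\in A$. This identification is already used implicitly in the paper, e.g.\ in $h(P)=PA$ and $I^\omega=I$.

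\emph{Commutativity.} For commutativity, unfold the definitions:
\begin{align*}
	P\sqcup R=\{a\leftarrow(B\cup C)\mid a\leftarrow B\in P,\ a\leftarrow C\in R\}.
\end{align*}
Since $B\cup C=C\cup B$, this is exactly $\{a\leftarrow(C\cup B)\mid a\leftarrow C\in R,\ a\leftarrow B\in P\}=R\sqcup P$.

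\emph{Associativity.} I show that both sides equal the ternary set
\begin{align*}
	\{a\leftarrow(B\cup C\cup D)\mid a\leftarrow B\in P,\ a\leftarrow C\in Q,\ a\leftarrow D\in R\}.
\end{align*}
For the inclusion "$\subseteq$" from the left side, take a rule of $(P\sqcup Q)\sqcup R$. It has the form $a\leftarrow(E\cup D)$ with $a\leftarrow E\in P\sqcup Q$ and $a\leftarrow D\in R$. The rule $a\leftarrow E\in P\sqcup Q$ must have the same head $a$, so $E=B\cup C$ with $a\leftarrow B\in P$ and $a\leftarrow C\in Q$. Associativity of $\cup$ then puts the rule in the ternary set.

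For "$\supseteq$", take $B,C,D$ as in the ternary set. Then $a\leftarrow(B\cup C)\in P\sqcup Q$ and $a\leftarrow D\in R$, which gives the rule in $(P\sqcup Q)\sqcup R$. The symmetric argument handles $P\sqcup(Q\sqcup R)$.

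\emph{Unit.} For the unit law, $A$ as a program is $\{a\leftarrow\emptyset\mid a\in A\}$, so it contains exactly one rule, with empty body, for every possible head. Hence
\begin{align*}
	P\sqcup A=\{a\leftarrow(B\cup\emptyset)\mid a\leftarrow B\in P,\ a\in A\}=P,
\end{align*}
because every head of $P$ lies in $A$. The identity $A\sqcup P=P$ follows by commutativity.

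The only point needing care is the step in associativity where a rule of $P\sqcup Q$ is decomposed. This works because body-union only pairs rules with identical heads, so no cross-head combinations arise. Apart from that, the proof is routine bookkeeping.
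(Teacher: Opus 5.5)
Your proof is correct and is the routine elementwise verification from the rule-wise definition of $\sqcup$ that the paper intends: the paper states this as a Fact without proof, in the same spirit as the one-line proof of Theorem~\ref{t:cup_sqcup}. Your observations that body-union only pairs rules with identical heads, and that $A$ viewed as a program contains exactly one fact $a\leftarrow\emptyset$ for every possible head, are exactly what make associativity and the unit law go through.
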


We now analyze the interaction between body-union and union. Our first observation is the following structural result:

\begin{theorem}\label{t:cup_sqcup} The structure $(\mathbb P_A,\cup,\sqcup,\emptyset,A)$ is a commutative idempotent semiring, that is, in addition to $(\mathbb P_A,\cup,\emptyset)$ being an idempotent commutative monoid and $(\mathbb P_A,\sqcup,A)$ being a commutative monoid (see \prettyref{f:sqcup_monoid}), we have
\begin{align} 
    \label{eq:P_cup_R_sqcup_Q} (P\cup R)\sqcup Q &= (P\sqcup Q)\cup (R\sqcup Q)\\
    \label{eq:Q_sqcup_P_cup_R} Q\sqcup (P\cup R) &= (Q\sqcup P)\cup (Q\sqcup R)\\
    \label{eq:P_sqcup_0=0} \emptyset\sqcup P &= P\sqcup \emptyset = \emptyset.
\end{align} 
\end{theorem}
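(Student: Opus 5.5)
The plan is to verify the three displayed laws directly from the definition of body-union. We rely on Fact~\ref{f:sqcup_monoid} for the monoid structure of $(\mathbb P_A,\sqcup,A)$. The structure $(\mathbb P_A,\cup,\emptyset)$ is an idempotent commutative monoid simply because it is the powerset of the set of rules under union. Everything then reduces to elementwise set arguments about rules, since programs are sets of rules and $\sqcup$ is defined by a set-builder over pairs of rules with a common head.

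For \prettyref{eq:P_cup_R_sqcup_Q}, we show both inclusions by unfolding the definition. A rule lies in $(P\cup R)\sqcup Q$ iff it has the form $a\leftarrow(B\cup C)$ with $a\leftarrow B\in P\cup R$ and $a\leftarrow C\in Q$. The membership $a\leftarrow B\in P\cup R$ splits into the two cases $a\leftarrow B\in P$ and $a\leftarrow B\in R$. These cases give exactly membership in $P\sqcup Q$ or in $R\sqcup Q$, respectively. In other words, the existential quantifier over the left factor distributes over the disjunction defining $\cup$, so the two sides are literally the same set.

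The law \prettyref{eq:Q_sqcup_P_cup_R} then follows from \prettyref{eq:P_cup_R_sqcup_Q} together with commutativity of $\sqcup$ from Fact~\ref{f:sqcup_monoid}. Alternatively, it can be proved by the symmetric argument on the right factor.

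For \prettyref{eq:P_sqcup_0=0}, the definition of $\emptyset\sqcup P$ requires a rule $a\leftarrow B\in\emptyset$, and no such rule exists. Hence $\emptyset\sqcup P=\emptyset$, and $P\sqcup\emptyset=\emptyset$ follows by commutativity.

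No step is a genuine obstacle. The only point deserving care is that $\sqcup$, unlike $\cup$, is not idempotent. This is harmless here, because the semiring axioms demand idempotence only of the additive operation $\cup$, not of the multiplicative operation $\sqcup$. We should also note that the multiplicative unit is $A$, read as the set of facts $\{a\leftarrow\ \mid a\in A\}$. Indeed, $P\sqcup A$ pairs each rule $a\leftarrow B$ of $P$ with the fact $a$, returning $a\leftarrow B$, so this is consistent with Fact~\ref{f:sqcup_monoid}.
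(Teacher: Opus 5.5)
Your proof is correct and follows the paper's approach: the paper simply calls these laws a direct consequence of the rule-wise definition of body-union, which is exactly the elementwise unfolding you give. You distribute the existential over the disjunction defining $\cup$, use commutativity of $\sqcup$ for the other distributive law, and note that $\emptyset$ annihilates because it has no rules.
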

\begin{proof} A direct consequence of the rule-wise definition of body-union.
\end{proof}

\begin{remark} In particular, $(\mathbb P_A,\cup,\sqcup,\emptyset,A)$ is a strong bimonoid in the sense of \cite{Droste10}. Notice that the structure with reversed roles $(\mathbb P_A,\sqcup,\cup,A,\emptyset)$ is a bimonoid but not a strong one, since $P\cup A\neq A$ in general.
\end{remark}

The next result shows that $\cup$ does in general \textit{not} distribute over $\sqcup$ (but see \prettyref{t:DT_cup_ss} where we show that it distributes modulo subsumption equivalence):

\begin{observation} There are an interpretation $I$ and Krom programs $K$ and $L$ such that
\begin{align*} 
	(I\sqcup K)\cup L\neq (I\cup L)\sqcup (K\cup L).
\end{align*} For example, for
\begin{align*} 
	I &:= \{a\} \quad\text{and}\quad K := \{a \leftarrow b\} \quad\text{and}\quad L := \{a \leftarrow c\}
\end{align*} we obtain
\begin{align*} 
	(\{a\}\sqcup \{a \leftarrow b\})\cup \{a \leftarrow c\} 
		&= \left\{
		\begin{array}{l}
			a \leftarrow b\\
			a \leftarrow c
		\end{array}
		\right\}\\
		&\neq \left\{
		\begin{array}{l}
			a \leftarrow b\\
			a \leftarrow c\\
			a \leftarrow b,c
		\end{array}
		\right\} = (\{a\}\cup \{a \leftarrow c\})\sqcup (\{a \leftarrow b\}\cup \{a \leftarrow c\}).
\end{align*}
\end{observation}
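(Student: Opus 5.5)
The plan is to verify the Observation by direct computation with the rule-wise definition of body-union, treating the interpretation $I=\{a\}$ as the program consisting of the single fact $a\leftarrow\emptyset$. Both sides of the claimed inequality are finite programs whose rules all have head $a$, so it suffices to list their rules and compare. No semantic notion is needed, since the claim is pure syntactic inequality of rule sets.

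\textbf{Left-hand side.} First compute $\{a\}\sqcup\{a\leftarrow b\}$. The only pair of rules with equal heads is $(a\leftarrow\emptyset,\ a\leftarrow\{b\})$, which gives $a\leftarrow\{b\}$. Taking the union with $L$ then yields
\begin{align*}
	(I\sqcup K)\cup L=\{a\leftarrow b,\ a\leftarrow c\}.
\end{align*}

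\textbf{Right-hand side.} Here $I\cup L=\{a,\ a\leftarrow c\}$ and $K\cup L=\{a\leftarrow b,\ a\leftarrow c\}$. Body-union takes all four pairs of rules, one from each program. Their bodies are
\begin{align*}
	\emptyset\cup\{b\},\quad \emptyset\cup\{c\},\quad \{c\}\cup\{b\},\quad \{c\}\cup\{c\},
\end{align*}
which give the rule set $\{a\leftarrow b,\ a\leftarrow c,\ a\leftarrow b,c\}$.

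\textbf{Conclusion.} The rule $a\leftarrow b,c$ belongs to the right-hand side but not to the left-hand side, so the two programs differ. Both $K$ and $L$ are Krom, as required.

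The only step needing any care is the bookkeeping on the right-hand side: body-union pairs every rule of the first program with every same-head rule of the second. The pairing of the two proper rules $a\leftarrow c$ and $a\leftarrow b$ is exactly what creates the extra width-two rule. This is the same non-idempotence phenomenon noted in the preceding Observation.
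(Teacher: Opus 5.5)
Your computation is correct and is the same direct verification the paper gives: you evaluate both sides from the rule-wise definition of body-union. The rule $a \leftarrow b,c$, obtained by pairing $a \leftarrow c$ from $I\cup L$ with $a \leftarrow b$ from $K\cup L$, lies only on the right-hand side and so witnesses the inequality.
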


\begin{theorem}\label{t:DT_cup_ss} For any programs $P,Q,R$, we have
\begin{align*} 
	(P\sqcup R)\cup Q \approx (P\cup Q)\sqcup (R\cup Q).
\end{align*}
\end{theorem}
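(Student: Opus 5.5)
We have to show that $T$ of the two sides agree, i.e.\ that for every interpretation $I$
\begin{align*}
	((P\sqcup R)\cup Q)I = ((P\cup Q)\sqcup (R\cup Q))I.
\end{align*}
The plan is to first expand the right-hand side using the distributivity laws of \prettyref{t:cup_sqcup}, and then compare the resulting rule sets modulo subsumption. Expanding \prettyref{eq:P_cup_R_sqcup_Q} and \prettyref{eq:Q_sqcup_P_cup_R} gives
\begin{align*}
	(P\cup Q)\sqcup (R\cup Q) = (P\sqcup R)\cup (P\sqcup Q)\cup (Q\sqcup R)\cup (Q\sqcup Q).
\end{align*}
Since $\lesssim$ is compatible with union (both sides are evaluated rule-wise via $T$), it suffices to prove two things. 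First, the left-hand side $(P\sqcup R)\cup Q$ is subsumed by the right-hand side. Second, each of the four pieces on the right is $\lesssim (P\sqcup R)\cup Q$.

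The key elementary observation is that body-union only enlarges bodies. If $a\leftarrow B\cup C\in X\sqcup Y$ with $a\leftarrow C\in Y$, then whenever $B\cup C\subseteq I$ we also have $C\subseteq I$. Hence $X\sqcup Y\lesssim Y$, and by commutativity (\prettyref{f:sqcup_monoid}) also $X\sqcup Y\lesssim X$. Applying this settles the second direction piece by piece:
\begin{itemize}
	\item $P\sqcup R\lesssim P\sqcup R$ trivially;
	\item $P\sqcup Q\lesssim Q$;
	\item $Q\sqcup R\lesssim Q$;
	\item $Q\sqcup Q\lesssim Q$.
\end{itemize}
So the whole right-hand side is $\lesssim (P\sqcup R)\cup Q$.

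For the first direction, $P\sqcup R$ is literally contained in the expanded right-hand side. It remains to show $Q\lesssim Q\sqcup Q$. For this, take $a\leftarrow C\in Q$ with $C\subseteq I$. Then $a\leftarrow C\cup C = a\leftarrow C$ lies in $Q\sqcup Q$, so $a\in (Q\sqcup Q)I$. In fact $Q\subseteq Q\sqcup Q$ as sets, since each rule of $Q$ can be paired with itself. Combining both inclusions yields $\approx$.

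I expect no real obstacle. The only points needing care are correctly expanding via both distributive laws, and noticing the self-pairing $Q\subseteq Q\sqcup Q$, which replaces the idempotence that fails on the nose. Alternatively, one could first prove $Q\sqcup Q\approx Q$ as a lemma (idempotence modulo $\approx$) and then use that $\approx$ is a congruence for $\cup$.
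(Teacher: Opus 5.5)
Your proof is correct, and it takes a different route from the paper's. The paper argues pointwise. For an arbitrary interpretation $I$ it uses $(P\sqcup R)I = PI\sqcup RI = PI\cap RI$ (Theorem~\ref{t:(P_sqcup_R)Q} with $I$ as right factor, together with \eqref{eq:I_sqcup_J}) and $(P\cup Q)I = PI\cup QI$. This reduces the claim to the ordinary set law $(PI\cap RI)\cup QI = (PI\cup QI)\cap (RI\cup QI)$, and the paper then translates back. In effect, evaluation $X\mapsto XI$ is a homomorphism from $(\mathbb P_A,\cup,\sqcup)$ into the distributive lattice $(\mathbb I_A,\cup,\cap)$, so the lattice identity transfers modulo $\approx$.

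You instead stay syntactic. You expand the right-hand side exactly via the semiring laws of Theorem~\ref{t:cup_sqcup} into $(P\sqcup R)\cup (P\sqcup Q)\cup (Q\sqcup R)\cup (Q\sqcup Q)$ and compare pieces. For this you need only the easy half $(X\sqcup Y)I\subseteq XI\cap YI$ of the evaluation identity, plus the self-pairing inclusion $Q\subseteq Q\sqcup Q$.

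Your route buys two things. First, it avoids the forward references the paper must make. Second, it gives sharper information: $(P\sqcup R)\cup Q\subseteq (P\cup Q)\sqcup (R\cup Q)$ holds as a literal set inclusion. Everything on the right beyond the left comes from the cross terms $P\sqcup Q$, $Q\sqcup R$, $Q\sqcup Q$, each of which is subsumed by $Q$. This explains why, in the observation preceding the theorem, the left-hand side is a proper subset of the right-hand side rather than merely different from it. The paper's route is shorter and shows that the statement is just set distributivity transported along evaluation at $I$. That pattern reuses machinery the paper needs anyway for the seminearring results.
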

\begin{proof} For every interpretation $I$, we have (the forward references are non-circular and thus harmless)
\begin{align*} 
	[(P\sqcup R)\cup Q]I & \stackrel{\ref{eq:(P_cup_R)Q}}= (P\sqcup R)I\cup QI\\
		& \stackrel{\ref{eq:(P_sqcup_R)Q=PQ_sqcup_RQ}}= (PI\sqcup RI)\cup QI\\
		& \stackrel{\ref{eq:I_sqcup_J}}= (PI\cap RI)\cup QI\\
		&= (PI\cup QI)\cap (RI\cup QI)\\
		& \stackrel{\ref{eq:(P_cup_R)Q}}= (P\cup Q)I\cap (R\cup Q)I\\
		& \stackrel{\ref{eq:I_sqcup_J}}= (P\cup Q)I\sqcup (R\cup Q)I\\
		& \stackrel{\ref{eq:(P_sqcup_R)Q=PQ_sqcup_RQ}}= [(P\cup Q)\sqcup (R\cup Q)]I.
\end{align*}
\end{proof}

\begin{proposition}\label{p:Q_sqcup_P_cap_R} For any programs $P,Q,R$, we have
\begin{align} 
	\label{eq:Q_sqcup_P_cap_R} Q\sqcup (P\cap R)\subseteq (Q\sqcup P)\cap (Q\sqcup R).
\end{align}
\end{proposition}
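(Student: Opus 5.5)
The proof is a direct element-chase using only the rule-wise definition of body-union; no earlier result beyond the definitions is needed. Let $r\in Q\sqcup(P\cap R)$. By the definition of $\sqcup$ there are rules $a\leftarrow B\in Q$ and $a\leftarrow C\in P\cap R$, sharing the same head $a$, such that $r$ is the rule $a\leftarrow(B\cup C)$.

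Since $a\leftarrow C\in P\cap R$, the same rule $a\leftarrow C$ belongs to $P$. Pairing it with $a\leftarrow B\in Q$ gives $a\leftarrow(B\cup C)\in Q\sqcup P$. Likewise $a\leftarrow C\in R$, so pairing it with the same rule of $Q$ gives $a\leftarrow(B\cup C)\in Q\sqcup R$. Hence $r\in(Q\sqcup P)\cap(Q\sqcup R)$, which proves the inclusion.

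There is no real obstacle here. The only subtle point is that we must use the same witnessing pair of rules on both sides. This explains why only an inclusion holds, and why the paper presumably follows the statement with a counterexample to the converse. A rule in $(Q\sqcup P)\cap(Q\sqcup R)$ may arise as $a\leftarrow(B\cup C)$ with $a\leftarrow C\in P$ and also as $a\leftarrow(B'\cup C')$ with $a\leftarrow C'\in R$, where $C\neq C'$ and $B\cup C=B'\cup C'$. In that case no single rule of $P\cap R$ witnesses it.

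A concrete counterexample to the converse is $Q=\{a\leftarrow b\}$, $P=\{a\}$ and $R=\{a\leftarrow b\}$. Then $P\cap R=\emptyset$, so the left-hand side is empty, while $a\leftarrow b$ lies on the right-hand side. I would optionally include this example to show the inclusion is strict in general.
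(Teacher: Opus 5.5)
Your element chase is correct and is the paper's argument, written pointwise rather than as a set-builder rewriting of $Q\sqcup(P\cap R)$ that keeps the same witness pair; the paper likewise explains the failure of equality by rules on the right arising from distinct rules of $P$ and $R$. Your counterexample $Q=\{a\leftarrow b\}$, $P=\{a\}$, $R=\{a\leftarrow b\}$ is valid and plays the same role as the paper's ($Q=\{a\leftarrow b\}$, $P=\{a\leftarrow c\}$, $R=\{a\leftarrow b,c\}$).
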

\begin{proof} 
\begin{align*} 
	Q\sqcup (P\cap R) &= \{a \leftarrow (B\cup C) \mid a \leftarrow B\in Q,\; a \leftarrow C\in P\cap R\}\\
		&= \{a \leftarrow (B\cup C) \mid a \leftarrow B\in Q,\; a \leftarrow C\in P,\; a \leftarrow C\in R\}\\
		&\subseteq \{a \leftarrow (B\cup C) \mid a \leftarrow B\in Q,\; a \leftarrow C\in P\}\cap \{a \leftarrow (B\cup C)\mid a \leftarrow B\in Q,\; a \leftarrow C\in R\}\\
		&= (Q\sqcup P)\cap (Q\sqcup R),
\end{align*} where the inclusion cannot be turned into an identity since a rule occurring in both programs on the right may originate from a rule in $P$ and a rule in $R$ which are distinct and hence do not occur in $P\cap R$.
\end{proof}

\begin{observation} For $Q := \{a \leftarrow b\}$ and $P := \{a \leftarrow c\}$ and $R := \{a \leftarrow b,c\}$, we have $P\cap R = \emptyset$ and therefore
\begin{align*} 
	Q\sqcup (P\cap R) \stackrel{\ref{eq:P_sqcup_0=0}}= \emptyset \subsetneq \{a \leftarrow b,c\} = \{a \leftarrow b,c\}\cap \{a \leftarrow b,c\} = (Q\sqcup P)\cap (Q\sqcup R).
\end{align*}
\end{observation}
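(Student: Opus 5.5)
The plan is a direct computation from the definition of body-union, checking the left-hand side and the right-hand side of \prettyref{eq:Q_sqcup_P_cap_R} separately for the concrete programs $Q=\{a\leftarrow b\}$, $P=\{a\leftarrow c\}$ and $R=\{a\leftarrow b,c\}$.

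\emph{Left-hand side.} The programs $P$ and $R$ each consist of a single rule. These rules share the head $a$ but have different bodies, $\{c\}\neq\{b,c\}$, so they are distinct rules and $P\cap R=\emptyset$. Then \prettyref{eq:P_sqcup_0=0} from \prettyref{t:cup_sqcup} gives $Q\sqcup(P\cap R)=Q\sqcup\emptyset=\emptyset$ at once.

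\emph{Right-hand side.} By definition, $Q\sqcup P$ consists of all rules $a\leftarrow(B\cup C)$ with $a\leftarrow B\in Q$ and $a\leftarrow C\in P$. There is only one such pair, giving $Q\sqcup P=\{a\leftarrow b,c\}$. Likewise $Q\sqcup R=\{a\leftarrow(\{b\}\cup\{b,c\})\}=\{a\leftarrow b,c\}$. The key point is that $b$ is absorbed into the body $\{b,c\}$. Hence $(Q\sqcup P)\cap(Q\sqcup R)=\{a\leftarrow b,c\}$.

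\emph{Conclusion.} The empty program is a proper subset of the nonempty program $\{a\leftarrow b,c\}$, which gives the strict inclusion. The example thus shows that the inclusion in \prettyref{p:Q_sqcup_P_cap_R} cannot in general be replaced by an identity. The same rule $a\leftarrow b,c$ arises on the right from two distinct rules of $P$ and $R$, exactly as anticipated at the end of that proof. There is no real obstacle here. The only step needing care is confirming that $P\cap R$ is empty, which relies on rules being identified by head together with body as a set.
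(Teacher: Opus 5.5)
Your proposal is correct and follows the paper's own verification: $P\cap R=\emptyset$ because the two rules have different bodies, so $Q\sqcup(P\cap R)=\emptyset$ by \eqref{eq:P_sqcup_0=0}. Meanwhile, $Q\sqcup P=Q\sqcup R=\{a\leftarrow b,c\}$, which makes the right-hand side nonempty and the inclusion strict.
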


We now study the interplay between body-union and the sequential composition of programs which is used to compute the semantic ${}^\omega$-operator. We first prove that composition distributes from the right over body-union (but not from the left; see \prettyref{o:K(L_sqcup_N)}):

\begin{theorem}\label{t:(P_sqcup_R)Q} For any programs $P,Q,R$, we have
\begin{align} 
	\label{eq:(P_sqcup_R)Q_subseteq} (P\sqcup R)Q &\subseteq PQ\sqcup RQ,\\
	\label{eq:(P_sqcup_R)Q_approx} (P\sqcup R)Q &\approx PQ\sqcup RQ,
\end{align} and equality
\begin{align} 
	\label{eq:(P_sqcup_R)Q=PQ_sqcup_RQ} (P\sqcup R)Q = PQ\sqcup RQ
\end{align} holds in each of the following two cases: (i) $B\cap C = \emptyset$ for all $a \leftarrow B\in P$ and $a \leftarrow C\in R$; (ii) $Q\sqcup Q = Q$.
\end{theorem}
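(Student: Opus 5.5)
The plan is to unfold both sides rule-wise using the definitions of body-union and sequential composition. I then compare generated rules, first for the inclusion, then for subsumption equivalence, and finally for the two equality cases.

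\emph{Inclusion.} A rule of $(P\sqcup R)Q$ arises from a rule $a\leftarrow(B\cup C)$ with $a\leftarrow B\in P$ and $a\leftarrow C\in R$. Write $B\cup C=\{d_1,\ldots,d_k\}$. We choose one rule $d_i\leftarrow D_i\in Q$ for each $i$, and the result is $a\leftarrow D_1\cup\ldots\cup D_k$. Restricting this choice to the atoms of $B$ gives a rule $a\leftarrow\bigcup_{d_i\in B}D_i\in PQ$, and restricting it to $C$ gives a rule $a\leftarrow\bigcup_{d_i\in C}D_i\in RQ$. Their body-union is exactly $a\leftarrow D_1\cup\ldots\cup D_k$, which lies in $PQ\sqcup RQ$. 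This yields \prettyref{eq:(P_sqcup_R)Q_subseteq}.

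\emph{Equivalence.} The inclusion gives $(P\sqcup R)Q\lesssim PQ\sqcup RQ$, because $P\subseteq P'$ implies $PI\subseteq P'I$. For the converse direction I fix an interpretation $I$ and show $(PQ\sqcup RQ)I\subseteq(P\sqcup R)QI$. Suppose $a\leftarrow E\cup F$ is a rule with $a\leftarrow E\in PQ$, $a\leftarrow F\in RQ$ and $E\cup F\subseteq I$. Then $E$ arises from $a\leftarrow B\in P$ by choosing a $Q$-rule for each atom of $B$ with body inside $I$, and $F$ arises in the same way from $a\leftarrow C\in R$. For atoms $d\in B\cap C$ these two choices may differ, so I simply use the choice coming from $B$. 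Every chosen body is still contained in $I$, so I obtain a rule of $(P\sqcup R)Q$ with head $a$ whose body is contained in $I$. Hence $a\in(P\sqcup R)QI$.

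Because composition is not associative, the cleanest route is to avoid writing $(PQ)I$ at all. Instead I use directly $T_{PQ}(I)=\{a\mid a\leftarrow B\in P,\ \forall b\in B\ \exists\, b\leftarrow D\in Q,\ D\subseteq I\}=P(QI)$, which follows from the definitions. With this, both sides equal $\{a\mid a\leftarrow B\in P,\ a\leftarrow C\in R,\ B\cup C\subseteq QI\}$. This is arguably the slickest proof of \prettyref{eq:(P_sqcup_R)Q_approx}. Its only ingredient is $T_P(I)\cap T_R(I)=T_{P\sqcup R}(I)$, which holds directly from the definitions.

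\emph{Equality cases.} For the reverse inclusion of sets, take $a\leftarrow E\cup F$ as above. If the choices for atoms in $B\cap C$ agree, the rule is literally produced in $(P\sqcup R)Q$. In case (i) we have $B\cap C=\emptyset$, so there is no conflict and equality holds. In case (ii) the choices may conflict: an atom $d\in B\cap C$ may receive $d\leftarrow D\in Q$ from one side and $d\leftarrow D'\in Q$ from the other. Since $Q\sqcup Q=Q$, the rule $d\leftarrow D\cup D'$ belongs to $Q$, and I choose it for $d$. The resulting body is then exactly $E\cup F$, since all other atoms keep their choices. The main obstacle is this bookkeeping in case (ii): one must check that replacing the choice for each conflicting atom by the union rule reproduces precisely $E\cup F$ and nothing larger. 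This is immediate once the rule's body is written as the union over atoms of the chosen bodies.
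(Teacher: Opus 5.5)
Your proof is correct and follows the paper's argument. You prove the inclusion by restricting a $Q$-selection for $B\cup C$ to $B$ and to $C$, and the two equality cases by merging selections, using $b\leftarrow D_b\cup E_b\in Q\sqcup Q=Q$ on $B\cap C$ in case (ii). You obtain subsumption equivalence from $T_{XQ}=T_X\circ T_Q$ and $T_{P\sqcup R}(J)=T_P(J)\cap T_R(J)$, which the paper gets from associativity modulo $\approx$ together with case (ii) applied to an interpretation as right factor. Your elementwise argument for the converse half of $\approx$, which keeps the $B$-side choice on $B\cap C$ (all of whose bodies lie in $I$), is also valid and needs neither case (ii) nor associativity.
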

\begin{proof} Given a set of atoms $B$, we call a family $(C_b)_{b\in B}$ of sets of atoms a \emph{$Q$-selection for $B$} if $b \leftarrow C_b\in Q$ holds for every $b\in B$, so that the definition of composition reads as
\begin{align*} 
	PQ = \left\{a \leftarrow \bigcup_{b\in B}C_b \;\middle|\; a \leftarrow B\in P,\; (C_b)_{b\in B}\ \text{a $Q$-selection for $B$}\right\}.
\end{align*} We compute
\begin{align*} 
	(P\sqcup R)Q 
		&= \{a \leftarrow (B\cup C) \mid a \leftarrow B\in P,\; a \leftarrow C\in R\}Q\\
		&= \left\{a \leftarrow \bigcup_{b\in B\cup C}F_b \;\middle|\; a \leftarrow B\in P,\; a \leftarrow C\in R,\; (F_b)_{b\in B\cup C}\ \text{a $Q$-selection for $B\cup C$}\right\}\\
		&\subseteq \left\{a \leftarrow \left(\bigcup_{b\in B}D_b\cup \bigcup_{b\in C}E_b\right) \;\middle|\; a \leftarrow B\in P,\; a \leftarrow C\in R,\; (D_b)_{b\in B},\, (E_b)_{b\in C}\ \text{$Q$-selections}\right\}\\
		&= \left\{a \leftarrow \bigcup_{b\in B}D_b \;\middle|\; a \leftarrow B\in P,\; (D_b)_{b\in B}\ \text{a $Q$-selection for $B$}\right\}\\
			& \qquad\sqcup \left\{a \leftarrow \bigcup_{b\in C}E_b \;\middle|\; a \leftarrow C\in R,\; (E_b)_{b\in C}\ \text{a $Q$-selection for $C$}\right\}\\
		&= PQ\sqcup RQ,
\end{align*} which proves \prettyref{eq:(P_sqcup_R)Q_subseteq}. The inclusion in the third line holds since the restrictions $(F_b)_{b\in B}$ and $(F_b)_{b\in C}$ of a $Q$-selection for $B\cup C$ are $Q$-selections for $B$ and $C$ satisfying $\bigcup_{b\in B}F_b\cup \bigcup_{b\in C}F_b = \bigcup_{b\in B\cup C}F_b$. It is an identity --- and hence proves \prettyref{eq:(P_sqcup_R)Q=PQ_sqcup_RQ} --- in each of the cases (i) and (ii), since given $Q$-selections $(D_b)_{b\in B}$ and $(E_b)_{b\in C}$, the family
\begin{align*} 
	F_b := 
		\begin{cases}
			D_b & b\in B - C,\\
			E_b & b\in C - B,\\
			D_b\cup E_b & b\in B\cap C,
		\end{cases}
\end{align*} is a $Q$-selection for $B\cup C$ satisfying $\bigcup_{b\in B\cup C}F_b = \bigcup_{b\in B}D_b\cup \bigcup_{b\in C}E_b$: in case (i) the third line is vacuous as $B\cap C=\emptyset$, and in case (ii) we have $b \leftarrow (D_b\cup E_b)\in Q\sqcup Q = Q$ for every $b\in B\cap C$.

Finally, we prove \prettyref{eq:(P_sqcup_R)Q_approx}. Notice first that every interpretation $J$ satisfies $J\sqcup J \stackrel{\ref{eq:I_sqcup_I}}= J$, which means that \prettyref{eq:(P_sqcup_R)Q=PQ_sqcup_RQ} is applicable whenever the right factor is an interpretation. Hence, for every interpretation $I$ we compute
\begin{align*} 
	[(P\sqcup R)Q]I 
		& \stackrel{\prettyref{p:seminearring}}{\approx} (P\sqcup R)(QI) \stackrel{\ref{eq:(P_sqcup_R)Q=PQ_sqcup_RQ}}= P(QI)\sqcup R(QI) \stackrel{\ref{eq:I_sqcup_J}}= P(QI)\cap R(QI)\\
		& \stackrel{\prettyref{p:seminearring}}{\approx} (PQ)I\cap (RQ)I \stackrel{\ref{eq:I_sqcup_J}}= (PQ)I\sqcup (RQ)I \stackrel{\ref{eq:(P_sqcup_R)Q=PQ_sqcup_RQ}}= [PQ\sqcup RQ]I,
\end{align*} where we may replace $\approx$ by $=$ since all programs occurring in the computation are interpretations and subsumption equivalent interpretations coincide.
\end{proof}


\begin{remark} Condition (ii) covers the two cases occurring throughout this paper: every interpretation $I$ satisfies $I\sqcup I \stackrel{\ref{eq:I_sqcup_I}}= I$, every minimalist program $M$ satisfies $M\sqcup M = M$, and the program $A^\ast \stackrel{\ref{eq:I^ast}}= 1_A\cup A$ used in \prettyref{p:s_morphism} satisfies $A^\ast\sqcup A^\ast = A^\ast$. On the other hand, $Q := \{b \leftarrow d,\; b \leftarrow e\}$ satisfies $Q\sqcup Q = Q\cup \{b \leftarrow d,e\}\neq Q$, and indeed $P := \{a \leftarrow b\}$ and $R := \{a \leftarrow b,c\}$ together with $Q':=Q\cup \{c \leftarrow c\}$ give
\begin{align*} 
	(P\sqcup R)Q' = \left\{
	\begin{array}{l}
		a \leftarrow c,d\\
		a \leftarrow c,e
	\end{array}
	\right\} \subsetneq \left\{
	\begin{array}{l}
		a \leftarrow c,d\\
		a \leftarrow c,e\\
		a \leftarrow c,d,e
	\end{array}
	\right\} = PQ'\sqcup RQ',
\end{align*} which shows that the inclusion in \prettyref{eq:(P_sqcup_R)Q_subseteq} can in general not be turned into an equality.
\end{remark}

\begin{lemma}\label{l:sqcup_congruence} Subsumption equivalence is a congruence with respect to body-union, that is, $P\approx P'$ and $R\approx R'$ imply $P\sqcup R\approx P'\sqcup R'$, for all programs $P,P',R,R'$.
\end{lemma}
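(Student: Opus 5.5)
The plan is to reduce subsumption equivalence to a pointwise statement about the van Emden--Kowalski operators and then show that $T_{P\sqcup R}$ is determined by $T_P$ and $T_R$. Concretely, I would first establish the identity
\begin{align*}
	(P\sqcup R)I = PI\cap RI \quad\text{for every interpretation } I,
\end{align*}
which is the special case of the forthcoming \prettyref{eq:(P_sqcup_R)Q=PQ_sqcup_RQ} combined with \prettyref{eq:I_sqcup_J} (body-union of interpretations is intersection). To stay clearly non-circular, however, I would prove it directly from the definitions, since the proof of \prettyref{eq:(P_sqcup_R)Q_approx} above already uses these facts in the form needed.

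The direct verification runs as follows. An atom $a$ lies in $(P\sqcup R)I$ iff there is a rule $a\leftarrow (B\cup C)\in P\sqcup R$ with $B\cup C\subseteq I$. By definition of body-union, this holds iff there are $a\leftarrow B\in P$ and $a\leftarrow C\in R$ with $B\subseteq I$ and $C\subseteq I$. That condition is exactly $a\in PI$ and $a\in RI$, using $T_P(I)=PI$ from \prettyref{eq:T_P(I)=PI}. The only point needing care is the converse direction: given witnesses $a\leftarrow B\in P$ and $a\leftarrow C\in R$, their body-union $a\leftarrow B\cup C$ is indeed a rule of $P\sqcup R$, and its body lies in $I$. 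Both facts are immediate from the definition.

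With this identity in hand, the lemma follows by a one-line computation. For every interpretation $I$, the hypotheses $P\approx P'$ and $R\approx R'$ give $PI=P'I$ and $RI=R'I$. Therefore
\begin{align*}
	(P\sqcup R)I = PI\cap RI = P'I\cap R'I = (P'\sqcup R')I,
\end{align*}
which is $P\sqcup R\approx P'\sqcup R'$ by the characterization of subsumption equivalence via $PI=RI$ for all $I$.

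I do not expect a real obstacle here. The only subtlety is recognizing that, although body-union is not idempotent and does not distribute syntactically, it behaves like intersection on the level of operators. Once that is observed, compatibility is automatic. As a remark, the same argument shows that $\lesssim$ is also compatible with $\sqcup$ in each argument.
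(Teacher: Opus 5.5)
Your proof is correct and takes essentially the same route as the paper. The paper also reduces the claim to the pointwise identity $(P\sqcup R)I = PI\sqcup RI = PI\cap RI$, which it obtains by applying \eqref{eq:(P_sqcup_R)Q=PQ_sqcup_RQ} in case (ii) (since $I\sqcup I = I$); you instead verify that identity directly from the definitions, which is equally valid, and your side remark that $\lesssim$ is compatible with $\sqcup$ is also correct.
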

\begin{proof} Since every interpretation $I$ satisfies $I\sqcup I \stackrel{\ref{eq:I_sqcup_I}}= I$, we may apply \prettyref{eq:(P_sqcup_R)Q=PQ_sqcup_RQ} to compute
\begin{align*} 
	(P\sqcup R)I \stackrel{\ref{eq:(P_sqcup_R)Q=PQ_sqcup_RQ}}= PI\sqcup RI = P'I\sqcup R'I \stackrel{\ref{eq:(P_sqcup_R)Q=PQ_sqcup_RQ}}= (P'\sqcup R')I,
\end{align*} where the middle identity holds by assumption.
\end{proof}

\begin{theorem} The structure $( [\mathbb P_A], \sqcup, \circ, [A], [1_A] )$ forms a seminearring with unit $[1_A]$, where $[P]\sqcup [R] := [P\sqcup R]$.
\end{theorem}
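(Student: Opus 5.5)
The plan is to verify the seminearring axioms on the quotient $[\mathbb P_A]$ one at a time, reducing each to an identity already known either on representatives or after applying both sides to an arbitrary interpretation $I$.

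\textbf{Well-definedness.} Both operations must be well defined on classes. For $\sqcup$ this is exactly \prettyref{l:sqcup_congruence}. For $\circ$ it is the congruence property of subsumption equivalence with respect to sequential composition quoted before \prettyref{eq:mathfrak_P} (from \cite[Corollary 36]{Antic21-1}).

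\textbf{Additive monoid.} $([\mathbb P_A],\sqcup,[A])$ is a (commutative) monoid because $(\mathbb P_A,\sqcup,A)$ is one by \prettyref{f:sqcup_monoid}, and identities between representatives pass to the quotient.

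\textbf{Multiplicative semigroup with unit.} $([\mathbb P_A],\circ,[1_A])$ is a monoid because associativity and the unit law hold in $[\mathfrak P_A]$, which is a seminearring with unit $[1_A]$ by \prettyref{p:seminearring}. Concretely, for every $I$ we have $((PQ)R)I=(P(QR))I$, since both equal $P(Q(RI))$ via the correspondence $PI=T_P(I)$ from \prettyref{eq:T_P(I)=PI}. The unit laws follow from $P1=1P=P$.

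\textbf{Right distributivity.} We need $[(P\sqcup R)Q]=[PQ\sqcup RQ]$, which is precisely \prettyref{eq:(P_sqcup_R)Q_approx} of \prettyref{t:(P_sqcup_R)Q}.

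\textbf{Left zero.} This is the step needing care, since the additive neutral element is now $[A]$ rather than $[\emptyset]$. We must show $A\circ P\approx A$ for every $P$. Here $A$ denotes the program of facts $\{a\mid a\in A\}$. Every rule of $A$ has empty body, so the composition selects no rules from $P$ and $AP=A$ holds even syntactically. This is an instance of \prettyref{eq:P0=f(P)}-style reasoning: composition of a factual program with anything returns its facts. Hence $[A]\circ[P]=[A]$.

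\textbf{Main obstacle.} The only genuine subtlety is whether the intended neutral element is really $[A]$. On an interpretation $I$, $A\sqcup$ behaves like intersection with $AI=A$, the top element, which matches $I\sqcup J=I\cap J$ (the forward reference \prettyref{eq:I_sqcup_J}). Once this identification is made, left zero is immediate. Note that right zero $P\circ A\approx A$ fails in general, so the structure is indeed only a seminearring and not a semiring.
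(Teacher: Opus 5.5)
Your proposal is correct and follows the paper's proof essentially step by step. The steps match: well-definedness of $\sqcup$ via the congruence lemma, the additive monoid inherited from $(\mathbb P_A,\sqcup,A)$, associativity from the seminearring $[\mathfrak P_A]$ with unit law $P1_A=1_AP=P$, right-distributivity from $(P\sqcup R)Q\approx PQ\sqcup RQ$, and the left zero from the syntactic identity $AP=A$. Your extra remarks are accurate additions that do not change the argument: explicit well-definedness of $\circ$, the pointwise associativity check (which rests on $(PQ)I=P(QI)$, i.e.\ $T_{PQ}=T_P\circ T_Q$, not on $PI=T_P(I)$ alone), and the failure of the right zero law via $PA=h(P)$.
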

\begin{proof} Body-union is well-defined on subsumption equivalence classes by \prettyref{l:sqcup_congruence}, and $( [\mathbb P_A], \sqcup, [A] )$ is a monoid since $( \mathbb P_A, \sqcup, A )$ is one by \prettyref{f:sqcup_monoid}. Composition is associative modulo subsumption equivalence by \prettyref{p:seminearring}, right-distributivity is \prettyref{eq:(P_sqcup_R)Q_approx}, and $[A]$ is a left zero with respect to composition since
\begin{align*} 
	AP = A
\end{align*} holds for every program $P\in \mathbb P_A$ --- indeed, $A$ consists of facts only, which are left unchanged by composition. Finally, $[1_A]$ is a unit since $P1_A = 1_AP = P$.
\end{proof}

\begin{observation}\label{o:K(L_sqcup_N)} Unfortunately, the distributivity law in \prettyref{eq:(P_sqcup_R)Q_approx} does \textit{not} hold if composition is computed from the left as there are Krom programs $K,L,N$ such that
\begin{align*} 
	K(L\sqcup N) \not\approx KL\sqcup KN,
\end{align*} which means that in general we have
\begin{align*} 
	Q(P\sqcup R)\not\approx QP\sqcup QR.
\end{align*} For example, we have
\begin{align*} 
	\left\{
	\begin{array}{l}
		a \leftarrow b\\
		a \leftarrow c
	\end{array}
	\right\}(\{b \leftarrow b\}\sqcup \{c \leftarrow c\}) = \left\{
	\begin{array}{l}
		a \leftarrow b\\
		a \leftarrow c
	\end{array}
	\right\}\emptyset = \emptyset
\end{align*} whereas
\begin{align*} 
	\left\{
	\begin{array}{l}
		a \leftarrow b\\
		a \leftarrow c
	\end{array}
	\right\}\{b \leftarrow b\}\sqcup \left\{
	\begin{array}{l}
		a \leftarrow b\\
		a \leftarrow c
	\end{array}
	\right\}\{c \leftarrow c\} = \{a \leftarrow b\}\sqcup \{a \leftarrow c\} = \{a \leftarrow b,c\}
\end{align*} and the programs $\emptyset$ and $\{a \leftarrow b,c\}$ are, of course, not subsumption equivalent.
\end{observation}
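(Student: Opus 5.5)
The statement is an explicit counterexample, so the plan is a direct computation: evaluate both sides for the given Krom programs
\[
K:=\{a\leftarrow b,\; a\leftarrow c\},\qquad L:=\{b\leftarrow b\},\qquad N:=\{c\leftarrow c\},
\]
and then exhibit an interpretation on which the two results act differently.

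\textbf{Left-hand side.} By definition, $L\sqcup N$ only combines rules with a common head. Since $h(L)\cap h(N)=\{b\}\cap\{c\}=\emptyset$, \prettyref{eq:h(P_sqcup_R)} gives $L\sqcup N=\emptyset$. Every rule of $K$ has a nonempty body, so no rule of $K$ finds a selection in $\emptyset$. Hence $K\emptyset=f(K)=\emptyset$ by \prettyref{eq:P0=f(P)}.

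\textbf{Right-hand side.} In $KL$, the rule $a\leftarrow b$ composes with $b\leftarrow b$ to give $a\leftarrow b$. The rule $a\leftarrow c$ has no partner in $L$, so $KL=\{a\leftarrow b\}$. Symmetrically, $KN=\{a\leftarrow c\}$. Both have head $a$, so $KL\sqcup KN=\{a\leftarrow b,c\}$.

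\textbf{Failure of subsumption equivalence.} Take $I:=\{b,c\}$. Then $\emptyset I=\emptyset$, whereas $\{a\leftarrow b,c\}I=\{a\}$, so $T_\emptyset\neq T_{\{a\leftarrow b,c\}}$. Since $K,L,N$ are Krom, this also witnesses the general claim $Q(P\sqcup R)\not\approx QP\sqcup QR$ with $Q:=K$, $P:=L$, $R:=N$.

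There is no real obstacle here; the only step needing care is checking $L\sqcup N=\emptyset$, which depends on body-union discarding rules whose heads do not match.
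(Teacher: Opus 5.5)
Your proposal is correct and follows the paper's own argument: you use the same Krom programs $K=\{a\leftarrow b,\ a\leftarrow c\}$, $L=\{b\leftarrow b\}$, $N=\{c\leftarrow c\}$, and you compute $L\sqcup N=\emptyset$ (the heads differ), $K\emptyset=\emptyset$ and $KL\sqcup KN=\{a\leftarrow b,c\}$ exactly as the paper does. Your witness interpretation $I=\{b,c\}$ just spells out what the paper leaves to ``of course'', namely that $\emptyset$ and $\{a\leftarrow b,c\}$ are not subsumption equivalent.
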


The next result is analogous to $P\cup P = P$:

\begin{corollary}\label{c:P_sqcup_P_equiv_ss_P} Body-union is idempotent modulo subsumption equivalence, that is, for every program $P$,
\begin{align*} 
	P\sqcup P\approx P.
\end{align*}
\end{corollary}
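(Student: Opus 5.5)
The plan is to reduce the claim to the behaviour of body-union on interpretations, using the same pattern as in the proofs of \prettyref{t:DT_cup_ss} and \prettyref{l:sqcup_congruence}. By definition of subsumption equivalence, it suffices to show that
\begin{align*}
	(P\sqcup P)I = PI
\end{align*}
holds for every interpretation $I$.

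Fix an interpretation $I$. Since $I\sqcup I = I$ by the forthcoming identity \prettyref{eq:I_sqcup_I}, case (ii) of \prettyref{t:(P_sqcup_R)Q} applies with $Q:=I$. This gives
\begin{align*}
	(P\sqcup P)I \stackrel{\ref{eq:(P_sqcup_R)Q=PQ_sqcup_RQ}}= PI\sqcup PI.
\end{align*}
Now $PI$ is itself an interpretation, because composing with an interpretation yields facts only, namely $T_P(I)$ by \prettyref{eq:T_P(I)=PI}. Applying \prettyref{eq:I_sqcup_J} (equivalently \prettyref{eq:I_sqcup_I}) to the interpretation $PI$ gives
\begin{align*}
	PI\sqcup PI = PI\cap PI = PI,
\end{align*}
which establishes the claim.

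The only point needing care is the status of the identities for interpretations, $I\sqcup J = I\cap J$ and hence $I\sqcup I = I$. These are forward references in the paper. If they are needed independently, I would verify them directly from the definition. Viewing interpretations as programs consisting of facts, a rule $a\leftarrow(\emptyset\cup\emptyset)$ lies in $I\sqcup J$ exactly when $a\in I$ and $a\in J$. This check is immediate, so I expect no genuine obstacle: the corollary is essentially the special case $P=R$, $Q=I$ of the right-distributivity result, combined with the idempotence of body-union on interpretations.

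A direct alternative would avoid \prettyref{t:(P_sqcup_R)Q} altogether. The rules of $P\sqcup P$ with head $a$ are $a\leftarrow B\cup C$ for $a\leftarrow B, a\leftarrow C\in P$. Such a rule fires on $I$ iff both $B\subseteq I$ and $C\subseteq I$. Therefore $a\in(P\sqcup P)I$ iff some rule $a\leftarrow B\in P$ has $B\subseteq I$: for one direction take $C=B$, and for the other note that $B\subseteq B\cup C\subseteq I$. I would mention this as a sanity check.
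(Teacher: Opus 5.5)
Your proof is correct and is essentially the paper's: for each interpretation $I$, right-distributivity (case (ii) of \prettyref{t:(P_sqcup_R)Q}, applicable since $I\sqcup I=I$) gives $(P\sqcup P)I = PI\sqcup PI$, which reduces to $PI\cap PI = PI$ because $PI$ is an interpretation. Your explicit justification that case (ii) applies, and your direct rule-level check, usefully spell out what the paper's one-line computation leaves implicit.
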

\begin{proof} For every interpretation $I$,
\begin{align*} 
	(P\sqcup P)I \stackrel{\ref{eq:(P_sqcup_R)Q=PQ_sqcup_RQ}}= PI\sqcup PI \stackrel{\ref{eq:I_sqcup_J}}= PI.
\end{align*}
\end{proof}

\begin{theorem} Every model of $P$ is a model of $P\sqcup R$, for any programs $P$ and $R$. That is,
\begin{align*} 
	Mod(P)\subseteq Mod(P\sqcup R).
\end{align*}
\end{theorem}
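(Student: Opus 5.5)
The plan is to prove the inclusion via the operational characterization of models (\prettyref{t:Mod(P)}, equivalently \prettyref{f:PI=I}). It then suffices to show that for every interpretation $I$,
\begin{align*}
	(P\sqcup R)I\subseteq PI,
\end{align*}
that is, $P\sqcup R\lesssim P$. Indeed, if $I$ is a model of $P$, then $PI\subseteq I$, so $(P\sqcup R)I\subseteq PI\subseteq I$, and $I$ is a model of $P\sqcup R$.

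For the key inclusion there are two routes, and I would take the more elementary one. Let $a\in(P\sqcup R)I$. Since $I$ consists of facts only, this means there is a rule $a\leftarrow D\in P\sqcup R$ with $D\subseteq I$. By the definition of body-union, $D=B\cup C$ for some $a\leftarrow B\in P$ and $a\leftarrow C\in R$. Then $B\subseteq B\cup C\subseteq I$, hence $a\in T_P(I)=PI$.

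Alternatively, one could invoke \prettyref{eq:(P_sqcup_R)Q=PQ_sqcup_RQ}, which applies since $I\sqcup I=I$, together with the forthcoming identity \prettyref{eq:I_sqcup_J} (the body-union of interpretations is their intersection). This gives $(P\sqcup R)I=PI\sqcup RI=PI\cap RI\subseteq PI$. I would mention this only as a remark, since the direct argument avoids the forward reference.

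There is no real obstacle; the only point to handle carefully is that composition with an interpretation computes $T_P(I)$, as in \prettyref{eq:T_P(I)=PI}. As a sanity check, the degenerate cases are consistent: if $R=\emptyset$, then $P\sqcup R=\emptyset$ by \prettyref{eq:P_sqcup_0=0}, and every interpretation is a model of the empty program.
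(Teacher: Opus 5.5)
Your proof is correct and follows essentially the same approach as the paper. You use the characterization $I\in Mod(P)\Leftrightarrow PI\subseteq I$ and show $(P\sqcup R)I\subseteq PI$: the paper obtains this inclusion from $(P\sqcup R)I=PI\sqcup RI=PI\cap RI$ (your ``alternative'' route), while your main argument checks it directly from the definition of body-union, which is equally valid and avoids the forward reference.
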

\begin{proof} By \prettyref{t:Mod(P)}, we know that an interpretation $I$ is a model of $P$ iff $T_P(I)\subseteq I$, which by \prettyref{eq:T_P(I)=PI} is equivalent to $PI\subseteq I$. Now by \prettyref{eq:(P_sqcup_R)Q=PQ_sqcup_RQ}, we have $(P\sqcup R)I=PI\sqcup RI \stackrel{\ref{eq:I_sqcup_J}}= PI\cap RI\subseteq I$, which means that $I$ is a model of $P\sqcup R$.
\end{proof}

\begin{proposition} An interpretation $I$ is a supported model of $P\sqcup R$ iff $PI\cap RI=I$, that is,
\begin{align*} 
	Supp(P\sqcup R) = \{I\in \mathbb I\mid PI\cap RI = I\}.
\end{align*}
\end{proposition}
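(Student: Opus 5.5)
The plan is to reduce the statement to the fixed-point characterization of supported models and then compute $(P\sqcup R)I$ using the right-distributivity of composition over body-union. By \prettyref{f:PI=I}, which rests on \prettyref{eq:T_P(I)=PI}, an interpretation $I$ is a supported model of $P\sqcup R$ iff
\begin{align*}
	(P\sqcup R)I = I.
\end{align*}
It therefore suffices to show that $(P\sqcup R)I = PI\cap RI$ for every interpretation $I$. The claimed description of $Supp(P\sqcup R)$ then follows at once.

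For this identity I will use \prettyref{eq:(P_sqcup_R)Q=PQ_sqcup_RQ} with $Q := I$. Case (ii) of \prettyref{t:(P_sqcup_R)Q} applies, because every interpretation satisfies $I\sqcup I = I$ by \prettyref{eq:I_sqcup_I}. This gives
\begin{align*}
	(P\sqcup R)I = PI\sqcup RI.
\end{align*}
Both $PI$ and $RI$ are interpretations, since composing with an interpretation on the right yields only facts. Applying \prettyref{eq:I_sqcup_J}, which says that body-union of interpretations is their intersection, turns the right-hand side into $PI\cap RI$.

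No step here is a real obstacle, because the argument mirrors the proof of the preceding theorem on models of $P\sqcup R$. The only point that needs checking is the legitimacy of using the forward-referenced identities \prettyref{eq:I_sqcup_I} and \prettyref{eq:I_sqcup_J}. Each is immediate from the definition of body-union. For facts $a$ and $a'$, the body-union is defined only when the heads agree and then yields the fact $a$ with empty body. So for interpretations $J,J'$ we get $J\sqcup J' = J\cap J'$, and in particular $J\sqcup J = J$. Since these facts do not depend on the present proposition, there is no circularity.
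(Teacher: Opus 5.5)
Your proof is correct and follows the paper's own argument: you reduce to $(P\sqcup R)I=I$ via the fixed-point characterization of supported models, then apply right-distributivity of composition over body-union (valid because $I\sqcup I=I$), and finally use $J\sqcup J'=J\cap J'$ for interpretations. Your explicit check that case (ii) of that distributivity result applies is a detail the paper leaves implicit at this point, though it uses the same justification elsewhere.
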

\begin{proof} By \prettyref{f:PI=I}, an interpretation $I$ is a supported model of $P\sqcup R$ iff $(P\sqcup R)I \stackrel{\ref{eq:(P_sqcup_R)Q=PQ_sqcup_RQ}}= PI\sqcup RI \stackrel{\ref{eq:I_sqcup_J}}= PI\cap RI = I$.
\end{proof}

The next result is analogous to
\begin{align*} 
	P\cup R = P \quad\Leftrightarrow\quad R\subseteq P.
\end{align*}

\begin{proposition}\label{p:P_sqcup_R_equiv_ss_P} For any programs $P$ and $R$,
\begin{align*} 
	P\sqcup R \approx P \quad\Leftrightarrow\quad P\lesssim R.
\end{align*}
\end{proposition}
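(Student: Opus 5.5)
The plan is to reduce the statement to a pointwise statement about interpretations, exactly as in the proofs of \prettyref{t:DT_cup_ss} and \prettyref{c:P_sqcup_P_equiv_ss_P}. Since every interpretation $I$ satisfies $I\sqcup I = I$, condition (ii) of \prettyref{t:(P_sqcup_R)Q} applies with $Q := I$. This gives, for every interpretation $I$,
\begin{align*}
	(P\sqcup R)I = PI\sqcup RI = PI\cap RI,
\end{align*}
where the last step uses the (forward-referenced) identity \prettyref{eq:I_sqcup_J} that body-union of interpretations is their intersection. By definition, $P\sqcup R\approx P$ means $(P\sqcup R)I = PI$ for every interpretation $I$. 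Hence the left-hand side of the claim is equivalent to $PI\cap RI = PI$ for all $I$.

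Next I would use the elementary set identity $X\cap Y = X \Leftrightarrow X\subseteq Y$ with $X := PI$ and $Y := RI$. Then $PI\cap RI = PI$ holds for all $I$ if and only if $PI\subseteq RI$ holds for all $I$, which is precisely the definition of $P\lesssim R$. Both directions come out of this single chain of equivalences, so no separate argument for $\Leftarrow$ is needed.

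There is no real obstacle. The only point needing care is justifying the equality $(P\sqcup R)I = PI\sqcup RI$, as opposed to the mere inclusion \prettyref{eq:(P_sqcup_R)Q_subseteq}. This is settled by noting that $I\sqcup I = I$ for interpretations, so that case (ii) of \prettyref{t:(P_sqcup_R)Q} applies, as already done in \prettyref{l:sqcup_congruence}. One also needs \prettyref{eq:I_sqcup_J}, which holds directly from the definition: facts $a$ of $I$ and $J$ combine to the fact $a$ exactly when $a\in I\cap J$.
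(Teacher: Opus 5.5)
Your proposal is correct and essentially the paper's proof. Both rest on the pointwise identity $(P\sqcup R)I = PI\sqcup RI = PI\cap RI$, justified via $I\sqcup I = I$, together with $X\cap Y = X \Leftrightarrow X\subseteq Y$; the only difference is that the paper writes the two directions separately while you fold them into one chain of equivalences.
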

\begin{proof} $(\Rightarrow)$ For any interpretation $I$, we have
\begin{align*} 
	PI = (P\sqcup R)I = PI\sqcup RI = PI\cap RI\subseteq RI,
\end{align*} which implies $P\lesssim R$. 

$(\Leftarrow)$ From the assumption $PI\subseteq RI$, for every interpretation $I$, we deduce
\begin{align*} 
	(P\sqcup R)I = PI\sqcup RI = PI\cap RI = PI,
\end{align*} which implies $P\sqcup R \approx P$.
\end{proof}

Reducing a program to a restricted alphabet is a fundamental operation on programs which we will often use in the rest of the paper and in this section we analyze the interactions between body-union and reducts.

We can simplify the computation of the body-union using head-reducts as
\begin{align*} 
	P\sqcup R &= {^{h(R)}}P\sqcup {^{h(P)}}R.
\end{align*}

Notice that we have
\begin{align*} 
	{^{f(R)}}P\cup {^{f(P)}}R\subseteq P\sqcup R.
\end{align*}

Computing the body of $P\sqcup R$ is simple using head-reducts:
\begin{align} 
	\label{eq:b(P_sqcup_R)} b(P\sqcup R) &= b({^{h(R)}}P)\cup b({^{h(P)}}R).
\end{align}

Interestingly, the left reduct can be represented via body-union as (compare with \prettyref{eq:P_boxminus_I})
\begin{align}
	\label{eq:I_sqcup_P} I\sqcup P &= P\sqcup I = {^I}P.
\end{align} This implies that for any interpretations $I$ and $J$,
\begin{align} 
	\label{eq:I_sqcup_J} I\sqcup J & \stackrel{\ref{eq:I_sqcup_P}}= {^J}I \stackrel{\ref{eq:^JI}}= I\cap J,
\end{align} and therefore
\begin{align} 
	\label{eq:I_sqcup_I} I\sqcup I = I.
\end{align} Moreover, it implies that computing the proper rules in $P\sqcup R$ can be written as
\begin{align} 
	\label{eq:p(P_sqcup_R)-2} p(P\sqcup R) \stackrel{\ref{eq:p(P_sqcup_R)},\ref{eq:I_sqcup_P}}= (p(P)\sqcup p(R))\cup {^{f(R)}}p(P)\cup {^{f(P)}}p(R).
\end{align}

\begin{proposition}\label{p:body_reduct_sqcup} The body-reduct operator is an endomorphism on the monoid $(\mathbb P_A,\sqcup,A)$, that is,
\begin{align*} 
    (P\sqcup R)^I &= P^I\sqcup R^I\\
    A^I & \stackrel{\ref{eq:I^J}}= A.
\end{align*} Moreover, the head-reduct operator is an endomorphism on the semigroup $(\mathbb P_A,\sqcup)$, that is,
\begin{align} 
	\label{eq:^I_P_sqcup_R} ^I(P\sqcup R) = {^I}P\sqcup {^I}R.
\end{align}
\end{proposition}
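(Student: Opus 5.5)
The plan is a direct rule-wise verification by double inclusion, unfolding the definition of body-union together with those of the body-reduct and the head-reduct. No earlier algebraic laws are needed beyond the definitions, except that $A^I=A$ is \prettyref{eq:I^J}.

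For the body-reduct, I will show $(P\sqcup R)^I = P^I\sqcup R^I$. A rule lies in $(P\sqcup R)^I$ iff it has the form $a\leftarrow(B\cup C)$ with $a\leftarrow B\in P$, $a\leftarrow C\in R$, and $B\cup C\subseteq I$. The key observation is that
\[
B\cup C\subseteq I \quad\Leftrightarrow\quad B\subseteq I \text{ and } C\subseteq I.
\]
The latter condition says exactly that $a\leftarrow B\in P^I$ and $a\leftarrow C\in R^I$, that is, the rule lies in $P^I\sqcup R^I$.

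There is one subtle point to watch: a rule $a\leftarrow D$ may have several decompositions $D=B\cup C$. Membership, however, is existential over decompositions, and the equivalence above holds for each fixed decomposition. So both inclusions hold witness-by-witness. For the unit, $A^I=A$, since facts have empty bodies and therefore always survive the reduct; this is \prettyref{eq:I^J}.

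For the head-reduct, I will show ${^I}(P\sqcup R) = {^I}P\sqcup {^I}R$. Every rule of $P\sqcup R$ arises from a pair of rules with the same head $a$, and that shared head is also the head of the resulting rule. Hence the condition $a\in I$ holds for the combined rule iff it holds for both constituents. Thus a rule $a\leftarrow(B\cup C)$ with $a\in I$ has witnesses $a\leftarrow B\in{^I}P$ and $a\leftarrow C\in{^I}R$, and conversely.

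The head-reduct is only a semigroup endomorphism, and the reason is that ${^I}A = A\cap I \neq A$ in general. This is worth a remark but requires no proof. The only step needing care is the witness argument described above, and it is routine.
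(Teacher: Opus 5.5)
Your proof is correct. For $(P\sqcup R)^I = P^I\sqcup R^I$ and $A^I=A$ you do what the paper dismisses as ``trivial''. Your witness-by-witness use of $B\cup C\subseteq I \Leftrightarrow (B\subseteq I \text{ and } C\subseteq I)$ is exactly the content of that claim.

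For the head-reduct identity your route differs. You argue purely from the definitions: the combined rule and both constituent rules share the head $a$, so the condition $a\in I$ transfers in both directions. The paper argues algebraically instead. It rewrites ${^I}P = I\sqcup P$ by \eqref{eq:I_sqcup_P} and uses the idempotence $I\sqcup I = I$, a consequence of \eqref{eq:I_sqcup_J}. It then rearranges with associativity and commutativity of $\sqcup$ (Fact~\ref{f:sqcup_monoid}) to get ${^I}(P\sqcup R) = I\sqcup P\sqcup R = (I\sqcup P)\sqcup (I\sqcup R) = {^I}P\sqcup {^I}R$.

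Your version is self-contained and needs no earlier identities. The paper's version shows that the head-reduct is simply body-union with an idempotent element of a commutative monoid. The semigroup-endomorphism property is then an instance of the general fact $e(xy) = (ee)(xy) = (ex)(ey)$. It also makes the failure at the unit transparent: $I\sqcup A = I\neq A$ whenever $I\neq A$.
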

\begin{proof} The first two identities hold trivially. To prove the third, we compute
\begin{align*} 
	{^I}(P\sqcup R) & \stackrel{\ref{eq:I_sqcup_P}}= I\sqcup P\sqcup R \stackrel{\ref{eq:I_sqcup_J}}= I\sqcup P\sqcup I\sqcup R \stackrel{\ref{eq:I_sqcup_P}}= {^I}P\sqcup {^I}R.
\end{align*}
\end{proof}

\begin{remark} Notice that the head-reduct operator fails to be an endomorphism on the monoid $(\mathbb P_A,\sqcup,A)$ since, in case $I\neq A$, we have $^IA \stackrel{\ref{eq:^JI}}= I\cap A = I \neq A$.
\end{remark}

\begin{proposition} For any programs $P$ and $R$ and interpretation $I$,
\begin{align} 
	\label{eq:P_cup_I_sqcup_R_cup_I} (P\cup I)\sqcup (R\cup I) = I\cup {^I}P\cup {^I}R\cup (P\sqcup R).
\end{align}
\end{proposition}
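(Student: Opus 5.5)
We expand the left-hand side with the distributivity laws of \prettyref{t:cup_sqcup}. Applying \prettyref{eq:P_cup_R_sqcup_Q} and then \prettyref{eq:Q_sqcup_P_cup_R} gives
\begin{align*}
	(P\cup I)\sqcup (R\cup I) = (P\sqcup R)\cup (P\sqcup I)\cup (I\sqcup R)\cup (I\sqcup I).
\end{align*}
This reduces the claim to identifying each of the four terms. No inclusion argument is needed because the semiring laws give an equality.

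Next we simplify the three terms involving $I$. By \prettyref{eq:I_sqcup_P}, we have $P\sqcup I = {^I}P$ and $I\sqcup R = {^I}R$. By \prettyref{eq:I_sqcup_I}, we have $I\sqcup I = I$. Substituting these and using the commutativity and associativity of $\cup$ gives the right-hand side $I\cup {^I}P\cup {^I}R\cup (P\sqcup R)$.

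The only step needing care is that \prettyref{eq:I_sqcup_P} and \prettyref{eq:I_sqcup_I} are used here as forward references. We check them directly from the definition of body-union. An interpretation $I$ is the set of facts $\{a\leftarrow\emptyset\mid a\in I\}$. Hence $I\sqcup P=\{a\leftarrow(\emptyset\cup B)\mid a\in I,\ a\leftarrow B\in P\}={^I}P$. Taking $P=I$ gives ${^I}I=I$. This settles the proof.
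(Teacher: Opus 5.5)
Your proof is correct and is essentially the paper's argument. The paper expands $(P\cup I)\sqcup(R\cup I)$ directly from the definition into the same four cases ($I$ with $I$, $P$ with $I$, $I$ with $R$, $P$ with $R$) and then applies $I\sqcup P = P\sqcup I = {^I}P$; you obtain that four-term expansion from the distributivity laws of Theorem~\ref{t:cup_sqcup} instead, which amounts to the same thing. One small point: the identities $I\sqcup P={^I}P$ and $I\sqcup I=I$ are stated before this proposition in the paper, so they are not forward references, though your direct check does no harm.
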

\begin{proof} 
\begin{align*} 
	(P\cup I)\sqcup (R\cup I) &= \{a \leftarrow B\cup C\mid a \leftarrow B\in P\cup I,\;a \leftarrow C\in R\cup I\}\\
		&= \{a \mid a\in I\}\cup \{a \leftarrow B\mid a \leftarrow B\in P,\; a\in I\}\\
			&\qquad\cup \{a \leftarrow C\mid a\in I,\; a \leftarrow C\in R\}\cup \{a \leftarrow B\cup C\mid a \leftarrow B\in P,\;a \leftarrow C\in R\}\\
		&= I\cup (I\sqcup P)\cup (I\sqcup R)\cup (P\sqcup R)\\
		& \stackrel{\ref{eq:I_sqcup_P}}= I\cup {^I}P\cup {^I}R\cup (P\sqcup R).
\end{align*}
\end{proof}

\section{Body-intersection}\label{sec:Body-intersection}

We now turn to our next set-like operation which is analogous to set intersection:

\begin{definition} Define the \emph{body-intersection} of $P$ and $R$ by
\begin{align*} 
	P\sqcap R &:= \{a \leftarrow (B\cap C)\mid a \leftarrow B\in P,\; a \leftarrow C\in R\}.
\end{align*}
\end{definition} 

Body-intersection is not idempotent even on Krom programs as is witnessed by the following counterexample:
\begin{align*} 
	\left\{
	\begin{array}{l}
		a \leftarrow b\\
		a \leftarrow c
	\end{array}
	\right\} \sqcap \left\{
	\begin{array}{l}
		a \leftarrow b\\
		a \leftarrow c
	\end{array}
	\right\} &= \left\{
	\begin{array}{l}
		a\\
		a \leftarrow b\\
		a \leftarrow c
	\end{array}
	\right\}.
\end{align*} However, body-intersection coincides with intersection on interpretations,
\begin{align} 
	\label{eq:I_sqcap_J} I\sqcap J = I\cap J,
\end{align} which means that it is idempotent on interpretations,
\begin{align*} 
	I\sqcap I & \stackrel{\ref{eq:I_sqcap_J}}= I\cap I = I.
\end{align*}

\begin{fact}\label{f:sqcap_monoid} The structure $(\mathbb P_A,\sqcap,A^\oplus)$ is a commutative monoid, that is,
\begin{align} 
	(P\sqcap Q)\sqcap R &= P\sqcap (Q\sqcap R)\\
	\label{eq:P_sqcap_A^oplus} P\sqcap A^\oplus &= A^\oplus\sqcap P = P\\
	P\sqcap R &= R\sqcap P.
\end{align}
\end{fact}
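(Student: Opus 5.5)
Since body-intersection is defined rule-wise on pairs of rules with the same head, I will reduce each of the three laws to the corresponding law for set intersection on bodies, that is, to the fact that $(2^A,\cap,A)$ is a commutative monoid. For a program $P$ and an atom $a$, write $P_a := \{B\mid a\leftarrow B\in P\}$. Then by definition
\begin{align*}
	(P\sqcap R)_a = \{B\cap C\mid B\in P_a,\; C\in R_a\},
\end{align*}
and $P\sqcap R$ is determined by the family $((P\sqcap R)_a)_{a\in A}$. So it suffices to verify each identity head by head, on the families of bodies $P_a$.

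\emph{Commutativity.} We have $\{B\cap C\mid B\in P_a, C\in R_a\}=\{C\cap B\mid C\in R_a, B\in P_a\}$ because $B\cap C=C\cap B$. Hence $(P\sqcap R)_a=(R\sqcap P)_a$ for every $a$, and the programs coincide.

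\emph{Associativity.} Unfolding the definition twice gives
\begin{align*}
	((P\sqcap Q)\sqcap R)_a = \{(B\cap C)\cap D\mid B\in P_a,\; C\in Q_a,\; D\in R_a\}.
\end{align*}
Here I use that every element of $(P\sqcap Q)_a$ has the form $B\cap C$ for some $B\in P_a$ and $C\in Q_a$, and conversely. The same unfolding applies to $P\sqcap(Q\sqcap R)$. The two sets then agree by associativity of $\cap$. The only point needing care is that the intermediate program is a set, so two different pairs $(B,C)$ may yield the same body. This does no harm, because the result is again defined as a set of images.

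\emph{Neutral element.} By definition $A^\oplus_a=\{A\}$ for every $a\in A$, since the body of the rule for $a$ is $\{a\}\cup A=A$. Therefore
\begin{align*}
	(P\sqcap A^\oplus)_a=\{B\cap A\mid B\in P_a\}=P_a,
\end{align*}
because $B\subseteq A$. Unlike $A$ in the body-union case, $A^\oplus$ contains a rule for \emph{every} head, so no head of $P$ is lost. This is why $A^\oplus$ rather than $A$ is the unit here. The other side follows from commutativity.

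I do not expect a genuine obstacle. The only step requiring attention is the unit law: one has to check that $A^\oplus$ has exactly one rule per head and that this rule has the full body $A$. Otherwise heads of $P$ would disappear, just as $A\sqcap P$ keeps only the heads of $P$ and turns all their rules into facts.
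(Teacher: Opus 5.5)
Your argument is correct and is exactly the direct head-by-head verification from the rule-wise definition, which the paper leaves implicit by stating this result as a Fact without proof. One side remark is inaccurate, although it does not affect the proof: $A$ also contains exactly one rule for every head $a\in A$ (the fact $a$), so $A^\oplus$ differs from $A$ not in its heads but in having full bodies $A$ instead of empty ones, which is precisely why $A\sqcap P = h(P)$ while $A^\oplus\sqcap P = P$.
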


We have the following interactions between body-intersection and and the head operator:
\begin{align*} 
	h(P\sqcap R) &= h(P)\cap h(R).
\end{align*} This means that the head operator $h$ is a homomorphism from $( \mathbb P, \sqcap)$ to $( \mathbb P, \cap)$.

Unfortunately, computing $f(P\sqcap R)$ is not as simple as computing $f(P\sqcup R)$ (see \prettyref{eq:f(P_sqcup_R)}) and before we give a formula in \prettyref{eq:f(P_sqcap_R)} below, we first introduce the following auxiliary construct: 

\begin{definition} Define the \emph{bowtie} of $P$ and $R$ by
\begin{align*} 
	P\bowtie R &:= (P\ltimes R)\cup (R\ltimes P).
\end{align*} where
\begin{align*} 
	P\ltimes R &:= \{a \leftarrow B\in P \mid \text{there exists }a \leftarrow C\in R\text{ with } B\cap C = \emptyset\}.
\end{align*}
\end{definition}


We can now compute the facts in $P\sqcap R$ as
\begin{align}
	\label{eq:f(P_sqcap_R)} f(P\sqcap R) = (f(P)\cap f(R))\cup h(P\bowtie R) \stackrel{\ref{eq:f(P_sqcup_R)}}= f(P\sqcup R)\cup h(P\bowtie R).
\end{align} 

\begin{observation} There are Krom programs $K,L,M$ such that
\begin{align*} 
    M(K\sqcap L) \not\approx MK\sqcap ML,
\end{align*} which means that in general we have
\begin{align*} 
	Q(P\sqcap R) \neq QP\sqcap QR.
\end{align*} This is analogous to what we have encountered in \prettyref{o:K(L_sqcup_N)} and a similar counterexample works here as well: we have
\begin{align*} 
	\left\{
	\begin{array}{l}
		a \leftarrow b\\
		a \leftarrow c
	\end{array}
	\right\}(\{b \leftarrow b\}\sqcap \{c \leftarrow c\}) = \emptyset
\end{align*} whereas
\begin{align*} 
	\left\{
	\begin{array}{l}
		a \leftarrow b\\
		a \leftarrow c
	\end{array}
	\right\}\{b \leftarrow b\}\sqcap \left\{
	\begin{array}{l}
		a \leftarrow b\\
		a \leftarrow c
	\end{array}
	\right\}\{c \leftarrow c\} = \{a \leftarrow b\}\sqcap \{a \leftarrow c\} = \{a\}.
\end{align*}

Moreover, there are minimalist programs $M,N$ and an interpretation $I$ such that
\begin{align*} 
	(M\sqcap N)I\neq MI\sqcap NI,
\end{align*} which means that in general we have\footnote{Compare to \prettyref{t:(P_sqcup_R)Q}.}
\begin{align*} 
	(P\sqcap R)Q \not\approx PQ\sqcap RQ.
\end{align*} Compare this with the positive result in \prettyref{t:(P_sqcup_R)Q}! To see why, consider the counterexample
\begin{align*} 
	(\{a \leftarrow b,c\}\sqcap \{a \leftarrow c,d\})\{c\} = \{a\} \neq \{a \leftarrow b,c\}\{c\}\sqcap \{a \leftarrow c,d\}\{c\} = \emptyset.
\end{align*}
\end{observation}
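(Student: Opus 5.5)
The plan is to verify the two counterexamples displayed in the observation by computing both sides of each claimed non-identity directly from the definitions. Each side is then reduced to an interpretation, or is compared with one, so that subsumption inequivalence can be read off by evaluating at a single interpretation.

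\emph{First claim.} Take the Krom programs $M:=\{a\leftarrow b,\ a\leftarrow c\}$, $K:=\{b\leftarrow b\}$ and $L:=\{c\leftarrow c\}$.
\begin{itemize}
\item Since $h(K\sqcap L)=h(K)\cap h(L)=\emptyset$, we have $K\sqcap L=\emptyset$.
\item In $M\emptyset$ every rule of $M$ is proper and its body atom has no defining rule in $\emptyset$, so $M\emptyset=f(M)=\emptyset$ by \prettyref{eq:P0=f(P)}.
\item On the other side, $MK=\{a\leftarrow b\}$, because the rule $a\leftarrow c$ finds no rule with head $c$ in $K$. Symmetrically, $ML=\{a\leftarrow c\}$.
\item Their body-intersection is $\{a\leftarrow(\{b\}\cap\{c\})\}=\{a\}$.
\end{itemize}
To see that $\emptyset\not\approx\{a\}$, evaluate both at the interpretation $\emptyset$: we get $\emptyset\emptyset=\emptyset$ but $\{a\}\emptyset=\{a\}$. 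This settles the general statement $Q(P\sqcap R)\neq QP\sqcap QR$ as well.

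\emph{Second claim.} Take the minimalist programs $M:=\{a\leftarrow b,c\}$ and $N:=\{a\leftarrow c,d\}$ and the interpretation $I:=\{c\}$.
\begin{itemize}
\item We have $M\sqcap N=\{a\leftarrow c\}$.
\item Composing with $I$ gives $\{a\leftarrow c\}\{c\}=\{a\}$.
\item On the other side, $MI=\emptyset$, since $b\notin I$ and so no $I$-selection exists for the body $\{b,c\}$. Likewise $NI=\emptyset$, since $d\notin I$.
\item Hence $MI\sqcap NI=\emptyset\sqcap\emptyset=\emptyset\neq\{a\}$.
\end{itemize}
To derive the failure of $(P\sqcap R)Q\approx PQ\sqcap RQ$, put $Q:=I$. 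Both sides are then interpretations, and subsumption equivalent interpretations coincide, since $J\emptyset=J$ for every interpretation $J$. So the inequality above already witnesses non-equivalence.

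There is no real obstacle here. The only points needing care are two bookkeeping checks: that a composition is empty whenever some body atom has no defining rule in the right factor, and that body-intersection only pairs rules with the same head.
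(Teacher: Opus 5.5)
Your proposal is correct and follows the paper's argument: it checks the same two counterexamples by computing directly from the definitions of composition and body-intersection, and each computation checks out. The only addition is that you make explicit why the displayed inequalities rule out subsumption equivalence (by evaluating at $\emptyset$, respectively by noting that subsumption equivalent interpretations coincide), a step the paper leaves implicit.
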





We shall now analyze the interaction between body-intersection and reducts. Our first observation is as follows:

\begin{observation} In contrast to \prettyref{p:body_reduct_sqcup}, the body-reduct operator is in general \textit{not} an endomorphism on the monoid $(\mathbb P_A,\sqcap,A^\oplus)$ since, in general, we have
\begin{align*} 
	(P\sqcap R)^I \not\subseteq P^I\sqcap R^I
\end{align*} as is witnessed by the following counterexample:
\begin{align*} 
	(\{a \leftarrow b\}\sqcap \{a \leftarrow c\})^{\{b\}} = \{a\}^{\{b\}} \stackrel{\ref{eq:I^J}}= \{a\}
\end{align*} whereas
\begin{align*} 
	\{a \leftarrow b\}^{\{b\}}\sqcap \{a \leftarrow c\}^{\{b\}} = \{a \leftarrow b\}\sqcap \emptyset = \emptyset.
\end{align*} However, we do have
\begin{align*}
	P^I\sqcap R^I\subseteq (P\sqcap R)^I
\end{align*} since
\begin{align*} 
	P^I\sqcap R^I = \{a \leftarrow B\cap C\mid B,C\subseteq I\}\subseteq \{a \leftarrow B\cap C\mid B\cap C\subseteq I\} = (P\sqcap R)^I.
\end{align*}
\end{observation}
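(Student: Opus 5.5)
The statement is the final Observation: the body-reduct is not an endomorphism for $\sqcap$, yet $P^I\sqcap R^I\subseteq (P\sqcap R)^I$ holds in general. The plan has two parts, a counterexample and an inclusion proof.

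For the failure of the reverse inclusion, I take one explicit instance. Let $P=\{a\leftarrow b\}$, $R=\{a\leftarrow c\}$ and $I=\{b\}$. First compute $P\sqcap R$ directly from the definition: the bodies $\{b\}$ and $\{c\}$ are disjoint, so $P\sqcap R=\{a\}$. A fact has body $\emptyset\subseteq I$, so it survives every body-reduct, and \prettyref{eq:I^J} gives $(P\sqcap R)^I=\{a\}$. On the other side, $P^{\{b\}}=P$ because $\{b\}\subseteq I$, while $R^{\{b\}}=\emptyset$ because $c\notin I$. By the definition of $\sqcap$, any body-intersection with $\emptyset$ is empty, since no pair of rules exists. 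Hence $P^I\sqcap R^I=\emptyset\not\supseteq\{a\}$.

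For the inclusion $P^I\sqcap R^I\subseteq (P\sqcap R)^I$, I argue rule-wise.

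1. Take an arbitrary rule of $P^I\sqcap R^I$. By definition it has the form $a\leftarrow (B\cap C)$ with $a\leftarrow B\in P^I$ and $a\leftarrow C\in R^I$.
2. Unfolding the body-reduct, $a\leftarrow B\in P$ with $B\subseteq I$, and $a\leftarrow C\in R$ with $C\subseteq I$.
3. The same pair of rules witnesses $a\leftarrow(B\cap C)\in P\sqcap R$.
4. Since $B\cap C\subseteq B\subseteq I$, the rule passes the body-reduct, so $a\leftarrow(B\cap C)\in(P\sqcap R)^I$.

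Only $B\subseteq I$ is actually used, so the hypothesis on $C$ is superfluous.

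No step is a real obstacle. The one point needing care is explaining why the inclusion is strict: $P\sqcap R$ can produce rules whose body $B\cap C$ lies in $I$ even though neither $B$ nor $C$ does, and the reduct on the left has already discarded such source rules. The counterexample exhibits exactly this phenomenon.
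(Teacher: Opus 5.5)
Your proposal is correct and follows the paper's argument: you use the same counterexample ($P=\{a\leftarrow b\}$, $R=\{a\leftarrow c\}$, $I=\{b\}$), and your rule-wise inclusion proof is the paper's set-builder comparison written out step by step. Your remark that only $B\subseteq I$ is used is right, and it even gives $P^I\sqcap R\subseteq (P\sqcap R)^I$; however, your closing sentence should say that $B$ and $C$ need not \emph{both} lie in $I$, not that \emph{neither} does, since in your counterexample $B=\{b\}\subseteq I$.
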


\begin{proposition} The head-reduct operator is an endomorphism on the semigroup $(\mathbb P_A,\sqcap)$, that is,
\begin{align*} 
	{^I}(P\sqcap R) &= {^I}P\sqcap {^I}R.
\end{align*}
\end{proposition}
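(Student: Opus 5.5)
The plan is to unfold both sides rule-wise and compare the resulting sets. This mirrors the calculation behind \prettyref{eq:^I_P_sqcup_R}, but it cannot simply be copied. That argument rewrote the head-reduct as a body-union with $I$ via \prettyref{eq:I_sqcup_P}, whereas for body-intersection we have $I\sqcap P\neq {^I}P$ in general: intersecting with a fact empties the body. So we argue directly from the definitions of head-reduct and body-intersection.

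First, by the definition of body-intersection, a rule lies in ${^I}(P\sqcap R)$ iff it has the form $a\leftarrow(B\cap C)$ for some $a\leftarrow B\in P$ and $a\leftarrow C\in R$ with $a\in I$. The only condition the head-reduct adds is on the head $a$. Body-intersection preserves heads (both combined rules share the head $a$, and so does the result), so the condition $a\in I$ can be pushed onto the two factors. This gives $a\leftarrow B\in{^I}P$ and $a\leftarrow C\in{^I}R$, and hence $a\leftarrow(B\cap C)\in {^I}P\sqcap{^I}R$.

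Conversely, a rule of ${^I}P\sqcap{^I}R$ arises from $a\leftarrow B\in P$ and $a\leftarrow C\in R$ with $a\in I$. It is therefore the rule $a\leftarrow(B\cap C)\in P\sqcap R$, and its head lies in $I$, so it belongs to ${^I}(P\sqcap R)$.

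There is no real obstacle here. The one point that needs care is that the equality is an equality of sets of rules, while distinct pairs of rules may produce the same combined rule. This is harmless: both sides are described by the same set-builder expression, namely all $a\leftarrow(B\cap C)$ ranging over pairs $a\leftarrow B\in P$, $a\leftarrow C\in R$ with $a\in I$, so the two sets coincide. The result also yields the head-reduct analogue of $h(P\sqcap R)=h(P)\cap h(R)$ restricted to $I$.
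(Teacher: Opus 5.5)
Your proof is correct. Both sides unfold to the same set $\{a\leftarrow(B\cap C)\mid a\leftarrow B\in P,\ a\leftarrow C\in R,\ a\in I\}$, and you rightly note that distinct pairs yielding the same rule do not matter.

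The paper argues algebraically instead of elementwise. It writes ${^I}(P\sqcap R)=I\sqcup(P\sqcap R)$ by \eqref{eq:I_sqcup_P}. It then applies the distributivity law \eqref{eq:(P_sqcap_R)_sqcup_M}, namely $(P\sqcap R)\sqcup M=(P\sqcup M)\sqcap(R\sqcup M)$ for minimalist $M$, together with commutativity of $\sqcup$. This law applies because an interpretation $I$ contains at most one rule (a fact) per head and is therefore minimalist. Converting back with $I\sqcup P={^I}P$ and $I\sqcup R={^I}R$ finishes the proof.

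So your remark that the body-union representation of the head-reduct cannot be reused is only half right. You indeed cannot pass to $I\sqcap P$. But you can keep $I\sqcup{}$ and replace the idempotence step $I=I\sqcup I$ from the $\sqcup$-case by the minimalist distributivity law of Theorem~\ref{t:sqcup_sqcap}.

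Your direct argument is more elementary and self-contained, and needs no forward reference. The paper's version is a one-liner given Theorem~\ref{t:sqcup_sqcap} and shows the proposition is an instance of a general law. The same pattern is reused for the body-subtraction analogue, via ${^I}(P\sqcap R^\copyright)$.
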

\begin{proof} We have
\begin{align*}
	{^I}(P\sqcap R) & \stackrel{\ref{eq:I_sqcup_P}}= I\sqcup (P\sqcap R) \stackrel{\ref{eq:(P_sqcap_R)_sqcup_M}}= (I\sqcup P)\sqcap (I\sqcup R) \stackrel{\ref{eq:I_sqcup_P}}= {^I}P\sqcap {^I}R,
\end{align*} where the forward reference to \prettyref{eq:(P_sqcap_R)_sqcup_M} is non-circular and thus harmless.
\end{proof}

\begin{proposition} We have the following formula for the computation of the head-reduct:
\begin{align} 
	\label{eq:^IP=^IA^oplus_sqcap_P} {^I}P &= {^I}(A^\oplus)\sqcap P.
\end{align}
\end{proposition}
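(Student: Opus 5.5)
We prove \prettyref{eq:^IP=^IA^oplus_sqcap_P} directly from the definitions of body-intersection and the head-reduct, by a rule-wise computation. Two ingredients suffice: the definition of $A^\oplus$ as $\{a\leftarrow A\mid a\in A\}$, and the identity \prettyref{eq:P_sqcap_A^oplus} stating that $A^\oplus$ is the neutral element of $\sqcap$.

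The first step is to compute the head-reduct of $A^\oplus$. By definition of the head-reduct,
\begin{align*}
	{^I}(A^\oplus) = \{a\leftarrow A\mid a\in I\}.
\end{align*}
So ${^I}(A^\oplus)$ is the minimalist program containing exactly one rule for each atom $a\in I$, and that rule has the full alphabet $A$ as its body.

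The second step unfolds the body-intersection. We get
\begin{align*}
	{^I}(A^\oplus)\sqcap P = \{a\leftarrow (A\cap B)\mid a\in I,\; a\leftarrow B\in P\}.
\end{align*}
Since every body $B$ is a subset of $A$, we have $A\cap B=B$. The right-hand side is therefore $\{a\leftarrow B\in P\mid a\in I\}$, which is ${^I}P$ by definition.

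A more algebraic route is also available. One could use ${^I}(A^\oplus)={^I}1\circ A^\oplus$ or ${^I}(A^\oplus) = I\sqcup A^\oplus$ from \prettyref{eq:I_sqcup_P}, together with the head-reduct endomorphism property for $\sqcap$ and \prettyref{eq:P_sqcap_A^oplus}. That would give ${^I}(A^\oplus)\sqcap P = {^I}(A^\oplus\sqcap P)$ only if we also knew ${^I}P = {^I}(A^\oplus)\sqcap {^I}P$, so the route is circular. We therefore prefer the direct computation.

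There is no real obstacle here. The only point needing care is that a rule of $P$ whose head lies outside $I$ has no partner in ${^I}(A^\oplus)$ and so produces nothing. Rules with heads in $I$ are matched with exactly one rule of ${^I}(A^\oplus)$, because ${^I}(A^\oplus)$ is minimalist, and are therefore reproduced unchanged.
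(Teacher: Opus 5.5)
Your direct computation is correct and is exactly the paper's proof: unfold ${^I}(A^\oplus)=\{a\leftarrow A\mid a\in I\}$, body-intersect with $P$, and use $B\cap A=B$ to recover ${^I}P$. You can drop the aside on the ``algebraic route''; it is also not actually circular, since the endomorphism property plus neutrality give ${^I}P={^I}(A^\oplus)\sqcap{^I}P$, and the remaining step ${^I}(A^\oplus)\sqcap{^I}P={^I}(A^\oplus)\sqcap P$ holds because $\sqcap$ only pairs rules with equal heads.
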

\begin{proof}
\begin{align*} 
	{^I}(A^\oplus)\sqcap P = \{a \leftarrow A \mid a\in I\}\sqcap P = \{a \leftarrow (B\cap A) \mid a \leftarrow B\in P,\, a\in I\} = \{a \leftarrow B\in P\mid a\in I\} = {^I}P.
\end{align*}
\end{proof}

\section{Body-complement}\label{sec:Body-complement}

In analogy to the complement of sets, we introduce the following unary operation on programs:

\begin{definition} We define the \emph{body-complement} of $P$ by
\begin{align*} 
	P^\copyright := \left\{a \leftarrow (A- B) \;\middle|\; a \leftarrow B\in P\right\}.
\end{align*}
\end{definition}

\begin{example} Let $A := \{a,b,c\}$. We then have, for instance,
\begin{align*} 
	\emptyset^\copyright = \emptyset,\qquad \{a\}^\copyright = \{a \leftarrow a,b,c\},\qquad \left\{
	\begin{array}{l}
		a\\
		b \leftarrow a\\
		c \leftarrow a,b
	\end{array}
	\right\}^\copyright = \left\{
	\begin{array}{l}
		a \leftarrow a,b,c\\
		b \leftarrow b,c\\
		c \leftarrow c
	\end{array}
	\right\}.
\end{align*}
\end{example}

Body-complement is involutive,
\begin{align} 
	\label{eq:P^copyright^copyright} (P^\copyright)^\copyright = P.
\end{align} Moreover, it satisfies the following identities (among others):
\begin{align} 
    h(P^\copyright) &= h(P)\\
    b(P^\copyright) &= A- \bigcap_{a \leftarrow B\in P} B\\
    f(P^\copyright) &= h(P\cap A^\oplus) = \{a\in A\mid a \leftarrow A\in P\}\\
    \label{eq:A^copyright=A^oplus} A^\copyright &= A^\oplus\\
    \label{eq:I^copyright} I^\copyright &= {^I}(A^\oplus).
\end{align} 


\begin{fact} Body-complement is an endomorphism on the boolean algebra $(\mathbb P_A,\cup,\cap,{}^c,\emptyset,F_A)$, that is,
\begin{align}
	\label{eq:P_cup_R^copyright} (P\cup R)^\copyright &= P^\copyright\cup R^\copyright\\
    \label{eq:P_cap_R^copyright} (P\cap R)^\copyright &= P^\copyright\cap R^\copyright\\
    (P^c)^\copyright &= (P^\copyright)^c\\
	\emptyset^\copyright &= \emptyset\\ 
	F_A^\copyright &= F_A,
\end{align} where
\begin{align} 
	\label{eq:F_A} F_A := \{a \leftarrow B \mid a\in A ,\, B\subseteq A\}
\end{align} is the \emph{full program} over $A$ consisting of all possible rules and facts that can be formed from $A$.
\end{fact}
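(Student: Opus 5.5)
The plan is to prove all five identities directly from the rule-wise definition of body-complement. The key observation is that the map $\gamma : a\leftarrow B \mapsto a\leftarrow (A-B)$ is a bijection on the set of all rules over $A$, that is, on $F_A$. It preserves heads, and it is its own inverse, since $A-(A-B)=B$ for every $B\subseteq A$. Because $P^\copyright$ is exactly the image $\gamma[P]$, every statement in the fact reduces to a statement about images of sets under a bijection of the ambient set $F_A$.

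I will carry out the steps in the following order.

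\textbf{Union.} For $(P\cup R)^\copyright = P^\copyright\cup R^\copyright$, note that images under any map commute with unions, so $\gamma[P\cup R]=\gamma[P]\cup\gamma[R]$.

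\textbf{Intersection.} For $(P\cap R)^\copyright = P^\copyright\cap R^\copyright$, I use injectivity of $\gamma$. The inclusion $\subseteq$ holds for any map. For $\supseteq$, suppose a rule $a\leftarrow D$ lies in both $P^\copyright$ and $R^\copyright$. Then $a\leftarrow(A-D)$ belongs to both $P$ and $R$, since $B$ is uniquely determined as $A-D$. Hence the rule comes from $P\cap R$.

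\textbf{Complement.} Here ${}^c$ denotes complementation relative to $F_A$, so $P^c=F_A-P$. Since $\gamma$ is a bijection of $F_A$ onto itself, $\gamma[F_A-P]=F_A-\gamma[P]$. This gives $(P^c)^\copyright=(P^\copyright)^c$.

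\textbf{Constants.} Finally, $\emptyset^\copyright=\emptyset$ holds trivially. The identity $F_A^\copyright=F_A$ is exactly surjectivity of $\gamma$ on $F_A$: every rule $a\leftarrow D$ equals $\gamma(a\leftarrow (A-D))$.

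I expect no step to be a serious obstacle. The one point needing care is the intersection law, because the analogous law fails for $\sqcup$ and $\sqcap$ (compare \prettyref{p:Q_sqcup_P_cap_R}). There, distinct rules can produce the same result; here, injectivity of $\gamma$ rules this out. I will make explicit that $\gamma$ is well-defined on rules regarded as pairs (head, body set), so that distinct rules with the same head but different bodies remain distinct after complementation. I will also state explicitly that ${}^c$ is taken relative to $F_A$, which is what makes $(\mathbb P_A,\cup,\cap,{}^c,\emptyset,F_A)$ a Boolean algebra (the powerset of the finite set $F_A$).
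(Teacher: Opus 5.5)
Your proposal is correct and is essentially the direct rule-wise verification that the paper leaves implicit: the paper states this as a Fact without proof, relying only on the definition of ${}^\copyright$ and the involution law $(P^\copyright)^\copyright = P$. Treating ${}^\copyright$ as the image under the head-preserving involution $\gamma$ of $F_A$ is a clean way to organize the check. Its bijectivity is exactly what gives the intersection and complement laws, and it is also why they hold here as equalities, whereas the analogous laws for $\sqcup$ hold only as inclusions.
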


This gives us the following formula for the computation of body-complement:
\begin{align*} 
    P^\copyright = (f(P)\cup p(P))^\copyright \stackrel{\ref{eq:P_cup_R^copyright}}= f(P)^\copyright\cup p(P)^\copyright \stackrel{\ref{eq:I^copyright}}= {^{f(P)}}(A^\oplus)\cup p(P)^\copyright.
\end{align*}

The next result shows that we can smoothly adapt De Morgan's laws to body-union, body-intersection, and body-complement as desired:

\begin{proposition} For any programs $P$ and $R$,
\begin{align} 
	\label{eq:(P_sqcup_R)^copyright} (P\sqcup R)^\copyright & = P^\copyright\sqcap R^\copyright\\
	\label{eq:(P_sqcap_R)^copyright} (P\sqcap R)^\copyright & = P^\copyright\sqcup R^\copyright.
\end{align}
\end{proposition}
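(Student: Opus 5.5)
The plan is a direct rule-wise computation: unfold the definitions of $\sqcup$, $\sqcap$ and ${}^\copyright$, and reduce each identity to the classical De Morgan laws for subsets of the finite alphabet $A$, namely
\begin{align*}
	A-(B\cup C) = (A-B)\cap (A-C) \quad\text{and}\quad A-(B\cap C) = (A-B)\cup (A-C).
\end{align*}
Both sides of each identity are sets of rules, so it suffices to show that they contain exactly the same rules. The map $a\leftarrow B\mapsto a\leftarrow(A-B)$ preserves heads, so matching rules with the same head is unaffected.

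For \prettyref{eq:(P_sqcup_R)^copyright}, I would write
\begin{align*}
	(P\sqcup R)^\copyright = \{a\leftarrow (A-(B\cup C)) \mid a\leftarrow B\in P,\; a\leftarrow C\in R\}.
\end{align*}
Apply the set-theoretic De Morgan law to each body. Then reindex: as $a\leftarrow B$ ranges over $P$, the rule $a\leftarrow (A-B)$ ranges over exactly $P^\copyright$, and likewise for $R$. Complementation of bodies is a bijection on subsets of $A$, by the involution \prettyref{eq:P^copyright^copyright}, so the pairs $(B,C)$ correspond exactly to the pairs $(B',C')=(A-B,A-C)$ of bodies of rules in $P^\copyright$ and $R^\copyright$ with head $a$. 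Hence the set equals
\begin{align*}
	\{a\leftarrow (B'\cap C') \mid a\leftarrow B'\in P^\copyright,\; a\leftarrow C'\in R^\copyright\} = P^\copyright\sqcap R^\copyright.
\end{align*}
The one point to be careful about is that the reindexing is genuinely a bijection between generating pairs. Because both sides are images of a set of pairs, a surjection already gives set equality, so this is not a real obstacle.

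For \prettyref{eq:(P_sqcap_R)^copyright}, I would avoid repeating the calculation. Apply \prettyref{eq:(P_sqcup_R)^copyright} to $P^\copyright$ and $R^\copyright$ to get
\begin{align*}
	(P^\copyright\sqcup R^\copyright)^\copyright = P^{\copyright\copyright}\sqcap R^{\copyright\copyright} = P\sqcap R,
\end{align*}
using \prettyref{eq:P^copyright^copyright}. Applying ${}^\copyright$ to both sides and invoking the involution once more gives $P^\copyright\sqcup R^\copyright = (P\sqcap R)^\copyright$. Alternatively, the same direct computation works verbatim with the second set-theoretic De Morgan law.
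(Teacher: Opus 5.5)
Your proposal is correct and follows the paper's proof: unfold $(P\sqcup R)^\copyright$ rule-wise, apply the set-theoretic De Morgan law $A-(B\cup C)=(A-B)\cap(A-C)$ to each body, and reindex over the rules of $P^\copyright$ and $R^\copyright$. The only difference is that you derive the second identity from the first via the involution $(P^\copyright)^\copyright=P$, whereas the paper repeats the analogous computation; your shortcut is valid and slightly more economical.
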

\begin{proof} Using De Morgan's laws, we compute
\begin{align*} 
	(P\sqcup R)^\copyright &= \{a \leftarrow (B\cup C) \mid a \leftarrow B\in P,\; a \leftarrow C\in R\}^\copyright\\
		&= \{a \leftarrow (A- (B\cup C)) \mid a \leftarrow B\in P,\; a \leftarrow C\in R\}\\
		&= \{a \leftarrow ((A- B)\cap (A- C)) \mid a \leftarrow B\in P,\; a \leftarrow C\in R\}\\
		&= \{a \leftarrow (A- B)\mid  a \leftarrow B\in P\}\sqcap \{a \leftarrow (A- C)\mid a \leftarrow C\in R\}\\
		&= P^\copyright\sqcap R^\copyright.
\end{align*} and a similar computation shows the second identity.
\end{proof}



The body-complement of a Krom program $K$ is
\begin{align*} 
	K^\copyright = \{a \leftarrow A\mid a\in f(K)\}\cup \{a \leftarrow (A-\{b\}) \mid a \leftarrow b\in p(K)\}
\end{align*} which can be written more compactly as
\begin{align*} 
	K^\copyright = {^{f(K)}}(A^\oplus)\cup p(K)^\copyright.
\end{align*}

The identities
\begin{align*} 
	(P^\copyright)^I &= \{a \leftarrow (A-B) \mid a \leftarrow B\in P,\; A- B\subseteq I\}\\
	(P^I)^\copyright &= \{a \leftarrow (A-B) \mid a \leftarrow B\in P,\; B\subseteq I\}
\end{align*} show that in general we have
\begin{align*} 
	(P^\copyright)^I\neq (P^I)^\copyright,
\end{align*} which means that the body-reduct does not commute with body-complement, but see \prettyref{p:^I_copyright} below where we show that the head-reduct does commute with body-complement. At least we (trivially) have
\begin{align*} 
	(P^\copyright)^A = (P^A)^\copyright.
\end{align*}

\begin{proposition}\label{p:^I_copyright} Body-complement commutes with head-reducts, that is, for any program $P$ and interpretation $I$,
\begin{align} 
    \label{eq:^IP^copyright=^IP^copyright} {^I}(P^\copyright) &= ({^I}P)^\copyright.
\end{align}
\end{proposition}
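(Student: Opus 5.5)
The identity follows by unfolding both sides with the definitions of the head-reduct and of body-complement and comparing them rule by rule. The key point is that body-complement changes only the body of a rule and leaves its head alone, while the head-reduct filters rules by their head only. So the two operations act on disjoint parts of a rule and commute.

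I compute the left-hand side first:
\begin{align*}
	{^I}(P^\copyright) = \{a \leftarrow C\in P^\copyright \mid a\in I\} = \{a \leftarrow (A-B)\mid a \leftarrow B\in P,\; a\in I\}.
\end{align*}
The second equality uses the fact that every rule of $P^\copyright$ has the form $a\leftarrow(A-B)$ for some $a\leftarrow B\in P$ with the same head $a$ (compare $h(P^\copyright)=h(P)$). The right-hand side unfolds to
\begin{align*}
	({^I}P)^\copyright = \{a\leftarrow B\in P\mid a\in I\}^\copyright = \{a \leftarrow (A-B)\mid a \leftarrow B\in P,\; a\in I\}.
\end{align*}
The two sets are literally the same, which proves the identity.

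The only point needing care is the first step. If a rule $a\leftarrow C$ lies in $P^\copyright$ with $a\in I$, it arises from some $a\leftarrow B\in P$ with $C=A-B$ and the same head, so $a\leftarrow B\in{^I}P$. Conversely, the complement of a rule of ${^I}P$ is a rule of $P^\copyright$ whose head lies in $I$. I do not expect any real obstacle here.

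As a sanity check, the identity can also be derived algebraically. By \prettyref{eq:^IP=^IA^oplus_sqcap_P}, \prettyref{eq:(P_sqcap_R)^copyright} and \prettyref{eq:I_sqcup_P}, we have $({^I}P)^\copyright=({^I}(A^\oplus)\sqcap P)^\copyright=({^I}(A^\oplus))^\copyright\sqcup P^\copyright$. Moreover $({^I}(A^\oplus))^\copyright=I$ by \prettyref{eq:I^copyright} and involutivity \prettyref{eq:P^copyright^copyright}. So this expression equals $I\sqcup P^\copyright={^I}(P^\copyright)$. I would present the direct set-theoretic computation as the proof and perhaps mention this derivation as a remark.
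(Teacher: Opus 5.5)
Your proposal is correct and matches the paper's proof. The paper also treats the identity as immediate from the definitions, and then illustrates it with essentially your algebraic chain run in the opposite direction, ${^I}(P^\copyright)=I\sqcup P^\copyright=(I^\copyright\sqcap P)^\copyright=({^I}(A^\oplus)\sqcap P)^\copyright=({^I}P)^\copyright$, differing from yours only in that it invokes the De Morgan law for $\sqcup$ where you use the one for $\sqcap$ together with involutivity.
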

\begin{proof} An immediate consequence from the definition of body-complement and head-reducts. To illustrate some of the identities obtained above, we additionally compute
\begin{align*} 
	{^I}(P^\copyright) & \stackrel{\ref{eq:I_sqcup_P}}= I\sqcup P^\copyright \stackrel{\ref{eq:(P_sqcup_R)^copyright}}= (I^\copyright\sqcap P)^\copyright \stackrel{\ref{eq:I^copyright}}= ({^I}(A^\oplus)\sqcap P)^\copyright \stackrel{\ref{eq:^IP=^IA^oplus_sqcap_P}}= ({^I}P)^\copyright.
\end{align*}
\end{proof}

\begin{theorem}\label{t:sqcup_sqcap} For all programs $P,Q,R$, we have
\begin{align}
	\label{eq:(P_sqcup_R)_sqcap_Q} (P\sqcup R)\sqcap Q &\subseteq (P\sqcap Q)\sqcup (R\sqcap Q)\\
	\label{eq:(P_sqcap_R)_sqcup_Q} (P\sqcap R)\sqcup Q &\subseteq (P\sqcup Q)\sqcap (R\sqcup Q),
\end{align} and in case $M$ is a minimalist program, we have the identities
\begin{align}
	\label{eq:(P_sqcup_R)_sqcap_M} (P\sqcup R)\sqcap M &= (P\sqcap M)\sqcup (R\sqcap M)\\
	\label{eq:(P_sqcap_R)_sqcup_M} (P\sqcap R)\sqcup M &= (P\sqcup M)\sqcap (R\sqcup M).
\end{align}
\end{theorem}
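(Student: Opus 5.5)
The plan is to argue rule-wise, directly from the definitions of $\sqcup$ and $\sqcap$, reducing everything to the distributive laws of the Boolean algebra $2^A$. The key point is that both sides of each law are built from one rule of $P$, one rule of $R$, and rules of $Q$ with the same head $a$. The left-hand side uses a \emph{single} rule of $Q$, whereas the right-hand side may use \emph{two, possibly different}, rules of $Q$. So the inclusions hold in general, and equality holds once $Q$ has at most one rule per head.

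For \prettyref{eq:(P_sqcup_R)_sqcap_Q}, take an arbitrary rule of the left-hand side. It has the form $a\leftarrow (B\cup C)\cap D$ with $a\leftarrow B\in P$, $a\leftarrow C\in R$ and $a\leftarrow D\in Q$. Since $(B\cup C)\cap D=(B\cap D)\cup(C\cap D)$, it is the body-union of $a\leftarrow B\cap D\in P\sqcap Q$ and $a\leftarrow C\cap D\in R\sqcap Q$. Here we use the same rule $a\leftarrow D$ in both factors, so the rule lies in $(P\sqcap Q)\sqcup(R\sqcap Q)$.

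For \prettyref{eq:(P_sqcap_R)_sqcup_Q}, the argument is dual. The rule $a\leftarrow (B\cap C)\cup D$ equals $a\leftarrow (B\cup D)\cap(C\cup D)$, which lies in $(P\sqcup Q)\sqcap(R\sqcup Q)$. Alternatively, \prettyref{eq:(P_sqcap_R)_sqcup_Q} can be derived from \prettyref{eq:(P_sqcup_R)_sqcap_Q} by applying body-complement. One uses the De Morgan laws \prettyref{eq:(P_sqcup_R)^copyright}, \prettyref{eq:(P_sqcap_R)^copyright}, the involution \prettyref{eq:P^copyright^copyright}, and the fact that ${}^\copyright$ preserves inclusion of programs (it is a bijection on rules, so it commutes with $\subseteq$).

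For the identities with a minimalist $M$, it remains to prove the reverse inclusions. A rule of $(P\sqcap M)\sqcup(R\sqcap M)$ has the form $a\leftarrow (B\cap D)\cup(C\cap D')$ with $a\leftarrow D$ and $a\leftarrow D'$ in $M$. Since $M$ is minimalist and both rules have head $a$, we get $D=D'$. The body therefore collapses to $(B\cup C)\cap D$, a rule of $(P\sqcup R)\sqcap M$. The same argument, with $(B\cup D)\cap(C\cup D)=(B\cap C)\cup D$, handles \prettyref{eq:(P_sqcap_R)_sqcup_M}. The duality route also works here, because $M^\copyright$ is minimalist whenever $M$ is: $h$ is preserved and the correspondence between rules is bijective.

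There is no real obstacle. The only point requiring care is the bookkeeping of which rule of $Q$ is used on each side, since that is exactly where the inclusion becomes strict for non-minimalist $Q$. If desired, one can illustrate this with $Q=\{a\leftarrow b,\ a\leftarrow c\}$ and $P=R=\{a\leftarrow b,c\}$, in the style of the paper's observations.
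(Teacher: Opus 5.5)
Your proposal is correct and is essentially the paper's proof: the same rule-wise computation via the distributive laws of $2^A$, where the inclusion comes from the right-hand side allowing two possibly distinct rules $a\leftarrow D$, $a\leftarrow D'$ of $Q$, and these coincide when $Q$ is minimalist. Your De Morgan route for the dual law is a valid extra. One small caveat concerns the optional illustration: with $P=R=\{a\leftarrow b,c\}$ it only shows strictness of the first inclusion, since for the second both sides equal $\{a\leftarrow b,c\}$; the paper's choice $P=\{a\leftarrow b\}$, $R=\{a\leftarrow c\}$ witnesses strictness of both.
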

\begin{proof} Applying the well-known distributivity laws of the set operations, we compute
\begin{align*} 
	(P\sqcup R)\sqcap Q &= \{a \leftarrow ((B\cup C)\cap D) \mid a \leftarrow B\in P,\; a \leftarrow C\in R,\; a \leftarrow D\in Q\}\\
		&= \{a \leftarrow ((B\cap D)\cup (C\cap D)) \mid a \leftarrow B\in P,\; a \leftarrow C\in R,\; a \leftarrow D\in Q\}\\
		&\subseteq \{a \leftarrow ((B\cap D)\cup (C\cap D')) \mid a \leftarrow B\in P,\; a \leftarrow C\in R,\; a \leftarrow D\in Q,\; a \leftarrow D'\in Q\}\\
		&= (P\sqcap Q)\sqcup (R\sqcap Q)
\end{align*} and, analogously,
\begin{align*} 
	(P\sqcap R)\sqcup Q &= \{a \leftarrow ((B\cap C)\cup D) \mid a \leftarrow B\in P,\; a \leftarrow C\in R,\; a \leftarrow D\in Q\}\\
		&= \{a \leftarrow ((B\cup D)\cap (C\cup D)) \mid a \leftarrow B\in P,\; a \leftarrow C\in R,\; a \leftarrow D\in Q\}\\
		&\subseteq \{a \leftarrow ((B\cup D)\cap (C\cup D')) \mid a \leftarrow B\in P,\; a \leftarrow C\in R,\; a \leftarrow D\in Q,\; a \leftarrow D'\in Q\}\\
		&= (P\sqcup Q)\sqcap (R\sqcup Q).
\end{align*} In both computations, the inclusion is an identity in case $Q$ is minimalist, since then $D$ and $D'$ are both the body of the unique rule in $Q$ with head $a$ --- which proves \prettyref{eq:(P_sqcup_R)_sqcap_M} and \prettyref{eq:(P_sqcap_R)_sqcup_M}.
\end{proof}

\begin{observation} The inclusions in \prettyref{eq:(P_sqcup_R)_sqcap_Q} and \prettyref{eq:(P_sqcap_R)_sqcup_Q} can in general not be turned into identities. For example, for the non-minimalist program $Q := \{a \leftarrow b,\; a \leftarrow c\}$ and $P := \{a \leftarrow b\}$ and $R := \{a \leftarrow c\}$, we have
\begin{align*} 
	(P\sqcup R)\sqcap Q &= (P\sqcap R)\sqcup Q = \left\{
	\begin{array}{l}
		a \leftarrow b\\
		a \leftarrow c
	\end{array}
	\right\} \subsetneq \left\{
	\begin{array}{l}
		\underline{a}\\
		a \leftarrow b\\
		a \leftarrow c\\
		\underline{a \leftarrow b,c}
	\end{array}
	\right\} = (P\sqcap Q)\sqcup (R\sqcap Q)\\ 
	&= (P\sqcup Q)\sqcap (R\sqcup Q).
\end{align*}
\end{observation}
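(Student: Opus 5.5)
The plan is to verify the claimed counterexample by direct computation from the rule-wise definitions of $\sqcup$ and $\sqcap$. All programs involved have only the head $a$, so each side reduces to a finite set of bodies over $\{b,c\}$. I will compute the two left-hand sides of \prettyref{eq:(P_sqcup_R)_sqcap_Q} and \prettyref{eq:(P_sqcap_R)_sqcup_Q} first, then the two right-hand sides, and compare them.

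\textbf{Left-hand sides.} First, $P\sqcup R=\{a\leftarrow b,c\}$. Intersecting its body $\{b,c\}$ with each of the bodies $\{b\},\{c\}$ of $Q$ gives
\begin{align*}
	(P\sqcup R)\sqcap Q=\{a\leftarrow b,\; a\leftarrow c\}.
\end{align*}
Next, $P\sqcap R=\{a\}$, since $\{b\}\cap\{c\}=\emptyset$. Taking body-unions of $\emptyset$ with $\{b\}$ and with $\{c\}$ gives $(P\sqcap R)\sqcup Q=\{a\leftarrow b,\; a\leftarrow c\}$ as well.

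\textbf{Right-hand sides.} Here the two rules of $Q$ are paired independently, which is exactly where non-minimality of $Q$ enters. We have $P\sqcap Q=\{a\leftarrow b,\; a\}$ and $R\sqcap Q=\{a,\; a\leftarrow c\}$. Forming all pairwise body-unions of the body sets $\{\{b\},\emptyset\}$ and $\{\emptyset,\{c\}\}$ yields the bodies $\{b\},\{b,c\},\emptyset,\{c\}$, so
\begin{align*}
	(P\sqcap Q)\sqcup (R\sqcap Q)=\{a,\; a\leftarrow b,\; a\leftarrow c,\; a\leftarrow b,c\}.
\end{align*}
Similarly, $P\sqcup Q=\{a\leftarrow b,\; a\leftarrow b,c\}$ and $R\sqcup Q=\{a\leftarrow b,c,\; a\leftarrow c\}$. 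Their pairwise body-intersections are $\{b\}$, $\emptyset$, $\{b,c\}$ and $\{c\}$, which gives the same four-rule program.

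\textbf{Conclusion.} The inclusion is strict because the fact $a$ and the rule $a\leftarrow b,c$ are absent on the left. No step is a real obstacle. The only point requiring care is to enumerate the mixed pairings of distinct rules of $Q$; these produce the extra rules, matching the $D\neq D'$ case in the proof of \prettyref{t:sqcup_sqcap}.
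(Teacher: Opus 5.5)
Your computation is correct and is essentially the paper's own justification: the observation is established by evaluating the four expressions rule-wise, exactly as you do. Your remark that the extra rules $a$ and $a\leftarrow b,c$ come from pairing distinct rules of $Q$ matches the $D\neq D'$ case in the proof of Theorem~\ref{t:sqcup_sqcap}.
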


\section{Body-subtraction}\label{sec:Body-subtraction}

In this section, we introduce an operation on programs which mimics set difference:

\begin{definition} Define the \emph{body-subtraction} of $P$ and $R$ by
\begin{align*} 
	P\boxminus R &:= \{a \leftarrow (B-C)\mid a \leftarrow B\in P,\; a \leftarrow C\in R\}.
\end{align*}
\end{definition} 

For example, we have
\begin{align*} 
	\left\{
	\begin{array}{l}
		a\\
		b \leftarrow a\\
		c \leftarrow a,b
	\end{array}
	\right\}\boxminus \left\{
	\begin{array}{l}
		b \leftarrow a\\
		c \leftarrow a
	\end{array}
	\right\} = \left\{
	\begin{array}{l}
		b\\
		c \leftarrow b
	\end{array}
	\right\}.
\end{align*}

This operation satisfies the following identities (among others):
\begin{align*} 
	P\boxminus\emptyset &= \emptyset\boxminus P = \emptyset\\
	P\boxminus A &= P\\
	P\boxminus A^\oplus &= h(P).
\end{align*}

Moreover, we have
\begin{align*} 
	M\boxminus M &= h(M)
\end{align*} for every minimalist program $M$. Notice that the identity may fail for non-minimalist programs as is witnessed by the counterexample:
\begin{align*} 
	\left\{
	\begin{array}{l}
		a \leftarrow b\\
		a \leftarrow c
	\end{array}
	\right\}\boxminus \left\{
	\begin{array}{l}
		a \leftarrow b\\
		a \leftarrow c
	\end{array}
	\right\} = \left\{
	\begin{array}{l}
		a\\
		a \leftarrow b\\
		a \leftarrow c
	\end{array}
	\right\}.
\end{align*} We thus have
\begin{align*} 
	P\boxminus P\supseteq h(P).
\end{align*}

\begin{fact} The head operator is a homomorphism $h:(\mathbb P_A,\boxminus)\to (\mathbb I_A,\cap)$ thus satisfying
\begin{align*} 
	h(P\boxminus R) &= h(P)\cap h(R).
\end{align*}
\end{fact}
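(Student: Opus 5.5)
We prove the identity $h(P\boxminus R)=h(P)\cap h(R)$ by double inclusion, unfolding the definition of body-subtraction; this is the same argument that gives the analogous identities $h(P\sqcup R)=h(P)\cap h(R)$ and $h(P\sqcap R)=h(P)\cap h(R)$. The one point to check is that body-subtraction, like body-union, pairs a rule of $P$ with a rule of $R$ \emph{with the same head}. It therefore produces a rule with head $a$ exactly when both programs contain some rule with head $a$, and what happens to the bodies is irrelevant.

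For the inclusion $\subseteq$, take $a\in h(P\boxminus R)$. By definition there is a rule $a\leftarrow(B-C)\in P\boxminus R$ arising from some $a\leftarrow B\in P$ and $a\leftarrow C\in R$. Hence $a\in h(P)$ and $a\in h(R)$.

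For the inclusion $\supseteq$, take $a\in h(P)\cap h(R)$. Choose witnesses $a\leftarrow B\in P$ and $a\leftarrow C\in R$. Then $a\leftarrow(B-C)\in P\boxminus R$, so $a\in h(P\boxminus R)$. This holds even when $B-C=\emptyset$, since then the resulting rule is simply a fact, which still has head $a$.

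We also check the homomorphism reading of the statement. Here $h$ is extended rule-wise to programs, so $h(P)\subseteq A$ is an interpretation and the codomain $(\mathbb I_A,\cap)$ makes sense. The only subtlety is notational: $h$ returns a set of atoms, viewed as an interpretation. There is no real obstacle. The potential worry that several rules in $P$ or $R$ share a head, which breaks other identities for non-minimalist programs, does not arise, because we need only the existence of one pair of rules, never uniqueness.
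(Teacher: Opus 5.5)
Your double-inclusion argument is correct and is exactly the routine unfolding of the definition of $\boxminus$ that the paper leaves implicit: the paper states this as a Fact without proof, in parallel with $h(P\sqcup R)=h(P)\cap h(R)$ and $h(P\sqcap R)=h(P)\cap h(R)$. Your remark that only the existence of a same-head pair of rules is needed, not its uniqueness, correctly explains why no minimalism assumption is required.
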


Recall that for sets $X$ and $Y$,
\begin{align*} 
	X- Y = X\cap Y^c.
\end{align*} Analogously, we have:

\begin{proposition} For any programs $P$ and $R$,
\begin{align} 
	\label{eq:P_boxminus_R=(P sqcap R)^copyright} P\boxminus R = P\sqcap R^\copyright.
\end{align}
\end{proposition}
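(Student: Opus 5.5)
The identity follows by unfolding both sides with the definitions of body-complement and body-intersection, in the same style as the proof of the De Morgan laws \prettyref{eq:(P_sqcup_R)^copyright}.

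\textbf{Step 1: compute the right-hand side.} By definition,
\begin{align*}
	R^\copyright = \{a \leftarrow (A-C)\mid a \leftarrow C\in R\}.
\end{align*}
Every rule of $R^\copyright$ with head $a$ therefore has the form $a \leftarrow (A-C)$ for some $a \leftarrow C\in R$. Hence
\begin{align*}
	P\sqcap R^\copyright = \{a \leftarrow (B\cap (A-C))\mid a \leftarrow B\in P,\; a \leftarrow C\in R\}.
\end{align*}

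\textbf{Step 2: apply the set identity.} Since $B\subseteq A$, we have $B\cap(A-C) = B-C$. This is the set-theoretic law $X-Y = X\cap Y^c$ taken relative to the universe $A$. The set above is therefore
\begin{align*}
	\{a \leftarrow (B-C)\mid a \leftarrow B\in P,\; a \leftarrow C\in R\},
\end{align*}
which is $P\boxminus R$ by definition.

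\textbf{Step 3: the indexing subtlety.} The only point needing care is that ranging over rules $a \leftarrow D\in R^\copyright$ is the same as ranging over rules $a \leftarrow C\in R$ via $D = A-C$. This holds because the map $C\mapsto A-C$ is a bijection on subsets of $A$; it is involutive, as already recorded in \prettyref{eq:P^copyright^copyright}. Consequently no rules are lost or merged when we pass to $R^\copyright$, and the two set-builder descriptions describe the same set of rules, not merely subsumption-equivalent ones. I do not expect any step to be a real obstacle; the proof is a direct rule-wise computation.
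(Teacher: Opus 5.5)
Your proof is correct and follows the paper's argument essentially step for step. Like the paper, you unfold $P\sqcap R^\copyright$, rewrite membership $a\leftarrow D\in R^\copyright$ as $a\leftarrow C\in R$ with $D=A-C$, and then apply $B\cap(A-C)=B-C$ to obtain $P\boxminus R$ (your bijection remark in Step 3 is sound, though the definition of $R^\copyright$ already suffices for this reindexing).
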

\begin{proof} Using De Morgan's law, we compute
\begin{align*} 
	P\sqcap R^\copyright 
		&= \left\{a \leftarrow (B\cap C) \;\middle|\; a \leftarrow B\in P,\; a \leftarrow C\in R^\copyright \right\}\\
		&= \left\{a \leftarrow (B\cap C) \;\middle|\; a \leftarrow B\in P,\; a \leftarrow (A- C)\in R \right\}\\
		&= \left\{a \leftarrow (B\cap (A- C)) \;\middle|\; a \leftarrow B\in P,\; a \leftarrow C\in R \right\}\\
		&= \{a \leftarrow (B- C) \mid a \leftarrow B\in P,\; a \leftarrow C\in R\}\\
		&= P\boxminus R.
\end{align*}
\end{proof}

\begin{corollary}\label{c:(P_sqcup_R)_boxminus_Q} For any programs $P,Q,R$, we have
\begin{align}
	\label{eq:(P_sqcup_R)_boxminus_Q} (P\sqcup R)\boxminus Q \subseteq (P\boxminus Q)\sqcup (R\boxminus Q),
\end{align} with equality in case $Q$ is minimalist.
\end{corollary}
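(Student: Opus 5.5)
The statement is labelled a corollary, so I would derive it from the identity $P\boxminus R = P\sqcap R^\copyright$ in \prettyref{eq:P_boxminus_R=(P sqcap R)^copyright} together with the distributivity inclusion \prettyref{eq:(P_sqcup_R)_sqcap_Q} of \prettyref{t:sqcup_sqcap}, instead of recomputing from the definitions.

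The first step rewrites the left-hand side as
\begin{align*}
	(P\sqcup R)\boxminus Q \stackrel{\ref{eq:P_boxminus_R=(P sqcap R)^copyright}}= (P\sqcup R)\sqcap Q^\copyright.
\end{align*}
The second step applies \prettyref{eq:(P_sqcup_R)_sqcap_Q} with $Q^\copyright$ in place of $Q$, which gives
\begin{align*}
	(P\sqcup R)\sqcap Q^\copyright \subseteq (P\sqcap Q^\copyright)\sqcup (R\sqcap Q^\copyright).
\end{align*}
The third step rewrites each factor on the right back to $P\boxminus Q$ and $R\boxminus Q$, again by \prettyref{eq:P_boxminus_R=(P sqcap R)^copyright}. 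This proves the inclusion \prettyref{eq:(P_sqcup_R)_boxminus_Q}.

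For the equality case I would use the identity \prettyref{eq:(P_sqcup_R)_sqcap_M} instead of the inclusion. This identity needs its right factor to be minimalist, so the only thing to check is that $Q^\copyright$ is minimalist whenever $Q$ is. Body-complement maps each rule $a\leftarrow B$ to $a\leftarrow(A-B)$ and so preserves heads. It is also injective on rules with a fixed head, by involutivity \prettyref{eq:P^copyright^copyright}. Hence $Q$ has at most one rule per head exactly when $Q^\copyright$ does, and equality follows by the same chain of rewrites.

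This minimality check is the only point that needs any care, and it is immediate. As a cross-check I would also verify the result directly. Expanding the definitions gives rules $a\leftarrow((B\cup C)-D)=(B-D)\cup(C-D)$ on the left, where one rule $a\leftarrow D\in Q$ is shared by both parts. On the right, the two parts may use different rules $a\leftarrow D$ and $a\leftarrow D'$ from $Q$. When $Q$ is minimalist we must have $D=D'$, so the two sides coincide.
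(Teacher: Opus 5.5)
Your proposal is correct and takes the same route as the paper. The paper also rewrites via $P\boxminus R = P\sqcap R^\copyright$ and applies the inclusion of Theorem~\ref{t:sqcup_sqcap} with $Q^\copyright$ in place of $Q$. For the equality it upgrades to the minimalist identity after noting that body-complement preserves heads and neither adds nor removes rules, so $Q^\copyright$ is minimalist whenever $Q$ is.
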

\begin{proof} Notice first that $Q$ is minimalist iff $Q^\copyright$ is minimalist, since body-complement preserves rule heads and neither adds nor removes rules. We can therefore compute
\begin{align*} 
	(P\sqcup R)\boxminus Q \stackrel{\ref{eq:P_boxminus_R=(P sqcap R)^copyright}}= (P\sqcup R)\sqcap Q^\copyright \stackrel{\ref{eq:(P_sqcup_R)_sqcap_Q}}\subseteq (P\sqcap Q^\copyright)\sqcup (R\sqcap Q^\copyright) \stackrel{\ref{eq:P_boxminus_R=(P sqcap R)^copyright}}= (P\boxminus Q)\sqcup (R\boxminus Q),
\end{align*} where the inclusion is an identity by \prettyref{eq:(P_sqcup_R)_sqcap_M} in case $Q$ --- and hence $Q^\copyright$ --- is minimalist.
\end{proof}

\begin{observation} The inclusion in \prettyref{eq:(P_sqcup_R)_boxminus_Q} can in general not be turned into an identity, not even modulo subsumption equivalence. For example, for $P := \{a \leftarrow b\}$ and $R := \{a \leftarrow c\}$ and the non-minimalist program $Q := \{a \leftarrow b,\; a \leftarrow c\}$, we have
\begin{align*} 
	(P\sqcup R)\boxminus Q = \left\{
	\begin{array}{l}
		a \leftarrow b\\
		a \leftarrow c
	\end{array}
	\right\} \subsetneq \left\{
	\begin{array}{l}
		\underline{a}\\
		a \leftarrow b\\
		a \leftarrow c\\
		\underline{a \leftarrow b,c}
	\end{array}
	\right\} = (P\boxminus Q)\sqcup (R\boxminus Q),
\end{align*} and since the program on the right contains the fact $a$ whereas the program on the left contains no fact at all, \prettyref{eq:P0=f(P)} yields
\begin{align*} 
	[(P\sqcup R)\boxminus Q]\emptyset = \emptyset \neq \{a\} = [(P\boxminus Q)\sqcup (R\boxminus Q)]\emptyset.
\end{align*} Of course, \prettyref{eq:(P_sqcup_R)_boxminus_Q} does imply
\begin{align*} 
	(P\sqcup R)\boxminus Q \lesssim (P\boxminus Q)\sqcup (R\boxminus Q).
\end{align*}
\end{observation}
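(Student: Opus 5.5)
The plan is to verify the counterexample directly from the definitions and then separate the two sides semantically by evaluating at a single interpretation. Fix $P := \{a \leftarrow b\}$, $R := \{a \leftarrow c\}$ and $Q := \{a \leftarrow b,\; a \leftarrow c\}$. First compute the left-hand side. By definition of body-union, $P\sqcup R = \{a \leftarrow b,c\}$. Subtracting the two bodies of $Q$ gives $\{b,c\}-\{b\} = \{c\}$ and $\{b,c\}-\{c\} = \{b\}$, so $(P\sqcup R)\boxminus Q = \{a \leftarrow b,\; a \leftarrow c\}$.

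For the right-hand side, first compute $P\boxminus Q = \{a \leftarrow (\{b\}-\{b\}),\; a \leftarrow (\{b\}-\{c\})\} = \{a,\; a \leftarrow b\}$. Symmetrically, $R\boxminus Q = \{a \leftarrow c,\; a\}$. Taking all pairwise body-unions gives
\begin{align*}
	(P\boxminus Q)\sqcup (R\boxminus Q) = \{a,\; a \leftarrow c,\; a \leftarrow b,\; a \leftarrow b,c\}.
\end{align*}
This contains the left-hand side strictly, which is consistent with \prettyref{c:(P_sqcup_R)_boxminus_Q}. Note that $Q$ is not minimalist, so the equality clause of that corollary does not apply.

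To rule out equality even modulo $\approx$, it suffices to exhibit one interpretation on which the two programs act differently. Take $I=\emptyset$. By \prettyref{eq:P0=f(P)}, composing with $\emptyset$ returns the facts of a program. The left program has no facts, while the right program contains the fact $a$. Hence the results are $\emptyset$ and $\{a\}$ respectively, so the two programs are not subsumption equivalent.

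For the final claim $\lesssim$, observe that $X\subseteq Y$ implies $XI\subseteq YI$ for every interpretation $I$. This is immediate from the definition of composition, or from $T_X(I)=XI$: every rule of $X$ applicable in $I$ is also a rule of $Y$. Applying this to the inclusion \prettyref{eq:(P_sqcup_R)_boxminus_Q} yields the claim. No step here is a real obstacle; the only point needing care is enumerating every rule pairing in the body-subtractions without omission.
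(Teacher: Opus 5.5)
Your proposal is correct and follows the paper's own argument. You compute both sides from the definitions, then separate them modulo $\approx$ by composing with $\emptyset$ and using $X\emptyset = f(X)$ (the left program has no facts, the right one contains the fact $a$), and finally obtain $\lesssim$ from the fact that $X\subseteq Y$ implies $XI\subseteq YI$ for every interpretation $I$, which spells out the step the paper covers with ``of course''.
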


Interestingly enough, we can compute the body-complement of $P$ via body-subtraction according to the formula
\begin{align*} 
	P^\copyright = A^\oplus\boxminus P.
\end{align*}

We shall now investigate the interaction between reducts and body-subtraction. We can express the head-reduct in terms of body-subtraction as (compare with \prettyref{eq:I_sqcup_P})
\begin{align} 
	\label{eq:P_boxminus_I} P\boxminus I &= {^I}P,
\end{align} for every program $P$ and interpretation $I$. Moreover, we can simplify the computation of body-subtraction using left reducts as
\begin{align*} 
	P\boxminus R = {^{h(R)}}P\boxminus {^{h(P)}}R.
\end{align*}

\begin{proposition} The head-reduct is an endomorphism ${^I}. : (\mathbb P_A,\boxminus)\to (\mathbb P_A,\boxminus)$ for every interpretation $I$, that is, for any programs $P$ and $R$,
\begin{align*} 
	{^I}(P\boxminus R) &= {^I}P\boxminus {^I}R.
\end{align*}
\end{proposition}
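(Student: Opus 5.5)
The identity can be verified directly from the definitions; I would also give an algebraic derivation in the style of the proof of \prettyref{p:^I_copyright}.

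\emph{Direct route.} Unfold both sides. By definition,
\begin{align*}
	{^I}(P\boxminus R) = \{a \leftarrow (B-C) \mid a \leftarrow B\in P,\; a \leftarrow C\in R,\; a\in I\},
\end{align*}
since the head of $a\leftarrow(B-C)$ is $a$. The right-hand side ${^I}P\boxminus {^I}R$ ranges over pairs $a\leftarrow B\in P$ and $a'\leftarrow C\in R$ with $a,a'\in I$. Body-subtraction only pairs rules with the same head, so $a=a'$. The membership condition therefore collapses to the single requirement $a\in I$, and the two sets coincide. The only point to check is that the head-reduct filters on heads alone and that body-subtraction never alters heads, so the filter commutes with the pairing.

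\emph{Algebraic route.} Chain earlier identities:
\begin{align*}
	{^I}(P\boxminus R)
	&\stackrel{\ref{eq:P_boxminus_R=(P sqcap R)^copyright}}= {^I}(P\sqcap R^\copyright)\\
	&= {^I}P\sqcap {^I}(R^\copyright)\\
	&\stackrel{\ref{eq:^IP^copyright=^IP^copyright}}= {^I}P\sqcap ({^I}R)^\copyright\\
	&\stackrel{\ref{eq:P_boxminus_R=(P sqcap R)^copyright}}= {^I}P\boxminus {^I}R.
\end{align*}
The second step uses the already established fact that the head-reduct is an endomorphism of $(\mathbb P_A,\sqcap)$.

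The main obstacle is the middle step. That endomorphism result was proved via the forward reference \prettyref{eq:(P_sqcap_R)_sqcup_M}, which requires the body-union factor to be minimalist. This holds here, because $I$ is a set of facts with at most one fact per head, so the appeal is legitimate. If it were in doubt, I would fall back on the direct elementwise argument above, which is self-contained.
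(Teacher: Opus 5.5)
Your proposal is correct, and your algebraic route is essentially the paper's proof: the paper rewrites ${^I}(P\boxminus R)$ as $I\sqcup(P\sqcap R^\copyright)$ via \eqref{eq:I_sqcup_P} and $P\boxminus R=P\sqcap R^\copyright$, distributes with \eqref{eq:(P_sqcap_R)_sqcup_M} for the minimalist program $M=I$, and finishes with the commutation of head-reduct and body-complement. You instead package those middle steps into the already proved $\sqcap$-endomorphism of the head-reduct, correctly noting that $I$ is minimalist; your elementwise check, that the head filter commutes with same-head pairing, is also valid and fully self-contained.
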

\begin{proof}
\begin{align*} 
	{^I}(P\boxminus R) 
		\stackrel{\ref{eq:I_sqcup_P}}= I\sqcup (P\boxminus R) 
		\stackrel{\ref{eq:P_boxminus_R=(P sqcap R)^copyright}}= I\sqcup (P\sqcap R^\copyright) 
		\stackrel{\ref{eq:(P_sqcap_R)_sqcup_M}}= (I\sqcup P)\sqcap (I\sqcup R^\copyright) 
		\stackrel{\ref{eq:I_sqcup_P}}= {^I}P\sqcap {^I}(R^\copyright) 
		\stackrel{\ref{eq:^IP^copyright=^IP^copyright}}= {^I}P\sqcap ({^I}R)^\copyright 
		\stackrel{\ref{eq:P_boxminus_R=(P sqcap R)^copyright}}= {^I}P\boxminus {^I}R.
\end{align*}
\end{proof}

We now turn our attention to the interaction between composition and body-subtraction.

\begin{observation} There is a minimalist Krom program $K$ and an interpretation $I$ such that
\begin{align*} 
	(K\boxminus K)I \not\approx KI\boxminus KI.
\end{align*} For example, we have
\begin{align*} 
	(\{a \leftarrow b\}\boxminus \{a \leftarrow b\}) \emptyset = \{a\} \emptyset = \{a\}
\end{align*} whereas
\begin{align*} 
	\{a \leftarrow b\} \emptyset\boxminus \{a \leftarrow b\} \emptyset = \emptyset\boxminus \emptyset = \emptyset.
\end{align*} Hence, in general we have
\begin{align*} 
	(P\boxminus R)Q \not\approx PQ\boxminus RQ.
\end{align*}
\end{observation}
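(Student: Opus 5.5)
The plan is to verify the counterexample stated in the observation directly. Take the minimalist Krom program $K := \{a \leftarrow b\}$ and the interpretation $I := \emptyset$, and compute both sides. The computations are short, and the only point needing justification is why two different interpretations cannot be subsumption equivalent.

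For the left-hand side, I would first compute $K\boxminus K$ from the definition of body-subtraction. There is exactly one pair of rules with matching head, and it yields $a \leftarrow (\{b\}-\{b\}) = a \leftarrow \emptyset$, so $K\boxminus K = \{a\} = h(K)$. This agrees with the identity $M\boxminus M = h(M)$ for minimalist $M$ stated earlier. Next, $\{a\}\emptyset = f(\{a\}) = \{a\}$ by \prettyref{eq:P0=f(P)}, since a fact composed with anything remains that fact.

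For the right-hand side, $K\emptyset = f(K) = \emptyset$, again by \prettyref{eq:P0=f(P)}, since $K$ contains no facts. Then $\emptyset\boxminus\emptyset = \emptyset$ by the identity $P\boxminus\emptyset = \emptyset\boxminus P = \emptyset$.

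It remains to show $\{a\}\not\approx\emptyset$. The only step needing a remark is that subsumption-equivalent interpretations coincide. Composing any interpretation $J$ with $\emptyset$ gives $J\emptyset = f(J) = J$. Hence $J\approx J'$ implies $J = J\emptyset = J'\emptyset = J'$. Here $\{a\}\emptyset = \{a\}\neq\emptyset = \emptyset\emptyset$, so the two sides are not subsumption equivalent. Finally, instantiating $P := R := K$ and $Q := I$ shows that $(P\boxminus R)Q\approx PQ\boxminus RQ$ fails in general.
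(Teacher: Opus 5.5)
Your proposal is correct and follows the paper's own argument: the same counterexample $K=\{a\leftarrow b\}$, $I=\emptyset$, with both sides computed directly from the definitions and $P\emptyset=f(P)$. Your added remark that $J\emptyset=f(J)=J$ for every interpretation $J$, so subsumption-equivalent interpretations coincide, makes explicit a step the paper leaves implicit. It is a correct justification for $\{a\}\not\approx\emptyset$.
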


\begin{observation} There are Krom programs $K$ and $L$ and an interpretation $I$ such that
\begin{align*} 
	K(L\boxminus I)\not\approx KL\boxminus KI.
\end{align*} For example, for
\begin{align*} 
	K := \left\{
	\begin{array}{l}
		a \leftarrow b\\
		a \leftarrow c
	\end{array}
	\right\} \quad\text{and}\quad L := \{b \leftarrow b\} \quad\text{and}\quad I := \{c\},
\end{align*} we have
\begin{align*} 
	\left\{
	\begin{array}{l}
		a \leftarrow b\\
		a \leftarrow c
	\end{array}
	\right\}(\{b \leftarrow b\}\boxminus \{c\}) = \emptyset
\end{align*} whereas
\begin{align*} 
	\left\{
	\begin{array}{l}
		a \leftarrow b\\
		a \leftarrow c
	\end{array}
	\right\}\{b \leftarrow b\}\boxminus \left\{
	\begin{array}{l}
		a \leftarrow b\\
		a \leftarrow c
	\end{array}
	\right\}\{c\} = \{a \leftarrow b\}\boxminus \{a\} = \{a \leftarrow b\}.
\end{align*}

Hence, in general
\begin{align*} 
	Q(P\boxminus R)\not\approx QP\boxminus QR.
\end{align*}
\end{observation}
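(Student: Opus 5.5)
The plan is to verify the counterexample given in the observation by direct computation, using the identities already established, and then to separate the two resulting programs by evaluating them on a single interpretation. Take $K := \{a \leftarrow b,\; a \leftarrow c\}$, $L := \{b \leftarrow b\}$ and $I := \{c\}$ as in the statement.

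\emph{Left-hand side.} By \prettyref{eq:P_boxminus_I} we have $L\boxminus I = {^I}L$. This is the set of rules of $L$ whose head lies in $\{c\}$. The only rule of $L$ has head $b$, so $L\boxminus I=\emptyset$. Then \prettyref{eq:P0=f(P)} gives $K(L\boxminus I)=K\emptyset=f(K)$, and $f(K)=\emptyset$ because $K$ has no facts.

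\emph{Right-hand side.} We compute the two compositions from the definition of sequential composition.
\begin{itemize}
\item For $KL$: the rule $a\leftarrow b$ composes with $b\leftarrow b$ to give $a\leftarrow b$. The rule $a\leftarrow c$ contributes nothing, since $L$ has no rule with head $c$. Hence $KL=\{a\leftarrow b\}$.
\item For $KI$: the rule $a\leftarrow b$ contributes nothing, since $b\notin I$. The rule $a\leftarrow c$ composes with the fact $c$ to give the fact $a$. Hence $KI=\{a\}$, which is consistent with \prettyref{eq:T_P(I)=PI}.
\end{itemize}
By the definition of body-subtraction, $\{a\leftarrow b\}\boxminus\{a\} = \{a\leftarrow(\{b\}-\emptyset)\} = \{a\leftarrow b\}$.

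\emph{Separating the two programs.} It remains to show that $\emptyset\not\approx\{a\leftarrow b\}$. By the Fact characterizing subsumption equivalence via $PJ=RJ$ for all interpretations $J$, it suffices to exhibit one interpretation on which they differ. Taking $J:=\{b\}$ gives $\emptyset J=\emptyset$ but $\{a\leftarrow b\}J=\{a\}$. Since $K$ and $L$ are Krom and $I$ is an interpretation, these are instances of programs $Q,P,R$, so the general failure $Q(P\boxminus R)\not\approx QP\boxminus QR$ follows.

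There is no real obstacle here. The only point needing care is the composition $KL$, where the rule $a\leftarrow c$ must be discarded because $L$ offers no rule with head $c$.
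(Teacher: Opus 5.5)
Your proposal is correct and follows the paper's approach: you directly compute $K(L\boxminus I)=\emptyset$ and $KL\boxminus KI=\{a\leftarrow b\}$, exactly as the paper displays. Your one addition is to witness $\emptyset\not\approx\{a\leftarrow b\}$ explicitly with the interpretation $J=\{b\}$, a step the paper leaves implicit.
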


\begin{fact} For any programs $P$ and $R$ with $h(P)\subseteq h(R)$,
\begin{align} 
	\label{eq:P_boxminus_R^omega} P^\omega\subseteq (P\boxminus R)^\omega.
\end{align}
\end{fact}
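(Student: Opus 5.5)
The plan is to compare the two programs rule by rule via a subsumption-style inequality. I will first show that $P\lesssim P\boxminus R$ holds under the hypothesis $h(P)\subseteq h(R)$, and then conclude with \prettyref{eq:P_leqq_ss_R_} together with \prettyref{eq:LM(P)=P^omega}. Unfolding the definition, $P\lesssim P\boxminus R$ means $PI\subseteq (P\boxminus R)I$ for every interpretation $I$. By \prettyref{eq:T_P(I)=PI} this is the statement $T_P(I)\subseteq T_{P\boxminus R}(I)$.

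To verify the pointwise inclusion, fix $I$ and $a\in PI$. Then there is a rule $a\leftarrow B\in P$ with $B\subseteq I$. Since $a\in h(P)\subseteq h(R)$, there is some rule $a\leftarrow C\in R$. By definition of body-subtraction, $a\leftarrow (B-C)\in P\boxminus R$. As $B-C\subseteq B\subseteq I$, we get $a\in (P\boxminus R)I$. This proves $P\lesssim P\boxminus R$.

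Finally, \prettyref{eq:P_leqq_ss_R_} yields $LM(P)\subseteq LM(P\boxminus R)$, and rewriting both sides with \prettyref{eq:LM(P)=P^omega} gives $P^\omega\subseteq (P\boxminus R)^\omega$.

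The argument has no real obstacle. The only delicate point is the role of the hypothesis $h(P)\subseteq h(R)$. Without it, a rule of $P$ whose head does not occur in $R$ disappears from $P\boxminus R$, and monotonicity fails. The step where we pick a partner rule $a\leftarrow C\in R$ is exactly where the hypothesis is used.
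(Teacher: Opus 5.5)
Your proof is correct. The paper states this as a Fact without giving a proof, and your argument is the natural one: you show $P\lesssim P\boxminus R$ pointwise, using a partner rule $a\leftarrow C\in R$ (exactly where $h(P)\subseteq h(R)$ enters) and $B-C\subseteq B$, and then apply $P\lesssim R\Rightarrow LM(P)\subseteq LM(R)$ together with $P^\omega=LM(P)$; this is the same pattern the paper uses to obtain $P^\omega\subseteq K_i^\omega$ in the proof of the Decomposition Theorem.
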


\section{Body-symmetric-difference}\label{sec:Body-symmetric-difference}

Recall that the set-theoretic symmetric difference of two programs is given by
\begin{align*} 
	P\Triangle R = (P- R)\cup (R- P).
\end{align*} We shall introduce analogous set-like operations on programs in two different ways corresponding to the two sides of the above equation: 

\begin{definition} Define the \emph{body-symmetric-difference} of $P$ and $R$ by
\begin{align*} 
	P\boxtriangle R &:= \{a \leftarrow (B\Triangle C)\mid a \leftarrow B\in P,\; a \leftarrow C\in R\},
\end{align*} and define the \emph{black-body-symmetric-difference} of $P$ and $R$ by
\begin{align*} 
	P \boxblacktriangle R := (P\boxminus R)\sqcup (R\boxminus P).
\end{align*}
\end{definition}

The two notions of ``symmetric difference'' coincide on minimalist programs containing at most one rule for each rule head, that is, for any minimalist programs $M$ and $N$ we clearly have
\begin{align*} 
	M\boxtriangle N = M\boxblacktriangle N.
\end{align*} 

However, for non-minimalist programs, the two notions may differ as is witnessed by the following example: 

\begin{example} For
\begin{align*} 
	P := \left\{
	\begin{array}{l}
		a \leftarrow b\\
		a \leftarrow c
	\end{array}
	\right\} \quad\text{and}\quad R := \left\{
	\begin{array}{l}
		a \leftarrow b\\
		a \leftarrow d
	\end{array}
	\right\}
\end{align*} we have
\begin{align*} 
	P\boxtriangle R = \left\{
	\begin{array}{l}
		a\\
		a \leftarrow b,c\\
		a \leftarrow b,d\\
		a \leftarrow c,d
	\end{array}
	\right\}
\end{align*} whereas
\begin{align*} 
	P\boxblacktriangle R = \left\{
	\begin{array}{l}
		a\\
		a \leftarrow b\\
		a \leftarrow c
	\end{array}
	\right\} \sqcup \left\{
	\begin{array}{l}
		a\\
		a \leftarrow b\\
		a \leftarrow d
	\end{array}
	\right\} = \left\{
	\begin{array}{l}
		a\\
		\underline{a \leftarrow b}\\
		\underline{a \leftarrow c}\\
		\underline{a \leftarrow d}\\
		a \leftarrow b,c\\
		a \leftarrow b,d\\
		a \leftarrow c,d
	\end{array}
	\right\}.
\end{align*}
\end{example}

\begin{proposition} For any programs $P$ and $R$,
\begin{align*} 
	P \boxtriangle R \subseteq P\boxblacktriangle R.
\end{align*}
\end{proposition}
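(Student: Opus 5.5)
The plan is a direct elementwise argument: take an arbitrary rule of $P\boxtriangle R$ and write it as a body-union of one rule from $P\boxminus R$ and one from $R\boxminus P$. By definition, every rule of $P\boxtriangle R$ has the form $a\leftarrow(B\Triangle C)$ for some $a\leftarrow B\in P$ and $a\leftarrow C\in R$. We must show that this rule lies in $(P\boxminus R)\sqcup(R\boxminus P)$.

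First, I will use the same pair of rules in both subtractions, which is the key choice. Since $a\leftarrow B\in P$ and $a\leftarrow C\in R$, the definition of body-subtraction gives
\begin{align*}
	a\leftarrow(B-C)\in P\boxminus R \quad\text{and}\quad a\leftarrow(C-B)\in R\boxminus P.
\end{align*}
Both rules share the head $a$, so the definition of body-union gives
\begin{align*}
	a\leftarrow\bigl((B-C)\cup(C-B)\bigr)\in (P\boxminus R)\sqcup(R\boxminus P)=P\boxblacktriangle R.
\end{align*}
Finally, the set identity $B\Triangle C=(B-C)\cup(C-B)$ shows that this is exactly the rule $a\leftarrow(B\Triangle C)$, which proves the inclusion.

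There is no real obstacle here. The only point that needs care is to choose the same rules $a\leftarrow B$ and $a\leftarrow C$ in both subtractions. The right-hand side also allows independent choices $a\leftarrow B'\in P$ and $a\leftarrow C'\in R$ in the second factor. This extra freedom is why the inclusion can be strict for non-minimalist programs, as the preceding example shows. It is also why equality holds for minimalist programs, where no such freedom exists.
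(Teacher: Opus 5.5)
Your proof is correct and follows the paper's argument: the paper also rewrites $B\Triangle C$ as $(B-C)\cup(C-B)$ and embeds it into $(P\boxminus R)\sqcup(R\boxminus P)$ by reusing the same pair of rules in both subtractions. In the paper's set-builder description of the right-hand side, this is the specialization $D=C$, $E=B$.
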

\begin{proof} 
\begin{align*} 
	P \boxtriangle R &= \{a \leftarrow (B\Triangle C) \mid a \leftarrow B\in P,\; a \leftarrow C\in R\}\\
		&= \{a \leftarrow ((B-C)\cup (C-B)) \mid a \leftarrow B\in P,\; a \leftarrow C\in R\}\\
		&\subseteq \{a \leftarrow ((B-C)\cup (D-E)) \mid a \leftarrow B\in P,\; a \leftarrow C\in R,\; a \leftarrow D\in R,\; a \leftarrow E\in P\}\\
		&=(P\boxminus R)\sqcup (R\boxminus P)\\
		&= P\boxblacktriangle R.
\end{align*}
\end{proof}

\section{Body-power-set}\label{sec:Body-power-set}

This section introduces and studies a set-like operation on programs inspired by the analogous power set operation on sets:

\begin{definition} Define the \emph{body-power-set} of $P$ by
\begin{align*} 
	P^{ \mathfrak p} := \{a \leftarrow B \mid a \leftarrow C\in P,\; B\subseteq C\}.
\end{align*}
\end{definition}
 
For example, we have
\begin{align*} 
	\emptyset^{ \mathfrak p} = \emptyset,\quad \{a\}^{ \mathfrak p} = \{a\},\quad \left\{
	\begin{array}{l}
		a\\
		b \leftarrow a\\
		c \leftarrow a,b
	\end{array}
	\right\}^{ \mathfrak p} = \left\{
	\begin{array}{l}
		a\\
		b\\
		c\\
		b \leftarrow a\\
		c \leftarrow a\\
		c \leftarrow b\\
		c \leftarrow a,b
	\end{array}
	\right\}.
\end{align*}

Interestingly, we have
\begin{align}\label{eq:P^p=P-F}
	P^{ \mathfrak p} = P\boxminus F_A,
\end{align} where $F_A$ is the full program defined in \prettyref{eq:F_A}.

\begin{fact} The body-power-set operator is an endomorphism on $(\mathbb P_A,\cup,\emptyset)$\footnote{Recall that $F_A$ is the full program over $A$ defined in \prettyref{eq:F_A}.} thus satisfying
\begin{align*} 
	(P\cup R)^{ \mathfrak p} &= P^{ \mathfrak p}\cup R^{ \mathfrak p}\\
	\emptyset^{ \mathfrak p} &= \emptyset.
\end{align*} Moreover, we have
\begin{align*} 
	(P\cap R)^{ \mathfrak p} &\subseteq P^{ \mathfrak p}\cap R^{ \mathfrak p}\\
	F_A^{ \mathfrak p} &= F_A.
\end{align*}
\end{fact}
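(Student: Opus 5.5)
The Fact consists of four claims, and each follows from the rule-wise definition
\begin{align*}
	P^{\mathfrak p} = \{a \leftarrow B \mid a \leftarrow C\in P,\; B\subseteq C\}.
\end{align*}
I will verify each by element-chasing on rules.

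\textbf{Step 1: preservation of union.} A rule $a\leftarrow B$ lies in $(P\cup R)^{\mathfrak p}$ iff there is some $a\leftarrow C\in P\cup R$ with $B\subseteq C$. The witnessing rule lies in $P$ or in $R$, and this is exactly the condition for $a\leftarrow B\in P^{\mathfrak p}\cup R^{\mathfrak p}$. Both inclusions therefore hold at once, since the existential quantifier over $C$ distributes over the disjunction $C\in P$ or $C\in R$.

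\textbf{Step 2: the empty program.} The equality $\emptyset^{\mathfrak p}=\emptyset$ is immediate, because there is no generating rule.

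\textbf{Step 3: the inclusion for intersection.} This follows from monotonicity of ${}^{\mathfrak p}$ with respect to $\subseteq$. If $P\subseteq P'$, then every witness in $P$ is also a witness in $P'$, so $P^{\mathfrak p}\subseteq P'^{\mathfrak p}$. Applying this to $P\cap R\subseteq P$ and $P\cap R\subseteq R$ gives the claim.

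The inclusion is strict in general. For $P=\{a\leftarrow b\}$ and $R=\{a\leftarrow c\}$ we have $P\cap R=\emptyset$, yet $a\in P^{\mathfrak p}\cap R^{\mathfrak p}$. So no equality should be attempted.

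\textbf{Step 4: the full program.} The inclusion $F_A\subseteq F_A^{\mathfrak p}$ holds by taking $B=C$ (reflexivity of $\subseteq$). Conversely, every rule $a\leftarrow B$ with $a\in A$ and $B\subseteq A$ already lies in $F_A$ by \prettyref{eq:F_A}.

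One could alternatively route the argument through the representation \prettyref{eq:P^p=P-F}, $P^{\mathfrak p}=P\boxminus F_A$, together with the distributivity results for $\sqcap$ and $\sqcup$. However, the direct argument is shorter and avoids the minimalist hypotheses those results require, since $F_A$ is not minimalist.

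No step is a real obstacle. The only point needing care is to state the intersection law as an inclusion rather than an identity, as the counterexample in Step 3 shows.
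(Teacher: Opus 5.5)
Your proof is correct and takes essentially the paper's approach: the paper states this as a Fact with no written proof, treating it as immediate from the rule-wise definition of ${}^{\mathfrak p}$, and your element-chasing (witness distribution for $\cup$, monotonicity for $\cap$, reflexivity plus closure under sub-bodies for $F_A$) spells out exactly that. Your counterexample $P=\{a\leftarrow b\}$, $R=\{a\leftarrow c\}$ also correctly shows why the intersection law holds only as an inclusion.
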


Computing the least model of $P^{ \mathfrak p}$ is easy as it contains a fact for each rule head of $P$ and therefore
\begin{align*} 
	(P^{ \mathfrak p})^\omega = h(P).
\end{align*}

The body-power-set operator is idempotent and the identity on interpretations, that is,
\begin{align} 
	(P^{ \mathfrak p})^{ \mathfrak p} &= P^{ \mathfrak p}\\
	\label{eq:I^p=I} I^{ \mathfrak p} &= I.
\end{align} 

Moreover, it can be written in terms of composition as
\begin{align} 
	\label{eq:P^p} P^{ \mathfrak p} \stackrel{\ref{eq:I^ast}}= PA^\ast.
\end{align} 

In particular, for any Krom program $K$, we can derive that identity as
\begin{align*} 
	K^{ \mathfrak p} = K\cup h(K) \stackrel{\ref{eq:h(P)=PA}}= K\cup KA \stackrel{\ref{eq:K(P_cup_R)}}= K(1\cup A) \stackrel{\ref{eq:I^ast}}= KA^\ast.
\end{align*} As a consequence, we have 
\begin{align*} 
	(KL)^{ \mathfrak p} \stackrel{\ref{eq:P^p}}= (KL)A^\ast \stackrel{\ref{eq:KPR}}= K(LA^\ast) \stackrel{\ref{eq:P^p}}= KL^{ \mathfrak p},
\end{align*} where the second identity follows by the associativity of composition with respect to Krom programs (recall that composition is in general not associative). 

Similarly,
\begin{align*} 
	(PR)^{ \mathfrak p} \stackrel{\ref{eq:P^p}}= (PR)A^\ast\approx P(RA^\ast) \stackrel{\ref{eq:P^p}}= PR^{ \mathfrak p},
\end{align*} where we have to invoke \prettyref{p:seminearring} instead of \prettyref{eq:KPR} as for Krom programs above since composition is in general associative only modulo subsumption equivalence.

An interesting identity is
\begin{align*} 
	1_A^{ \mathfrak p} = A^\ast.
\end{align*}

For the program $A^\oplus$, neutral to body-intersection \prettyref{eq:P_sqcap_A^oplus}, we have
\begin{align*} 
	(A^\oplus)^{ \mathfrak p} = F_A.
\end{align*} Hence, as a consequence of \prettyref{eq:P^p=P-F},
\begin{align*} 
	P^{ \mathfrak p } = P\boxminus (A^\oplus)^{ \mathfrak p}.
\end{align*}

We now analyze the body-power-set operator in combination with body-union. 
The next result shows that the body-power-set operator is a homomorphism for body-union:

\begin{proposition}\label{p:s_morphism} The body-power-set operator is an endomorphism on the monoid $(\mathbb P_A,\sqcup,A)$, that is,
\begin{align*} 
	(P\sqcup R)^{ \mathfrak p} &= P^{ \mathfrak p}\sqcup R^{ \mathfrak p}\\
	A^{ \mathfrak p} & \stackrel{\ref{eq:I^p=I}}= A.
\end{align*}
\end{proposition}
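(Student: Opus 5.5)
The plan is to prove the first identity by the composition route the paper has prepared, and to cross-check it with a direct element-wise argument. The second identity $A^{\mathfrak p}=A$ is simply \prettyref{eq:I^p=I} applied to the interpretation $A$, so nothing needs to be done there.

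For the first identity, rewrite the body-power-set via \prettyref{eq:P^p} as $P^{\mathfrak p}=PA^\ast$. Then
\begin{align*}
	(P\sqcup R)^{\mathfrak p} \stackrel{\ref{eq:P^p}}= (P\sqcup R)A^\ast = PA^\ast\sqcup RA^\ast \stackrel{\ref{eq:P^p}}= P^{\mathfrak p}\sqcup R^{\mathfrak p},
\end{align*}
where the middle step is the exact equality \prettyref{eq:(P_sqcup_R)Q=PQ_sqcup_RQ} of \prettyref{t:(P_sqcup_R)Q}. That equality is available under case (ii), which requires $A^\ast\sqcup A^\ast=A^\ast$.

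Verifying this hypothesis is the only real step. By \prettyref{eq:I^ast}, $A^\ast=1_A\cup A$, so for each head $a$ the only rules are $a\leftarrow a$ and the fact $a$. Body-unions of these bodies give $\{a\}\cup\{a\}$, $\{a\}\cup\emptyset$ and $\emptyset\cup\emptyset$, and each of these is again a body of a rule with head $a$ in $A^\ast$. Conversely, $A^\ast\subseteq A^\ast\sqcup A^\ast$ holds by pairing each rule with itself. Hence $A^\ast\sqcup A^\ast=A^\ast$, and the remark following \prettyref{t:(P_sqcup_R)Q} already records this.

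As a safeguard, I would also give the direct argument in case the reader prefers not to route through composition. The program $(P\sqcup R)^{\mathfrak p}$ consists of the rules $a\leftarrow D$ with $D\subseteq B\cup C$ for some $a\leftarrow B\in P$ and $a\leftarrow C\in R$. The program $P^{\mathfrak p}\sqcup R^{\mathfrak p}$ consists of the rules $a\leftarrow(B'\cup C')$ with $B'\subseteq B$ and $C'\subseteq C$ for such rules.
\begin{itemize}
	\item The inclusion $\supseteq$ is immediate, since $B'\cup C'\subseteq B\cup C$.
	\item For $\subseteq$, split $D=(D\cap B)\cup(D\cap C)$ and take $B':=D\cap B$ and $C':=D\cap C$.
\end{itemize}
This splitting is the only point needing care, and it works because the power set absorbs every subset of the union.
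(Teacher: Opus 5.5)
Your proof is correct, and its main line is exactly the paper's: rewrite $P^{\mathfrak p}=PA^\ast$, verify $A^\ast\sqcup A^\ast=A^\ast$ so that case (ii) of \prettyref{t:(P_sqcup_R)Q} gives the exact identity $(P\sqcup R)A^\ast=PA^\ast\sqcup RA^\ast$, and obtain $A^{\mathfrak p}=A$ from $I^{\mathfrak p}=I$. Your supplementary element-wise argument via $D=(D\cap B)\cup(D\cap C)$ is also correct, and it is self-contained because it bypasses both the representation $P^{\mathfrak p}=PA^\ast$ and the distributivity theorem.
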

\begin{proof} Since
\begin{align*} 
	A^\ast\sqcup A^\ast \stackrel{\ref{eq:I^ast}}= (1_A\cup A)\sqcup (1_A\cup A) = 1_A\cup A \stackrel{\ref{eq:I^ast}}= A^\ast,
\end{align*} we may apply \prettyref{eq:(P_sqcup_R)Q=PQ_sqcup_RQ} and compute
\begin{align*} 
	(P\sqcup R)^{ \mathfrak p} \stackrel{\ref{eq:P^p}}= (P\sqcup R)A^\ast \stackrel{\ref{eq:(P_sqcup_R)Q=PQ_sqcup_RQ}}= PA^\ast\sqcup RA^\ast \stackrel{\ref{eq:P^p}}= P^{ \mathfrak p}\sqcup R^{ \mathfrak p}.
\end{align*}
\end{proof}

\section{Homomorphisms}

This section is concerned with homomorphisms, which are structure-preserving mappings between algebras.

First, we state that interpretations under the usual set operations can be embedded into programs under the set-like operations as follows:

\begin{proposition} The mapping ${}^\square:\mathbb I_A\to \mathbb P_A$ given by
\begin{align*} 
	I^\square := \{a \leftarrow I \mid a\in A\}
\end{align*} is a monomorphism $(\mathbb I_A,\cup,\cap,{}^c,\emptyset,A)\to (\mathbb M_A,\sqcup,\sqcap,{}^\copyright,A,A^\oplus)$, that is, ${}^\square$ maps every interpretation $I\in \mathbb I_A$ to a minimalist program $I^\square\in \mathbb M_A$ such that
\begin{align} 
	(I\cup J)^\square &= I^\square\sqcup J^\square\\
	(I\cap J)^\square &= I^\square\sqcap J^\square\\
	(I^c)^\square &= (I^\square)^\copyright\\
	\label{eq:emptyset^square=A} \emptyset^\square &= A\\
	\label{eq:A^square=A^oplus} A^\square &= A^\oplus.
\end{align}
\end{proposition}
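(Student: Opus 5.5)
We verify each claimed property directly from the definitions. In every case the programs involved are of the form $I^\square$, which has exactly one rule per head. That lets each body-level operation collapse to the corresponding set operation on the shared body.

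\emph{Well-definedness and minimality.} First, $I^\square$ contains exactly one rule $a\leftarrow I$ for each $a\in A$, so $I^\square\in\mathbb M_A$.

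\emph{Injectivity.} If $I^\square=J^\square$, pick any $a\in A$. The unique rule with head $a$ has body $I$ on one side and $J$ on the other, so $I=J$. If $A=\emptyset$, then $\mathbb I_A=\{\emptyset\}$ and injectivity is trivial.

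\emph{Binary operations.} By definition,
\begin{align*}
	I^\square\sqcup J^\square = \{a\leftarrow (B\cup C)\mid a\leftarrow B\in I^\square,\ a\leftarrow C\in J^\square\}.
\end{align*}
Since each of $I^\square$ and $J^\square$ has the single rule $a\leftarrow I$, respectively $a\leftarrow J$, for each $a$, the only choice is $B=I$ and $C=J$. This gives $\{a\leftarrow (I\cup J)\mid a\in A\}=(I\cup J)^\square$. The identical argument with $\cap$ in place of $\cup$ gives $(I\cap J)^\square=I^\square\sqcap J^\square$.

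\emph{Complement.} By definition of ${}^\copyright$,
\begin{align*}
	(I^\square)^\copyright=\{a\leftarrow (A-I)\mid a\in A\}=(I^c)^\square,
\end{align*}
using $I^c=A-I$.

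\emph{Constants.} We have $\emptyset^\square=\{a\leftarrow\emptyset\mid a\in A\}$, which is the set of facts $A$ (an interpretation read as a program). Also $A^\square=\{a\leftarrow A\mid a\in A\}$, which is $A^\oplus$ by its definition.

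There is no genuine obstacle here. The only step that needs care is the observation, used for $\sqcup$ and $\sqcap$, that minimality of both arguments removes the cross-combinations that make these operations non-idempotent in general. This is the same phenomenon as in the minimalist cases of \prettyref{t:sqcup_sqcap}.
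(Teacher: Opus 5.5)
Your proposal is correct and takes the paper's approach: the paper just says the result ``follows immediately from the definitions,'' and your case-by-case check spells that out. The checks are that $I^\square$ has exactly one rule per head, so $\sqcup$, $\sqcap$ and ${}^\copyright$ reduce to $\cup$, $\cap$ and complement on the common body, together with injectivity and the images $\emptyset^\square=A$ and $A^\square=A^\oplus$.
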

\begin{proof} Follows immediately from the definitions.
\end{proof}

\begin{remark} Equations \prettyref{eq:emptyset^square=A} and \prettyref{eq:A^square=A^oplus} mean that $\emptyset^\square$ and $A^\square$ are the neutral elements $A$ and $A^\oplus$ with respect to $\sqcup$ and $\sqcap$, respectively.
\end{remark}

The following theorem gives us a way to construct homomorphisms bottom-up from single-element interpretations representing single atoms:

\begin{theorem}\label{t:bullet} Let $A$ and $A'$ be alphabets and let ${}^\bullet:\mathbb I_A\to \mathbb I_{A'}$ be a mapping on interpretations satisfying
\begin{align} 
	\label{eq:I^bullet} I^\bullet &= \bigcup_{a\in I}\{a\}^\bullet,\\
	\label{eq:a^bullet_neq_emptyset} \{a\}^\bullet &\neq \emptyset,\quad\text{for every $a\in A$},\\
	\label{eq:a^bullet_cap_b^bullet} \{a\}^\bullet\cap \{b\}^\bullet &= \emptyset,\quad\text{for all $a,b\in A$ with $a\neq b$}.
\end{align} We extend ${}^\bullet$ to a mapping $\mathbb P_A\to \mathbb P_{A'}$ on programs as
\begin{align*}
	P^\bullet := \left\{a' \leftarrow B^\bullet \;\middle|\; a \leftarrow B\in P,\, a'\in \{a\}^\bullet\right\}.
\end{align*} Then
\begin{align*} 
	{}^\bullet : (\mathbb P_A,\cup,\sqcup, \emptyset)\to (\mathbb P_{A'},\cup,\sqcup, \emptyset)
\end{align*} is a homomorphism satisfying
\begin{align*} 
	P\in \mathbb I_A \quad\Leftrightarrow\quad P^\bullet\in \mathbb I_{A'},
\end{align*} that is, a program $P\in \mathbb P_A$ is an interpretation iff $P^\bullet\in \mathbb P_{A'}$ is an interpretation.
\end{theorem}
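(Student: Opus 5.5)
We verify the required identities directly from the definitions, in three steps: union, body-union, and the characterization of interpretations. The conditions \prettyref{eq:I^bullet}--\prettyref{eq:a^bullet_cap_b^bullet} say that ${}^\bullet$ sends each atom to a nonempty block of a family of pairwise disjoint sets and acts on interpretations by taking unions of blocks. Two consequences drive everything. First, $(B\cup C)^\bullet = B^\bullet\cup C^\bullet$ and $\emptyset^\bullet=\emptyset$, immediate from \prettyref{eq:I^bullet}. Second, each $a'\in A'$ lies in $\{a\}^\bullet$ for at most one $a\in A$ by \prettyref{eq:a^bullet_cap_b^bullet}. So the head $a$ of a rule in $P$ can be recovered from the head $a'$ of any rule it produces in $P^\bullet$.

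For $\cup$ and $\emptyset$, the definition of $P^\bullet$ is a union over the rules of $P$, so $(P\cup R)^\bullet=P^\bullet\cup R^\bullet$ and $\emptyset^\bullet=\emptyset$ hold trivially.

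For $\sqcup$, we compute
\begin{align*}
(P\sqcup R)^\bullet &= \{a'\leftarrow (B\cup C)^\bullet \mid a\leftarrow B\in P,\; a\leftarrow C\in R,\; a'\in\{a\}^\bullet\}\\
&= \{a'\leftarrow B^\bullet\cup C^\bullet \mid a\leftarrow B\in P,\; a\leftarrow C\in R,\; a'\in\{a\}^\bullet\}.
\end{align*}
On the other side,
\begin{align*}
P^\bullet\sqcup R^\bullet = \{a'\leftarrow B^\bullet\cup C^\bullet \mid a_1\leftarrow B\in P,\; a_2\leftarrow C\in R,\; a'\in\{a_1\}^\bullet\cap\{a_2\}^\bullet\}.
\end{align*}
This is the step that needs care. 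By \prettyref{eq:a^bullet_cap_b^bullet}, the intersection $\{a_1\}^\bullet\cap\{a_2\}^\bullet$ is nonempty only when $a_1=a_2$. The two sets therefore coincide. This is the only place disjointness is used.

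For the characterization of interpretations, recall that a program is an interpretation iff all its rules are facts. If $P\in\mathbb I_A$, every rule has body $\emptyset$, so every rule of $P^\bullet$ has body $\emptyset^\bullet=\emptyset$, and $P^\bullet$ is an interpretation. For the converse we argue by contraposition. Suppose $P$ contains a proper rule $a\leftarrow B$ with $B\neq\emptyset$. By \prettyref{eq:a^bullet_neq_emptyset}, $\{a\}^\bullet$ contains some $a'$. Picking $b\in B$, we get $B^\bullet\supseteq\{b\}^\bullet\neq\emptyset$ by \prettyref{eq:I^bullet} and \prettyref{eq:a^bullet_neq_emptyset}. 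Hence $a'\leftarrow B^\bullet$ is a proper rule in $P^\bullet$. This direction uses only nonemptiness, and the $\sqcup$ step uses only disjointness.
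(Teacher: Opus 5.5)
Your proof is correct and follows the paper's argument essentially step by step. Union is immediate. Body-union reduces to $(B\cup C)^\bullet=B^\bullet\cup C^\bullet$ together with disjointness, which recovers the unique $a$ from the head $a'$. The converse direction of the interpretation equivalence uses nonemptiness of $\{a\}^\bullet$ and $\{b\}^\bullet$ for a proper rule $a\leftarrow B$ with $b\in B$. The paper additionally checks that the extended map agrees with the original ${}^\bullet$ on interpretations, since facts $a$ with $a\in I$ map to exactly $\bigcup_{a\in I}\{a\}^\bullet$; this is harmless but not needed for the stated equivalence.
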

\begin{proof} Instantiating \prettyref{eq:I^bullet} with $I=\emptyset$ yields $\emptyset^\bullet = \emptyset$, and ${}^\bullet$ is a homomorphism with respect to $\cup$ since
\begin{align*} 
	(P\cup R)^\bullet 
		&= \left\{a' \leftarrow B^\bullet \;\middle|\; a \leftarrow B\in P\cup R,\; a'\in \{a\}^\bullet\right\}\\
		&= \left\{a' \leftarrow B^\bullet \;\middle|\; a \leftarrow B\in P,\; a'\in \{a\}^\bullet\right\}\cup \left\{a' \leftarrow B^\bullet \;\middle|\; a \leftarrow B\in R,\; a'\in \{a\}^\bullet\right\}\\
		&= P^\bullet\cup R^\bullet.
\end{align*} 

To show that ${}^\bullet$ is a homomorphism with respect to $\sqcup$, we compute
\begin{align*} 
	(P\sqcup R)^\bullet 
		&= \{a \leftarrow (B\cup C) \mid a \leftarrow B\in P,\; a \leftarrow C\in R\}^\bullet\\
		&= \left\{a' \leftarrow (B\cup C)^\bullet \;\middle|\; a \leftarrow B\in P,\; a \leftarrow C\in R,\; a'\in \{a\}^\bullet\right\}\\
		& \stackrel{\ref{eq:I^bullet}}= \left\{a' \leftarrow (B^\bullet\cup C^\bullet) \;\middle|\; a \leftarrow B\in P,\; a \leftarrow C\in R,\; a'\in \{a\}^\bullet\right\}\\
		& \stackrel{\ref{eq:a^bullet_cap_b^bullet}}= P^\bullet\sqcup R^\bullet,
\end{align*} where the last identity holds since $P^\bullet\sqcup R^\bullet$ consists of all rules $a' \leftarrow (D\cup E)$ with $a' \leftarrow D\in P^\bullet$ and $a' \leftarrow E\in R^\bullet$: the inclusion from left to right is immediate from the definition of ${}^\bullet$, and for the inclusion from right to left notice that by \prettyref{eq:a^bullet_cap_b^bullet} the head $a'$ determines the unique atom $a$ satisfying $a'\in \{a\}^\bullet$, which means that $D=B^\bullet$ and $E=C^\bullet$ for rules $a \leftarrow B\in P$ and $a \leftarrow C\in R$ with one and the same head $a$.

It remains to prove the stated equivalence. Every interpretation $I\in \mathbb I_A$ is a program consisting of facts only, which by \prettyref{eq:I^bullet} gives
\begin{align*} 
	\{a \leftarrow \emptyset\mid a\in I\}^\bullet = \left\{a' \leftarrow \emptyset^\bullet \;\middle|\; a\in I,\; a'\in \{a\}^\bullet\right\} = \left\{a' \leftarrow \emptyset \;\middle|\; a'\in \bigcup_{a\in I}\{a\}^\bullet\right\} \stackrel{\ref{eq:I^bullet}}= I^\bullet,
\end{align*} which shows that the extension of ${}^\bullet$ to programs is consistent with the given mapping on interpretations and, in particular, that $P^\bullet$ is an interpretation whenever $P$ is one. For the other direction, assume that $P$ is not an interpretation and pick a proper rule $a \leftarrow B\in p(P)$ together with an atom $b\in B$. By \prettyref{eq:a^bullet_neq_emptyset}, we can pick atoms $a'\in \{a\}^\bullet$ and $b'\in \{b\}^\bullet$, and \prettyref{eq:I^bullet} yields $b'\in B^\bullet$. Hence, $a' \leftarrow B^\bullet$ is a proper rule in $P^\bullet$ and therefore $P^\bullet$ is not an interpretation.
\end{proof}

\begin{example} Let $A := \{a,b\}$ and $A' := \{a',b',c'\}$. Define the mapping ${}^\bullet$ on interpretations in $\mathbb I_A$ by
\begin{align*} 
	\emptyset^\bullet &:= \emptyset\\
	\{a\}^\bullet &:= \{a',c'\}\\
	\{b\}^\bullet &:= \{b'\}.
\end{align*} This forces us to put
\begin{align*} 
	\{a,b\}^\bullet := \{a\}^\bullet\cup \{b\}^\bullet = \{a',b',c'\}.
\end{align*} Informally speaking, ${}^\bullet$ ``blows up'' the atom $a$ to $a',c'$ and leaves the atom $b$ ``unchanged''. By \prettyref{t:bullet}, extending ${}^\bullet$ to programs as described above we automatically get a homomorphism satisfying, for example,
\begin{align*} 
	\left(\left\{
	\begin{array}{l}
		a\\
		b \leftarrow a
	\end{array}
	\right\}\sqcup \left\{
	\begin{array}{l}
		a\\
		b \leftarrow b
	\end{array}
	\right\}\right)^\bullet &= \left\{
	\begin{array}{l}
		a\\
		b \leftarrow a,b
	\end{array}
	\right\}^\bullet\\ 
	&= \left\{
	\begin{array}{l}
		a'\\
		c'\\
		b' \leftarrow a',b',c'
	\end{array}
	\right\}\\ 
	&= \left\{
	\begin{array}{l}
		a'\\
		c'\\
		b' \leftarrow a',c'
	\end{array}
	\right\}\sqcup \left\{
	\begin{array}{l}
		a'\\
		c'\\
		b' \leftarrow b'
	\end{array}
	\right\}\\ 
	&= \left\{
	\begin{array}{l}
		a\\
		b \leftarrow a
	\end{array}
	\right\}^\bullet\sqcup \left\{
	\begin{array}{l}
		a\\
		b \leftarrow b
	\end{array}
	\right\}^\bullet.
\end{align*}
\end{example}

\section{Decomposition}\label{sec:Decomposition}

This section is concerned with syntactic and semantic decompositions of programs. 

Our first observation is as follows:

\begin{fact} We can syntactically decompose every program $P$ into single-rule Krom programs as
\begin{align} 
	\label{eq:P=bigcup_bigsqcup} P = f(P)\cup \bigcup_{a \leftarrow B\in p(P)}\bigsqcup_{b\in B}\{a \leftarrow b\}.
\end{align}
\end{fact}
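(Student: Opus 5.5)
The identity is proved rule by rule, checking both inclusions. The key observation is that for a single proper rule $a\leftarrow B$ with $B=\{b_1,\ldots,b_k\}$ and $k\geq 1$, the iterated body-union of single-rule programs collapses to that rule:
\begin{align*}
	\bigsqcup_{b\in B}\{a\leftarrow b\} = \{a\leftarrow b_1\}\sqcup\ldots\sqcup\{a\leftarrow b_k\} = \{a\leftarrow B\}.
\end{align*}
The iterated body-union is well-defined independently of the order and bracketing, since $(\mathbb P_A,\sqcup,A)$ is a commutative monoid by \prettyref{f:sqcup_monoid}.

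The collapse is shown by induction on $k$. For $k=1$ there is nothing to show. For the step, assume $\bigsqcup_{i<k}\{a\leftarrow b_i\}=\{a\leftarrow\{b_1,\ldots,b_{k-1}\}\}$. This program consists of a single rule with head $a$, and so does $\{a\leftarrow b_k\}$. By the definition of body-union, the body-union of two single-rule programs with the same head is the single rule whose body is the union of the two bodies. This gives $\{a\leftarrow\{b_1,\ldots,b_k\}\}$.

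The point to watch is that no extra rules arise here, unlike in the non-idempotent examples such as $K\sqcup K\neq K$. This holds because every factor is a singleton program, and hence minimalist, so each body-union step has exactly one pair of rules to combine. Since all factors share the head $a$, the intersection of heads required by \prettyref{eq:h(P_sqcup_R)} is never empty. It is also important that $B\neq\emptyset$, since the empty body-union is the unit $A$ rather than a fact.

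With the collapse in hand, the right-hand side of \prettyref{eq:P=bigcup_bigsqcup} becomes $f(P)\cup\bigcup_{a\leftarrow B\in p(P)}\{a\leftarrow B\}=f(P)\cup p(P)$. Every rule is either a fact or a proper rule, so $f(P)\cup p(P)=P$. Facts must be kept separately in the formula precisely because an empty body-union yields $A$, not the fact $a$.

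There is no real obstacle in this argument. The only care needed is the singleton observation that prevents cross-combinations, together with the separate treatment of facts.
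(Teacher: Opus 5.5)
Your proof is correct and matches the reasoning behind this Fact. The paper states it without proof, as an immediate consequence of the definition of body-union. Your argument is exactly that consequence made explicit: for each proper rule $a\leftarrow B$, the body-union of the singleton programs $\{a\leftarrow b\}$, $b\in B$, collapses to $\{a\leftarrow B\}$ because $B\neq\emptyset$ and singletons allow no cross-combinations, which leaves $f(P)\cup p(P)=P$.
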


For example,
\begin{align*} 
	\left\{
	\begin{array}{l}
		a\\
		b \leftarrow a,b\\
		c \leftarrow a,b,c
	\end{array}
	\right\} = \{a\}\cup (\{b \leftarrow a\}\sqcup \{b \leftarrow b\})\cup (\{c \leftarrow a\}\sqcup \{c \leftarrow b\}\sqcup \{c \leftarrow c\}).
\end{align*} 

In what follows, we are interested in decomposing programs with respect to body-union. 

We shall now show that composition can be represented via $\sqcup$, $\cup$, $f$, and $p$ which is useful given that $\sqcup$ and $\cup$ satisfy nice algebraic laws (see Theorems \ref{t:cup_sqcup} and \ref{t:DT_cup_ss}):

\begin{theorem}\label{t:bigcup_bigsqcup_bigcup_bigsqcup} For any programs $P$ and $R$,
\begin{align} 
	\label{eq:PR=bigcup_bigsqcup_bigcup_bigsqcup} PR = f(P)\cup \bigcup_{a \leftarrow B\in p(P)}\bigsqcup_{b\in B}\bigcup_{b \leftarrow C\in R}\bigsqcup_{c\in C}\{a \leftarrow c\},
\end{align} where we use the convention $\bigsqcup_{c\in C}\{a \leftarrow c\} = \{a \leftarrow C\}$ for every atom $a$ and every set of atoms $C$, so that the empty body-union of rules with head $a$ is the fact $a$.
\end{theorem}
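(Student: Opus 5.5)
We prove the identity by unfolding the definition of composition rule by rule. We treat facts and proper rules of $P$ separately, using $(P\cup R)Q = PQ\cup RQ$ from \prettyref{eq:(P_cup_R)Q} together with $P = f(P)\cup p(P)$. This gives
\begin{align*}
	PR = f(P)R\cup p(P)R.
\end{align*}
Since facts are left unchanged by composition (the case $k=0$ in the definition), we have $f(P)R = f(P)$. By \prettyref{eq:(P_cup_R)Q} again, $p(P)R = \bigcup_{a \leftarrow B\in p(P)}\{a \leftarrow B\}R$. It therefore remains to show, for a single proper rule $a \leftarrow B$, that
\begin{align*}
	\{a \leftarrow B\}R = \bigsqcup_{b\in B}\bigcup_{b \leftarrow C\in R}\bigsqcup_{c\in C}\{a \leftarrow c\}.
\end{align*}

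By the stated convention, the innermost body-union $\bigsqcup_{c\in C}\{a \leftarrow c\}$ is simply $\{a \leftarrow C\}$. This holds also for $C=\emptyset$, where it yields the fact $a$, and for singletons $C$; for larger $C$ it follows by iterating the definition of $\sqcup$ on single rules with the same head. Hence the middle term is
\begin{align*}
	R_{a,b} := \bigcup_{b \leftarrow C\in R}\{a \leftarrow C\} = \{a \leftarrow C \mid b \leftarrow C\in R\},
\end{align*}
which is $R$ restricted to head $b$ and renamed to head $a$.

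Next we unfold the outer body-union $\bigsqcup_{b\in B}R_{a,b}$. By \prettyref{f:sqcup_monoid} it is well defined independently of order and bracketing, since $\sqcup$ is associative and commutative. All programs $R_{a,b}$ have the single head $a$, so iterating the definition of $\sqcup$ gives the set of all rules $a \leftarrow \bigcup_{b\in B}C_b$, where $(C_b)_{b\in B}$ ranges over $R$-selections for $B$ in the sense of the proof of \prettyref{t:(P_sqcup_R)Q}. This is exactly $\{a \leftarrow B\}R$ by the definition of composition. Moreover, if some $b\in B$ heads no rule of $R$, then $R_{a,b}=\emptyset$ and the whole body-union is $\emptyset$ by \prettyref{eq:P_sqcup_0=0}, matching the fact that no selection exists.

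The only delicate point is the bookkeeping in this iterated body-union. We handle it by a short induction on $|B|$. For $|B|=1$ the claim is immediate. For the inductive step, write $B = B'\cup\{b\}$; then the body-union of $R_{a,b}$ with the program for $B'$ combines every selection over $B'$ with every rule body $C_b$, which yields every selection over $B$. Note that $B\neq\emptyset$ because the rule is proper, so we never need an empty outer body-union; the facts are handled by $f(P)$.
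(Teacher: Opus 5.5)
Your proof is correct, but for the single proper rule $a\leftarrow B$ it takes a different route from the paper. The first step is the same as the paper's: $f(P)R=f(P)$ together with right-distributivity of composition over $\cup$.

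\emph{The paper's route.} It stays at the algebraic level. It writes $\{a\leftarrow B\}=\bigsqcup_{b\in B}\{a\leftarrow b\}$ and pushes $R$ inside via \prettyref{eq:(P_sqcup_R)Q=PQ_sqcup_RQ}. This is legitimate by case (i) of \prettyref{t:(P_sqcup_R)Q}, because the singleton bodies are pairwise disjoint. It then splits $R$ into its single rules using the Krom left-distributivity law \prettyref{eq:K(P_cup_R)}. Finally it evaluates $\{a\leftarrow b\}\{c\leftarrow C\}$, which is $\{a\leftarrow C\}$ if $c=b$ and $\emptyset$ otherwise.

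\emph{Your route.} You evaluate the right-hand side directly. You collapse the inner part to $R_{a,b}$, and your induction on $|B|$ identifies $\bigsqcup_{b\in B}R_{a,b}$ with the rules generated by $R$-selections, i.e.\ with $\{a\leftarrow B\}R$ by definition. Your induction is in effect an inline proof of the disjoint-body special case of the selection-gluing argument behind \prettyref{t:(P_sqcup_R)Q}. Your explicit description of $R_{a,b}$ replaces the appeal to \prettyref{eq:K(P_cup_R)}.

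\emph{What each buys.} The paper's version is shorter given those laws. It also serves the section's purpose of showing composition as a consequence of the algebra of $\cup$ and $\sqcup$. Yours is self-contained: for the single-rule step it rests only on the definitions of $\circ$ and $\sqcup$ and on associativity and commutativity of $\sqcup$. It also makes the boundary cases visible. Properness of the rule guarantees a non-empty outer body-union; an empty one would be the neutral element $A$, not the fact $a$. And an atom $b\in B$ that heads no rule of $R$ correctly yields $\emptyset$ via \prettyref{eq:P_sqcup_0=0}.
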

\begin{proof} 
\begin{align*} 
	PR &= f(P)\cup \bigcup_{a \leftarrow B\in p(P)}\{a \leftarrow B\}R\\
		&= f(P)\cup \bigcup_{a \leftarrow B\in p(P)}\bigsqcup_{b\in B}\{a \leftarrow b\}R\\
		& \stackrel{\ref{eq:K(P_cup_R)}}= f(P)\cup  \bigcup_{a \leftarrow B\in p(P)}\bigsqcup_{b\in B}\bigcup_{c \leftarrow C\in R}\{a \leftarrow b\}\{c \leftarrow C\}\\
		&= f(P)\cup \bigcup_{a \leftarrow B\in p(P)}\bigsqcup_{b\in B}\bigcup_{b \leftarrow C\in R}\{a \leftarrow C\}\\
		&= f(P)\cup \bigcup_{a \leftarrow B\in p(P)}\bigsqcup_{b\in B}\bigcup_{b \leftarrow C\in R}\bigsqcup_{c\in C}\{a \leftarrow c\},
\end{align*} where the first identity uses $f(P)R=f(P)$ together with \prettyref{eq:(P_cup_R)Q}. For the second identity, we first write $\{a\leftarrow B\}=\bigsqcup_{b\in B}\{a\leftarrow b\}$ and then apply \prettyref{eq:(P_sqcup_R)Q=PQ_sqcup_RQ} iteratively. This is legitimate since the bodies $\{b\}$, $b\in B$, are pairwise disjoint and therefore satisfy case (i).
\end{proof}

Recall from \prettyref{eq:K(P_cup_R)} that Krom programs have the nice property that composition distributes over union from both sides which is not the case in general (see \prettyref{eq:P(Q_cup_R)}). Decomposing programs into Krom programs is thus desirable. We shall now prove the main results of the paper showing that every program can be $\sqcup$-decomposed into Krom programs (Theorems \ref{t:MDT} and \ref{t:DT}), but before we can give the formal proofs we first introduce some auxiliary notions and then illustrate the main idea with an example:

\begin{definition} Define the \emph{body-or} of $P$ as
\begin{align*} 
	P^\lor := f(P)\cup \{a \leftarrow b \mid a \leftarrow B\in p(P),\; b\in B\}.
\end{align*} 
\end{definition}

Notice that the body-or of a program is always a Krom program consisting only of rules with at most one body atom.
 For example, we have
\begin{align*} 
	\left\{
	\begin{array}{l}
		a\\
		a \leftarrow b,c
	\end{array}
	\right\}^\lor = \left\{
	\begin{array}{l}
		a\\
		a \leftarrow b\\
		a \leftarrow c
	\end{array}
	\right\}.
\end{align*}



\begin{example}\label{e:M} Now consider the minimalist program
\begin{align*} 
	M := \left\{
	\begin{array}{l}
		a\\
		b\\
		c \leftarrow a,b\\
		d \leftarrow a,b,d
	\end{array}
	\right\}.
\end{align*} We show that we can decompose $M$ with width $w(M)=3$ into three Krom programs $K_1,K_2,K_3$, where
\begin{align*} 
	K_1,K_2,K_3\subseteq M^\lor = \left\{
	\begin{array}{l}
		a\\
		b\\
		c \leftarrow a\\
		c \leftarrow b\\
		d \leftarrow a\\
		d \leftarrow b\\
		d \leftarrow d
	\end{array}
	\right\},
\end{align*} such that
\begin{align} 
	\label{eq:M=K1K2K3} M = K_1\sqcup K_2\sqcup K_3
\end{align} and
\begin{align*} 
	M^\omega = K_1^\omega\sqcup K_2^\omega\sqcup K_3^\omega.
\end{align*} For this, define
\begin{align*} 
	K_1 := \left\{
	\begin{array}{l}
		a\\
		b\\
		c \leftarrow a\\
		d \leftarrow a 
	\end{array}
	\right\},\quad K_2 := \left\{
	\begin{array}{l}
		a\\
		b\\
		c \leftarrow b\\
		d \leftarrow b
	\end{array}
	\right\},\quad K_3 := \left\{
	\begin{array}{l}
		a\\
		b\\
		c \leftarrow b\\
		d \leftarrow d
	\end{array}
	\right\}.
\end{align*} We have \prettyref{eq:M=K1K2K3} and
\begin{align*} 
	M^\omega = \{a,b,c\} = \{a,b,c,d\}\cap \{a,b,c,d\}\cap \{a,b,c\} = K_1^\omega\cap K_2^\omega\cap K_3^\omega \stackrel{\ref{eq:I_sqcup_J}}=K_1^\omega\sqcup K_2^\omega\sqcup K_3^\omega.
\end{align*}
\end{example}

We are now ready to prove the following decomposition theorem for minimalist programs containing at most one rule for each rule head by generalizing the reasoning in \prettyref{e:M}:

\begin{theorem}[Minimalist Decomposition Theorem]\label{t:MDT} Let $M$ be a minimalist program and put
\begin{align*} 
	m := \max\{1,w(M)\}.
\end{align*} Then there are Krom programs
\begin{align*} 
	K_1,\ldots,K_m\subseteq M^\lor
\end{align*} such that
\begin{align}
    \label{eq:M=K1_sqcups_Km} M = K_1\sqcup\ldots\sqcup K_m
\end{align} and
\begin{align}
    \label{eq:Momega=intersection} M^\omega = K_1^\omega\sqcup\ldots\sqcup K_m^\omega \stackrel{\ref{eq:I_sqcup_J}}= K_1^\omega\cap\cdots\cap K_m^\omega.
\end{align}
\end{theorem}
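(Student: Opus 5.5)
The plan is to build the Krom programs $K_1,\ldots,K_m$ explicitly by choosing, for each proper rule of $M$, an ordering of its body, and letting $K_i$ keep the $i$-th body atom (repeating the last one when the body is shorter than $m$). Since $M$ is minimalist, each $K_i$ will again be minimalist, with exactly the same heads as $M$. This makes the syntactic identity \prettyref{eq:M=K1_sqcups_Km} a direct computation. The only freedom is the ordering, and I will use it to secure the semantic identity \prettyref{eq:Momega=intersection}.

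\emph{Construction.} Put $I := M^\omega$. By \prettyref{eq:LM(P)=P^omega} and \prettyref{prop:LM}, $I$ is a fixed point of $T_M$, so $MI = I$. Hence if $a\notin I$ and $a \leftarrow B\in M$, then $B\not\subseteq I$. For every proper rule $a \leftarrow B\in p(M)$ with $k := |B|$, so that $1\le k\le m$, fix an enumeration $B=\{b_1,\ldots,b_k\}$. If $a\notin I$, choose it so that $b_1\notin I$, which is possible by the previous remark. Define
\begin{align*}
	K_i := f(M)\cup \{a \leftarrow b_{\min(i,k)} \mid a \leftarrow B\in p(M)\},\qquad i=1,\ldots,m.
\end{align*}
Clearly $K_i\subseteq M^\lor$. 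If $w(M)=0$, then $m=1$ and $K_1 = M$, and everything is trivial.

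\emph{Syntactic part.} Every $K_i$ contains exactly one rule per head in $h(M)$. Therefore $K_1\sqcup\ldots\sqcup K_m$ contains, for each head $a$, exactly one rule, whose body is the union of the chosen bodies. For a fact this union is $\emptyset$. For a proper rule it is $\{b_{\min(1,k)},\ldots,b_{\min(m,k)}\}=B$, since $k\le m$. This gives \prettyref{eq:M=K1_sqcups_Km}. Associativity and commutativity of $\sqcup$ from \prettyref{f:sqcup_monoid} justify the unbracketed notation.

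\emph{Semantic part.} The second equality in \prettyref{eq:Momega=intersection} is \prettyref{eq:I_sqcup_J}, so it suffices to show $M^\omega = \bigcap_i K_i^\omega$.

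For $\subseteq$, I use $M\lesssim K_i$. Indeed, for any interpretation $J$, if $a \leftarrow B\in M$ with $B\subseteq J$, then the rule of $K_i$ with head $a$ has its body contained in $B\subseteq J$, so $MJ\subseteq K_iJ$. By \prettyref{eq:P_leqq_ss_R_} this yields $M^\omega\subseteq K_i^\omega$ for all $i$.

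For $\supseteq$, it suffices to show $K_1^\omega\subseteq I$, i.e.\ that $I$ is a model of $K_1$ (\prettyref{f:PI=I}). I check each rule of $K_1$:
\begin{itemize}[leftmargin=*]
	\item Facts of $M$ lie in $I$.
	\item A rule $a \leftarrow b_1$ with $a\in I$ is trivially satisfied.
	\item A rule $a \leftarrow b_1$ with $a\notin I$ has $b_1\notin I$ by construction, so it is vacuously satisfied.
\end{itemize}
Hence $K_1 I\subseteq I$, and therefore $\bigcap_i K_i^\omega\subseteq K_1^\omega\subseteq I$.

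The main obstacle is this last inclusion. Choosing the body atoms naively can make every $K_i$ derive some atom outside $M^\omega$, for instance through cyclic dependencies. The key point of the plan is to use the supportedness of $M^\omega$ to pick, for each non-derived head, a "witness" body atom that is also not derived, and to place all these witnesses in the single component $K_1$.
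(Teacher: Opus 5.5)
Your proposal is correct and follows essentially the same route as the paper's proof. You place a body atom outside $M^\omega$ first for every head $a\notin M^\omega$, define $K_i$ from the $i$-th body atom (repeating atoms when the body is shorter than $m$), obtain $M^\omega\subseteq K_i^\omega$ from $M\lesssim K_i$, and obtain $K_1^\omega\subseteq M^\omega$ by checking that $M^\omega$ is a model of $K_1$.
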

\begin{proof}
Since $M$ is minimalist, for every $a\in h(M)$ there is a unique rule
\begin{align*} 
	r_a = a\leftarrow B_a
\end{align*} in $M$.

We first choose, for every non-empty body $B_a$, a distinguished atom $b_a^\ast\in B_a$. If $a\notin M^\omega$, choose
\begin{align*} 
	b_a^\ast\in B_a- M^\omega.
\end{align*} Such an atom exists. Indeed, if $B_a\subseteq M^\omega$, then
\begin{align*} 
	a\in MM^\omega = M^\omega,
\end{align*} contrary to $a\notin M^\omega$. If $a\in M^\omega$, choose $b_a^\ast$ arbitrarily from $B_a$.

Since $|B_a|\leq m$, for every non-empty $B_a$ we can choose
atoms
\begin{align*} 
	b_{a,1},\ldots,b_{a,m}\in B_a
\end{align*} such that
\begin{align*} 
	b_{a,1}=b_a^\ast \qquad\text{and}\qquad B_a=\{b_{a,1},\ldots,b_{a,m}\}.
\end{align*} If $|B_a|<m$, the remaining positions are filled by repeating arbitrary atoms of $B_a$.

For $1\leq i\leq m$, define
\begin{align*} 
	K_i := f(M)\cup \{a\leftarrow b_{a,i} \mid a\leftarrow B_a\in p(M)\}.
\end{align*} Every $K_i$ is a Krom program. Moreover, every rule $a\leftarrow b_{a,i}$ belongs to $M^\lor$, and hence
\begin{align*} 
	K_i\subseteq M^\lor.
\end{align*}

For every $a\in h(M)$, the unique rules with head $a$ in the programs $K_1,\ldots,K_m$ have bodies whose union is $B_a$. Consequently,
\begin{align*} 
	K_1\sqcup\ldots\sqcup K_m = \{a\leftarrow B_a\mid a\in h(M)\} = M.
\end{align*} This proves \eqref{eq:M=K1_sqcups_Km}.

We next compare the least models. For every interpretation $I$ and every $i$, we have
\begin{align*} 
	MI\subseteq K_i I.
\end{align*} Indeed, whenever $B_a\subseteq I$, we also have $b_{a,i}\in I$. Facts are present in both $M$ and $K_i$. By induction on the iterations of the immediate consequence operators, it follows that
\begin{align*} 
	M^\omega\subseteq K_i^\omega \qquad (1\leq i\leq m).
\end{align*} It remains to show that the first component introduces no new atoms. We claim that $M^\omega$ is a model of $K_1$. Consider a rule in $K_1$. If it is a fact, then its head belongs to $f(M)\subseteq M^\omega$. Otherwise, it has the form
\begin{align*} 
	a\leftarrow b_{a,1}.
\end{align*} If $a\in M^\omega$, this rule is satisfied by $M^\omega$. If $a\notin M^\omega$, then the construction of $b_{a,1}$ gives
\begin{align*} 
	b_{a,1}=b_a^\ast\notin M^\omega,
\end{align*} so the body of the rule is false in $M^\omega$. Thus every rule
of $K_1$ is satisfied by $M^\omega$, and therefore
\begin{align*} 
	K_1^\omega\subseteq M^\omega.
\end{align*} Together with the opposite inclusion proved above, this yields
\begin{align*} 
	K_1^\omega=M^\omega.
\end{align*}

Finally,
\begin{align*} 
	M^\omega=K_1^\omega\subseteq K_i^\omega \qquad (1\leq i\leq m),
\end{align*} and hence
\begin{align*} 
	K_1^\omega\cap\ldots\cap K_m^\omega=M^\omega.
\end{align*} Since body-union coincides with intersection on interpretations by \prettyref{eq:I_sqcup_J},
we obtain
\begin{align*} 
	K_1^\omega\sqcup\ldots\sqcup K_m^\omega = K_1^\omega\cap\ldots\cap K_m^\omega = M^\omega.
\end{align*} This proves \eqref{eq:Momega=intersection}.
\end{proof}

\begin{remark} The decomposition in \prettyref{eq:M=K1_sqcups_Km} is not unique. The condition in \prettyref{eq:M=K1_sqcups_Km} is essential since without it, we can simply choose $M^\omega$ --- which is an interpretation and thus trivially Krom --- as our ``decomposition.''
\end{remark}

\begin{theorem}[Decomposition Theorem]\label{t:DT} Let $P$ be a program and put
\begin{align*} 
	m := \max\{1,w(P)\}.
\end{align*} Then there are Krom programs
\begin{align*} 
	K_1,\ldots,K_m\subseteq P^\lor
\end{align*} such that
\begin{align}
	\label{eq:P_subseteq_K1_sqcups_Km} P &\subseteq K_1\sqcup\cdots\sqcup K_m
\end{align} and
\begin{align}
	\label{eq:Pomega_subseteq_intersection} P^\omega &\subseteq K_1^\omega\sqcup\cdots\sqcup K_m^\omega \stackrel{\ref{eq:I_sqcup_J}}= K_1^\omega\cap\cdots\cap K_m^\omega.
\end{align}
\end{theorem}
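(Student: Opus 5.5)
The plan is to imitate the proof of \prettyref{t:MDT}, now choosing one Krom rule per \emph{rule} of $P$ instead of per head. For every proper rule $r = a\leftarrow B_r\in p(P)$ we enumerate its body as $b_{r,1},\ldots,b_{r,m}$, with $B_r=\{b_{r,1},\ldots,b_{r,m}\}$; if $|B_r|<m$ we pad by repeating atoms, which is possible since $|B_r|\leq w(P)\leq m$. We then put
\begin{align*}
	K_i := f(P)\cup \{a\leftarrow b_{r,i}\mid r=a\leftarrow B_r\in p(P)\},\qquad 1\leq i\leq m.
\end{align*}
Each $K_i$ is Krom and satisfies $K_i\subseteq P^\lor$ by the definition of body-or.

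For \prettyref{eq:P_subseteq_K1_sqcups_Km}, consider first a fact $a\in f(P)$. It lies in every $K_i$, so the body-union of the $m$ copies of the fact $a$ is again the fact $a$, which lies in $K_1\sqcup\cdots\sqcup K_m$. For a proper rule $r=a\leftarrow B_r$, we select the rule $a\leftarrow b_{r,i}$ from each $K_i$. By the definition of $\sqcup$, iterated via the associativity in \prettyref{f:sqcup_monoid}, the body-union contains $a\leftarrow\{b_{r,1},\ldots,b_{r,m}\}=a\leftarrow B_r$. Only an inclusion is obtained, because choosing components that come from different rules with the same head yields extra rules. This is precisely where the minimalist assumption was used in \prettyref{t:MDT}.

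For \prettyref{eq:Pomega_subseteq_intersection}, it suffices to show $P^\omega\subseteq K_i^\omega$ for each $i$. The final equality with the body-union of the $K_i^\omega$ is then just \prettyref{eq:I_sqcup_J} applied iteratively. We show $P\lesssim K_i$, that is, $PI\subseteq K_iI$ for every interpretation $I$. If $a\in PI$, then either $a\in f(P)\subseteq K_i$, or some rule $a\leftarrow B_r\in p(P)$ has $B_r\subseteq I$, so $b_{r,i}\in I$ and hence $a\in K_iI$. By \prettyref{eq:P_leqq_ss_R_} together with \prettyref{eq:LM(P)=P^omega}, we get $P^\omega=LM(P)\subseteq LM(K_i)=K_i^\omega$.

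The case $w(P)=0$ needs a quick check. There $m=1$ and $K_1=f(P)=P$, so both claims hold with equality. The only delicate point is the bookkeeping with padded enumerations when forming the iterated body-union. Unlike \prettyref{t:MDT}, no reverse inclusion is required, so there is no need for the distinguished atoms $b_a^\ast$.
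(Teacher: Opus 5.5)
Your proposal is correct and follows the paper's proof essentially verbatim. It uses the same per-rule padded enumeration $b_{r,1},\ldots,b_{r,m}$ defining $K_i := f(P)\cup\{a\leftarrow b_{r,i}\mid r=a\leftarrow B\in p(P)\}$, the same rule-by-rule check of the syntactic inclusion, and the same argument $P\lesssim K_i$ combined with \prettyref{eq:P_leqq_ss_R_} and \prettyref{eq:LM(P)=P^omega} for the semantic inclusion; your separate check of the case $w(P)=0$ is a harmless extra.
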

\begin{proof} Since $|B|\leq m$ holds for every rule $a \leftarrow B\in p(P)$, we can choose for every such rule $r = a \leftarrow B$ atoms
\begin{align*} 
	b_{r,1},\ldots,b_{r,m}\in B \qquad\text{such that}\qquad B = \{b_{r,1},\ldots,b_{r,m}\},
\end{align*} where in case $|B|<m$ the remaining positions are filled by repeating arbitrary atoms of $B$. For $1\leq i\leq m$, define
\begin{align*} 
	K_i := f(P)\cup \{a \leftarrow b_{r,i} \mid r = a \leftarrow B\in p(P)\}.
\end{align*} Every $K_i$ is a Krom program and every one of its rules occurs in $P^\lor$, which gives $K_i\subseteq P^\lor$.

We first prove \prettyref{eq:P_subseteq_K1_sqcups_Km}. Let $a \leftarrow B\in P$. In case $B=\emptyset$, we have $a\in f(P)\subseteq K_i$ for every $i$ and hence $a\in K_1\sqcup\cdots\sqcup K_m$. In case $r = a \leftarrow B\in p(P)$, the rule $a \leftarrow b_{r,i}$ belongs to $K_i$ for every $i$, and the body-union of these $m$ rules is
\begin{align*} 
	a \leftarrow (\{b_{r,1}\}\cup\ldots\cup \{b_{r,m}\}) = a \leftarrow B,
\end{align*} which shows $a \leftarrow B\in K_1\sqcup\cdots\sqcup K_m$. Notice that in contrast to \prettyref{t:MDT} we cannot expect equality here, since the body-union also combines rules originating from \emph{different} rules of $P$ with the same head, thus producing rules not occurring in $P$ (see \prettyref{e:DT}).

We now prove \prettyref{eq:Pomega_subseteq_intersection}. Let $I$ be an arbitrary interpretation and let $a\in PI$, that is, there is a rule $a \leftarrow B\in P$ with $B\subseteq I$. In case $B=\emptyset$, we have $a\in f(P)\subseteq K_i$ and hence $a\in K_iI$; and in case $r = a \leftarrow B\in p(P)$, we have $b_{r,i}\in B\subseteq I$ and therefore $a\in K_iI$ as well. This shows
\begin{align*} 
	P\lesssim K_i \qquad (1\leq i\leq m),
\end{align*} which by \prettyref{eq:P_leqq_ss_R_} and \prettyref{eq:LM(P)=P^omega} yields $P^\omega\subseteq K_i^\omega$ for every $i$ and hence
\begin{align*} 
	P^\omega\subseteq K_1^\omega\cap\cdots\cap K_m^\omega \stackrel{\ref{eq:I_sqcup_J}}= K_1^\omega\sqcup\cdots\sqcup K_m^\omega.
\end{align*}
\end{proof}

The next example provides an intuition why we cannot turn the inclusion in \prettyref{t:DT} into an equality:

\begin{example}\label{e:DT} The non-minimalist program
\begin{align*} 
	P := \left\{
	\begin{array}{l}
		a \leftarrow b,c\\
		a \leftarrow d,e
	\end{array}
	\right\}
\end{align*} cannot be $\sqcup$-decomposed into Krom programs. Consider, for instance,
\begin{align*} 
	\left\{
	\begin{array}{l}
		a \leftarrow b\\
		a \leftarrow d
	\end{array}
	\right\}\sqcup \left\{
	\begin{array}{l}
		a \leftarrow c\\
		a \leftarrow e
	\end{array}
	\right\} = \left\{
	\begin{array}{l}
		a \leftarrow b,c\\
		a \leftarrow \underline{b,e}\\
		a \leftarrow \underline{d,c}\\
		a \leftarrow d,e
	\end{array}
	\right\},
\end{align*} where the body-union produces the two underlined rules not occurring in $P$. This is not an artifact of the particular choice. Suppose $P = K_1\sqcup\ldots\sqcup K_n$ for Krom programs $K_1,\ldots,K_n$. Since every rule in $P$ has head $a$, each $K_i$ contains at least one rule with head $a$, and by the definition of body-union there are bodies $B_1,\ldots,B_n$ and $C_1,\ldots,C_n$ of rules with head $a$ in $K_1,\ldots,K_n$ satisfying
\begin{align*} 
	B_1\cup\ldots\cup B_n = \{b,c\} \quad\text{and}\quad C_1\cup\ldots\cup C_n = \{d,e\}.
\end{align*} As each $B_i$ contains at most one atom, we have $B_i = \{b\}$ for some $i$, and replacing $C_i$ by $B_i$ in the second selection yields a rule in $K_1\sqcup\ldots\sqcup K_n = P$ whose body contains $b$ and is contained in $\{b,d,e\}$ --- contradicting the fact that the only rule in $P$ with $b$ occurring in its body is $a \leftarrow b,c$.
\end{example}

\section{Towards applications to automatic logic programming}\label{sec:App}

Automatic logic programming studies the construction of logic programs from specifications and background knowledge \citep{Deville94}. A key challenge in this setting is to provide principled mechanisms for combining and transforming programs while preserving their semantics. 

In this section, we sketch how the set-like operations introduced in this paper can contribute to this goal by enabling structured manipulations of programs and supporting decomposition-based approaches to program construction.

Beyond these examples, algebraic operations on logic programs are also relevant to program transformation and optimization \citep{Fuchs92,Partsch83}.

\subsection{Algebraic semantics}

A central requirement for algebraic reasoning about logic programs is the ability to decompose complex programs into simpler components in such a way that their semantics can be derived from the semantics of these components. 

Theorems \ref{t:MDT} and \ref{t:DT} provide two related structural results: minimalist programs admit exact representations in terms of Krom components, whereas arbitrary programs admit corresponding syntactic and semantic upper approximations. In the minimalist case, the least model is the intersection of the least models of the components.

Krom programs satisfy algebraic laws such as associativity \prettyref{eq:KPR} and left-distributivity \prettyref{eq:K(P_cup_R)}, which fail for arbitrary programs. The decomposition results therefore provide structural representations in terms of algebraically simpler components. Since the construction used in \prettyref{t:MDT} depends on the least model of the original program, however, the theorem does not by itself yield an algorithmic reduction of reasoning to the components.

\subsection{Schema-based transformation}

Program schemes provide a common mechanism for representing reusable program fragments that can be instantiated to obtain concrete programs \citep{Greibach75,Flener95,Fuchs92}. 

Within an algebraic setting, such schemes can be naturally understood as terms (or polynomials) over logic programs, where programs act as constants, algebraic operations as function symbols, and program variables as placeholders (see e.g. \citep[§II.1]{Burris00}). For example, the term $\{a \leftarrow b\}\sqcup X$ defines a schema that can be instantiated with a program $P$ to yield $\{a \leftarrow b\}\sqcup P$, thus inducing a transformation on programs.

Crucially, the expressive power of such schemes depends on the underlying algebra of programs. Introducing new algebraic operations expands the space of definable transformations. In particular, the set-like operations considered in this paper enable schemes that are not expressible in a natural way without them, such as those based on body-union.

\subsection{Inductive logic programming}\label{sec:ILP}

In inductive logic programming (ILP) \citep{Muggleton91,Shapiro81,Cropper22,Nienhuys-Cheng97}, (partial) specifications consist of sets of positive and negative examples in the form of interpretations $I_+$ and $I_-$, respectively, and the task is, given some background program $R$, to find a program $P$ such that
\begin{align*} 
	I_+\subseteq (P\cup R)^\omega \quad\text{and}\quad I_-\cap (P\cup R)^\omega = \emptyset.
\end{align*}

From an algebraic perspective, this task amounts to searching the space of programs under the least model operator and its interaction with program composition. A key difficulty is that this search space is typically large and poorly structured.

Algebraic operations on programs provide a way to impose structure on this space. In particular, the decomposition results established in Theorems \ref{t:MDT} and \ref{t:DT} allow one to reduce the search for a solution to reasoning about simpler components, since they make explicit how the behavior of a program emerges from the behavior of its parts.

\subsection{Learning from interpretation transitions}\label{sec:LfIT}

Closely related to ILP \prettyref{sec:ILP} is the following learning setting introduced in \cite{Inoue14}. 

Every logic program $P$ is a state transition system that maps, via its immediate consequence van Emden-Kowalski operator $T_P$, every state of the world represented as an interpretation $I$ to a next state $T_P(I)$ \citep{Inoue11,Inoue12}. 

The learning setting now is as follows. The positive examples are given by pairs of interpretations $(I,J)$ and the task is to find a program $P$ such that $T_P(I) = J$ holds for each such pair. Notice that by \prettyref{eq:T_P(I)=PI}, this is equivalent to finding a program $P$ satisfying $PI = J$ via sequential composition of $P$ with $I$.

We are thus interested in the transition behavior of programs with respect to the composition with interpretations and in this paper, we have derived numerous results in that direction. In fact, every result showing identities (modulo subsumption equivalence) between programs can be instantiated with interpretations to yield results concerning the behavior of programs with respect to interpretations. For example, \prettyref{t:(P_sqcup_R)Q} shows
\begin{align*} 
	(P\sqcup R)I = PI\sqcup RI \stackrel{\ref{eq:I_sqcup_J}}= PI\cap RI,
\end{align*} which means that if we have programs $P$ and $R$, then $P\sqcup R$ is a solution to the state transition $(I,PI\cap RI)$. Here the benefit of representing the solution program as the body-union of its two factors $P$ and $R$ is that it provides a better understanding of the inner workings of the program.






\subsection{Analogical logic programming}\label{sec:ALP}

In analogical logic programming --- recently introduced in \cite{Antic23-23} --- the task is as follows. 

Given a proportional equation of the form $P:Q::R:X$ --- read as ``$P$ is to $Q$ what $R$ is to $X$'' --- where $P,Q,R$ are programs and $X$ is a program variable, we wish to find solutions $S_1,S_2,\ldots$ such that $P:Q::R:S_i$ for all $i$, in which case we say that $P,Q,R,S_i$ are in an analogical proportion.\footnote{Analogical proportions between logic programs can be seen as analogical proportions between formulas \citep{Herzig24}.} That is, we generate novel programs from ``known'' ones via analogical transfer instead of starting with a partial specification as in ILP \prettyref{sec:ILP}. 

Analogical proportions formalize the idea that analogy-making is the task of transforming different objects from the source to the target domain in ``the same way'' (which is why ``copycat'' is the name of a prominent model of analogy-making \citep{Hofstadter95a}; and see \cite{Correa12}).

At the formal level, we instantiate a recently introduced abstract algebraic framework of analogical proportions \citep{Antic22} in our setting of logic programming, where we will restrict ourselves to the functorial fragment \citep{Antic23-23}:\footnote{The abstract framework has been studied in 2-valued boolean \citep{Antic21-3} and monounary algebras \citep{Antic22-2}, where analogical proportions admit particularly transparent structural descriptions.} 

\begin{definition}[Logic Program Proportions]\label{d:PQRS} Let $P,Q,R,S$ be programs. We say that \emph{$P$ transforms into $Q$ as $R$ transforms into $S$} --- in symbols, $P\to Q :: R\to S$ --- if there is a schema $\textbf{F}(X)$ such that $Q = \textbf{F}(P)$ and $S = \textbf{F}(R)$, in which case we call $\textbf{F}$ a \emph{justification} of $P\to Q :: R\to S$. We call an expression of the form $P\to Q :: R\to S$ an \emph{arrow proportion}. 

We now say that \emph{$P$ is to $Q$ as $R$ is to $S$} --- in symbols, $P:Q::R:S$ --- if $P\to Q :: R\to S$ and $Q\to P :: S\to R$, that is, if there are schemes $\textbf{F}$ and $\textbf{G}$ such that $Q = \textbf{F}(P)$, $S = \textbf{F}(R)$, $P = \textbf{G}(Q)$, and $R = \textbf{G}(S)$, in which case we call the pair $(\textbf{F},\textbf{G})$ a \emph{justification} of $P:Q::R:S$. We call an expression of the form $P:Q::R:S$ a \emph{(logic program) proportion}.
\end{definition}

Since \prettyref{d:PQRS} uses program schemes, the underlying algebra of programs is essential and, importantly, the set-like operations introduced in this paper allow us to derive (plausible) proportions which would otherwise not be expressible.

We shall now illustrate logic program proportions with some simple examples:

\begin{example} We have the proportion
\begin{align*} 
	\{a\} : \{a,b\} :: \{c\} : \{c,b\}
\end{align*} between interpretations as justified by the schemes
\begin{align*} 
	\textbf{F}(X) := X\cup \{b\} \quad\text{and}\quad \textbf{G}(X) := X- \{b\}.
\end{align*}
\end{example}

\begin{example} The proportion
\begin{align*} 
	\{a\} : \{a,b\} :: \{c\} : \{c\}
\end{align*} is justified by
\begin{align*} 
	\textbf{F}(X) := \left\{
	\begin{array}{l}
		a \leftarrow a\\
		b \leftarrow a\\
		c \leftarrow c
	\end{array}
	\right\}X \quad\text{and}\quad \textbf{G}(X) := \left\{
	\begin{array}{l}
		a \leftarrow a\\
		a \leftarrow b\\
		c \leftarrow c
	\end{array}
	\right\}X.
\end{align*}
\end{example}

\begin{proposition} For any programs $P$ and $R$,
\begin{align} 
	\label{eq:P_P^coyright_R_R^copyright}P : P^\copyright :: R : R^\copyright.
\end{align} 
\end{proposition}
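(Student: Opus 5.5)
The plan is to exhibit an explicit justification $(\textbf{F},\textbf{G})$ in the sense of \prettyref{d:PQRS}, using a single unary schema built from the body-complement operation. I take
\begin{align*}
	\textbf{F}(X) := X^\copyright \quad\text{and}\quad \textbf{G}(X) := X^\copyright.
\end{align*}
This schema contains no program constants, so it can be instantiated uniformly with any program. Body-complement is one of the set-like operations of the algebra, and a schema is a term over that algebra, so $\textbf{F}$ and $\textbf{G}$ are legitimate schemes.

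I then check the four required identities in order.
\begin{itemize}
	\item $\textbf{F}(P)=P^\copyright$ and $\textbf{F}(R)=R^\copyright$ hold by definition.
	\item $\textbf{G}(P^\copyright)=(P^\copyright)^\copyright=P$ and $\textbf{G}(R^\copyright)=R$ hold by the involution law \prettyref{eq:P^copyright^copyright}.
\end{itemize}
Together these give both arrow proportions, $P\to P^\copyright :: R\to R^\copyright$ and $P^\copyright\to P :: R^\copyright\to R$, and hence \prettyref{eq:P_P^coyright_R_R^copyright}.

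There is no real obstacle here. The only point that needs care is confirming that the involution $(P^\copyright)^\copyright=P$ holds for every program, including ones with several rules per head. It does, because $B\mapsto A-B$ is a bijection on subsets of $A$, applied rule-wise with heads preserved, so no rules merge or split. Involutivity is what lets the same schema serve as both $\textbf{F}$ and $\textbf{G}$.
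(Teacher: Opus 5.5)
Your proposal is correct and follows the paper's proof, which also rests solely on the involution law \prettyref{eq:P^copyright^copyright}, $(P^\copyright)^\copyright = P$. Your proof makes the justification explicit by taking the body-complement schema $X^\copyright$ for both $\textbf{F}$ and $\textbf{G}$, which is exactly what the paper leaves implicit.
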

\begin{proof} An immediate consequence of \prettyref{eq:P^copyright^copyright}.
\end{proof}

\begin{example} Let $A := \{a,b,c\}$. As a consequence of \prettyref{eq:P_P^coyright_R_R^copyright} we have, for example,
\begin{align} \label{eq:abc}
	\left\{
	\begin{array}{l}
		a \leftarrow b\\
		b \leftarrow a
	\end{array}
	\right\} : \left\{
	\begin{array}{l}
		a \leftarrow a,c\\
		b \leftarrow b,c
	\end{array}
	\right\} :: \left\{
	\begin{array}{l}
		a \leftarrow a\\
		b \leftarrow b\\
		c \leftarrow c
	\end{array}
	\right\} : \left\{
	\begin{array}{l}
		a \leftarrow b,c\\
		b \leftarrow a,c\\
		c \leftarrow a,b
	\end{array}
	\right\}
\end{align} which can be written more concisely as follows. 

Every permutation $\pi$ of $A$ gives rise to a \emph{permutation program} over $A$ --- again denoted by $\pi$ --- via
\begin{align} 
	\label{eq:pi} \pi := \{\pi(a) \leftarrow a \mid a\in A\}.
\end{align} We will use the well-known cycle notation.\footnote{\url{https://en.wikipedia.org/wiki/Permutation\#Cycle_notation}} For example, the permutation $\pi_{(a\,b\,c)}$, which is defined as the mapping $a\mapsto b$, $b\mapsto c$, $c\mapsto a$, gives rise to the permutation program
\begin{align*} 
	\pi_{(a\,b\,c)} = \left\{
	\begin{array}{l}
		b \leftarrow a\\
		c \leftarrow b\\
		a \leftarrow c
	\end{array}
	\right\}.
\end{align*} Notice that every permutation program is minimalist and Krom.

Moreover, define the \emph{diagonal program} over $A$ as the minimalist program
\begin{align} 
	\label{eq:D} \Delta_A := \{a \leftarrow (A- \{a\}) \mid a\in A\}.
\end{align} For example, we have
\begin{align*} 
	\Delta_{\{a\}} = \{a\},\quad \Delta_{\{a,b\}} = \pi_{(a\,b)},\quad \Delta_{\{a,b,c\}} = \left\{
	\begin{array}{l}
		a \leftarrow \quad b,c\\
		b \leftarrow a,\quad c\\
		c \leftarrow a,b
	\end{array}
	\right\}.
\end{align*}

We can now rewrite \prettyref{eq:abc} more compactly as
\begin{align*} 
	\pi_{(a\,b)} : \pi_{(a\,b)}^\copyright :: 1_A : \Delta_A,
\end{align*} where $\pi_{(a\,b)}$ is a permutation program, $1_A$ is a unit program defined in \prettyref{eq:1}, and $\Delta_A$ is a diagonal program. 
\end{example}

\begin{proposition}\label{p:P_P_sqcup_Q_} Let $P,Q,R$ be programs such that $Q$ is minimalist and
\begin{enumerate}[label=(\roman*)]
	\item\label{ite:B_cap_C} $B\cap C = \emptyset$, for all $a \leftarrow B\in P$ and $a \leftarrow C\in Q$;
	\item\label{ite:D_cap_C} $D\cap C = \emptyset$, for all $a \leftarrow D\in R$ and $a \leftarrow C\in Q$;
	\item\label{ite:h_P_cup_h_R} $h(P)\cup h(R)\subseteq h(Q)$.
\end{enumerate} Then
\begin{align*} 
	P : P\sqcup Q :: R : R\sqcup Q.
\end{align*}
\end{proposition}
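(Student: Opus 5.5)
The plan is to exhibit an explicit justification $(\textbf{F},\textbf{G})$ in the sense of \prettyref{d:PQRS}. The forward scheme is forced: $\textbf{F}(X) := X\sqcup Q$, which gives $\textbf{F}(P) = P\sqcup Q$ and $\textbf{F}(R) = R\sqcup Q$ by definition. The real work is the backward scheme $\textbf{G}$, which must recover $P$ from $P\sqcup Q$ and $R$ from $R\sqcup Q$. By analogy with $(X\cup Y)-Y = X$ for disjoint $X,Y$, I would take $\textbf{G}(X) := X\boxminus Q$. This is a legitimate schema, since $Q$ is a program constant and $\boxminus$ is one of the operations of the algebra.

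To verify $(P\sqcup Q)\boxminus Q = P$, I would compute rule-wise, using the minimality of $Q$: for every $a\in h(Q)$ there is a unique rule $a\leftarrow C_a\in Q$. Unfolding the definitions gives
\begin{align*}
	(P\sqcup Q)\boxminus Q = \{a\leftarrow ((B\cup C_a)-C_a)\mid a\leftarrow B\in P,\; a\in h(Q)\}.
\end{align*}
Because $Q$ is minimalist, the two rules with head $a$ taken from $Q$ (one inside $P\sqcup Q$, one as the subtrahend) are the same rule $a\leftarrow C_a$; this is exactly where minimality is needed. Condition \ref{ite:B_cap_C} then gives $(B\cup C_a)-C_a = B$, and condition \ref{ite:h_P_cup_h_R} ensures $a\in h(Q)$ for every head $a$ of $P$, so no rule of $P$ is lost. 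Together these yield equality with $P$.

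The same computation, using \ref{ite:D_cap_C} and the part $h(R)\subseteq h(Q)$ of \ref{ite:h_P_cup_h_R}, gives $(R\sqcup Q)\boxminus Q = R$. Hence $P\to P\sqcup Q :: R\to R\sqcup Q$ via $\textbf{F}$, and $P\sqcup Q\to P :: R\sqcup Q\to R$ via $\textbf{G}$, which establishes the proportion.

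The main obstacle is the middle step. In general $\boxminus$ and $\sqcup$ pair arbitrary rules with equal heads, and for a non-minimalist $Q$ the subtrahend rule could differ from the one that was added, which would produce spurious rules. I would isolate this as a one-line observation: for minimalist $Q$, both $X\sqcup Q$ and $X\boxminus Q$ act on each rule $a\leftarrow B$ separately, mapping it to $a\leftarrow(B\cup C_a)$ and $a\leftarrow(B-C_a)$ respectively, and delete the rule exactly when $a\notin h(Q)$. The claim then follows rule by rule.
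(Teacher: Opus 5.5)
Your proposal is correct and matches the paper's proof essentially step for step: the paper uses the same justification $\textbf{F}(X) := X\sqcup Q$, $\textbf{G}(X) := X\boxminus Q$. It verifies $(P\sqcup Q)\boxminus Q = P$ by the same rule-wise computation, using the unique rule $a\leftarrow C_a\in Q$ (minimality), condition (iii) so that every head of $P$ survives, and condition (i) to get $(B\cup C_a)-C_a = B$.
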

\begin{proof} Since $Q$ is minimalist, for every $a\in h(Q)$ there is a unique rule $a \leftarrow C_a\in Q$. We show $(P\sqcup Q)\boxminus Q = P$; the identity $(R\sqcup Q)\boxminus Q = R$ is proved analogously using \ref{ite:D_cap_C} instead of \ref{ite:B_cap_C}. By \ref{ite:h_P_cup_h_R}, every rule head of $P$ occurs in $Q$, which by minimalism of $Q$ gives
\begin{align*} 
	P\sqcup Q = \{a \leftarrow (B\cup C_a) \mid a \leftarrow B\in P\}
\end{align*} and therefore, again by minimalism of $Q$,
\begin{align*} 
	(P\sqcup Q)\boxminus Q = \{a \leftarrow ((B\cup C_a) - C_a) \mid a \leftarrow B\in P\} = \{a \leftarrow (B - C_a) \mid a \leftarrow B\in P\} = P,
\end{align*} where the last identity holds by \ref{ite:B_cap_C}. Consequently, the schemes
\begin{align*} 
	\textbf{F}(X) := X\sqcup Q \quad\text{and}\quad \textbf{G}(X) := X\boxminus Q
\end{align*} satisfy $\textbf{F}(P) = P\sqcup Q$, $\textbf{F}(R) = R\sqcup Q$, $\textbf{G}(P\sqcup Q) = P$, and $\textbf{G}(R\sqcup Q) = R$, which by \prettyref{d:PQRS} means that $(\textbf{F},\textbf{G})$ is a justification of the stated proportion.
\end{proof}

\begin{example} As a consequence of \prettyref{p:P_P_sqcup_Q_}, we have
\begin{align*} 
	\{a\} : \{a \leftarrow c\} :: \{b\} : \{b \leftarrow c\},
\end{align*} justified by
\begin{align*} 
	\textbf{F}(X) := X\sqcup \left\{
	\begin{array}{l}
		a \leftarrow c\\
		b \leftarrow c
	\end{array}
	\right\} \quad\text{and}\quad \textbf{G}(X) := X\boxminus \left\{
	\begin{array}{l}
		a \leftarrow c\\
		b \leftarrow c
	\end{array}
	\right\}.
\end{align*} 
\end{example}






\section{Conclusion}

In this paper, we introduced a family of algebraic operations on propositional Horn logic programs that generalize classical set-theoretic constructions, including union, intersection, complement, subtraction, symmetric difference, and power set (see \prettyref{sec:Body-union}--\prettyref{sec:Body-power-set}). While inspired by set theory, these operations exhibit fundamentally different behavior due to the rule-based structure of logic programs, in particular through non-idempotence and non-trivial interactions between rules sharing the same head.

These operations provide a structured means of decomposing and recomposing logic programs, which allowed us to establish the Decomposition Theorems \ref{t:MDT} and \ref{t:DT}. As illustrated in \prettyref{sec:App}, they offer a basis for approaching problems in automatic logic programming from an algebraic perspective, including program construction, transformation, and analysis.

A natural direction for future work is the extension of this framework beyond the propositional Horn setting. In particular, lifting these operations to first-order and higher-order logic programs \citep{Apt90,Chen93,Lloyd87,Miller12}, as well as to answer set programs with negation as failure \citep{Gelfond91,Marek99,Lifschitz19,Brewka11}, raises substantial technical challenges. For instance, the definition of a body-complement operation in the first-order setting is non-trivial, as rule bodies are required to be finite, preventing a direct generalization of the propositional case. Developing suitable generalizations that preserve the algebraic structure identified in this paper remains an open problem.

More broadly, the results highlight an algebraic structure of the class of logic programs considered here: minimalist programs can be decomposed into and constructed from Krom components while preserving their least-model semantics, whereas arbitrary programs admit corresponding syntactic and semantic approximations. In this respect, the results complement earlier algebraic approaches to logic programming and may support the further development of compositional methods for program synthesis and transformation.

\section*{Acknowledgments}

I would like to thank the reviewers for their thoughtful and constructive comments which improved the quality of the article.

\section*{Use of artificial intelligence tools}

OpenAI Codex (GPT-5, accessed on 4 August 2026, \url{https://openai.com/codex/}) was used to analyse an earlier proof of the Minimalist Decomposition \prettyref{t:MDT}, identify a gap in that argument, and assist in drafting a corrected proof.

Anthropic's Claude (Opus 5, accessed on 7 August 2026, \url{https://claude.ai}) was subsequently used to check the manuscript for mathematical and expository errors. It assisted in testing algebraic identities, identifying counterexamples, missing hypotheses, and an omitted empty-body case, and in drafting and verifying corrected statements, proofs, and examples. This included the results on distributivity and subsumption equivalence in \prettyref{t:(P_sqcup_R)Q}, \prettyref{t:sqcup_sqcap}, and \prettyref{l:sqcup_congruence}, as well as the Decomposition Theorems \ref{t:MDT} and \ref{t:DT}.

Both systems also assisted in revising the abstract and the introduction. The author reviewed and independently verified all resulting statements and arguments and takes full responsibility for the correctness and content of the manuscript.

\section*{Competing interests}

The author declares none.

\section*{Data availability statement}

The manuscript has no data associated.

\bibliographystyle{tlplike}
\bibliography{/Users/christianantic/Bibdesk/Bibliography,/Users/christianantic/Bibdesk/Publications_J,/Users/christianantic/Bibdesk/Publications_C,/Users/christianantic/Bibdesk/Publications_arXiv,/Users/christianantic/Bibdesk/Submitted}
\if\isdraft1

\newpage
\section*{TODOs}

\todo[inline]{\cite{Antic26-1} noch hinzunehmen}

\fi
\end{document}